\pdfoutput=1
\documentclass[onecolumn,12pt]{IEEEtran}

\usepackage{amsmath,amssymb,amsthm,mathtools,dsfont}
\usepackage{graphicx}
\usepackage{booktabs}
\usepackage[colorlinks=true,allcolors=blue]{hyperref}
\usepackage[capitalise]{cleveref}

\theoremstyle{plain}
\newtheorem{theorem}{Theorem}
\newtheorem{lemma}[theorem]{Lemma}
\newtheorem{corollary}[theorem]{Corollary}
\newtheorem{proposition}[theorem]{Proposition}
\theoremstyle{definition}
\newtheorem{definition}[theorem]{Definition}
\newtheorem{remark}[theorem]{Remark}

\DeclareMathOperator*{\esssup}{ess\,sup}

\newcommand{\cC}{{\mathcal C}}
\newcommand{\cD}{{\mathcal D}}\newcommand{\cE}{{\mathcal E}}

\newcommand{\cJ}{{\mathcal J}}
\newcommand{\cP}{{\mathcal P}}

\newcommand{\cU}{{\mathcal U}}\newcommand{\cV}{{\mathcal V}}
\newcommand{\cX}{{\mathcal X}}\newcommand{\cY}{{\mathcal Y}}

\newcommand{\mR}{{\mathbb R}}

\newcommand{\BRA}[1]{\left( #1 \right)}
\newcommand{\BRAb}[1]{\left[ #1 \right]}

\newcommand{\BRAs}[1]{\left\{ #1 \right\}}

\newcommand{\PR}[1]{\mathbb{P}\left\{ #1 \right\}}
\newcommand{\PRs}[2]{\mathbb{P}_{#1}\left\{ #2 \right\}}
\newcommand{\E}[1]{\mathbb{E}\left( #1 \right)}
\newcommand{\Es}[2]{\mathbb{E}_{#1}\left( #2 \right)}

\newcommand{\Unif}[1]{\cU\left( #1 \right)}
\newcommand{\uU}{\Unif{[0,1]}}
\newcommand{\Ind}[1]{\mathds{1}_{\BRAs{#1}}}

\newcommand{\W}{W_{Y|X}}

\newcommand{\BETA}[3]{\beta_{#1}\!\Big( #2\; ; \; #3\Big)}
\newcommand{\PEP}[2]{p_e\BRA{#1 \mid #2}}
\newcommand{\PEPs}[1]{p_e\BRA{#1}}
\newcommand{\PEPU}[3]{p_e\BRA{#1 \mid #2; #3}}
\newcommand{\PEC}[2]{p_c\BRA{#1 \mid #2}}
\newcommand{\PECU}[3]{p_c\BRA{#1 \mid #2; #3}}
\newcommand{\Fspec}[2]{F\!\BRA{#1\,;\,#2}}
\newcommand{\Dspec}[2]{D_s\!\BRA{#1\,,\,#2}}
\newcommand{\Fspecjscc}[1]{F_V\!\BRA{#1}}

\newcommand{\RCUp}{\mathrm{RCU}^{+}}
\newcommand{\verdu}{Verd\'{u}}

\newcommand{\etal}{\textit{et al.}}

\begin{document}

\title{One-Shot Information Theory via the Pairwise Error Probability:\\
Lossy, Joint Source--Channel, Erasure, and Multiuser Coding}

\author{Nir~Elkayam~and~Meir~Feder%
\thanks{The authors are with the Department of Electrical Engineering--Systems,
Tel Aviv University, Tel Aviv, Israel (e-mail: nir.elkayam@gmail.com; meir@eng.tau.ac.il).}%
\thanks{This paper is a companion to~\cite{elkayampep1}, which develops the
pairwise-error-probability framework for point-to-point channel coding.}}

\maketitle

\begin{abstract}
This paper extends a one-shot (finite-blocklength)
information-theoretic framework built on a single primitive: the \emph{pairwise error probability}
(PEP) of a randomized, dither-broken decoding rule, and the \emph{error spectrum}
it induces. A companion paper developed the framework for
point-to-point channel coding --- uniformity of the PEP, the error-spectrum
representation of achievability and converse, and the linear-programming form of
the prior-optimized minimax meta-converse. Here we show that the same primitive,
read on an enlarged candidate space, governs four further settings: lossy source
coding under average distortion, joint source--channel coding with list decoding,
channel coding with an erasure/undetected-error option, and the two-user
multiple-access channel. In each case a single spectrum yields a random-coding
achievability bound and exact fixed-code identities, and we indicate how the
convex --- indeed linear-programming --- prior optimization of the
channel-coding case extends under matched decoding. The development recovers the one-shot lossy bound of
Matsuta--Uyematsu and the joint source--channel bounds in the
Csisz\'ar tradition, complements the lossy bounds of Kostina--\verdu{}, and connects the multiuser case to the comparable
achievability/converse pair through three pairwise error events.
\end{abstract}

\begin{IEEEkeywords}
One-shot information theory, finite blocklength, pairwise error probability,
rate--distortion, joint source--channel coding, erasure decoding, multiple-access
channel, meta-converse, prior optimization.
\end{IEEEkeywords}

\section{Introduction}\label{sec:intro}

A companion paper~\cite{elkayampep1} built one-shot channel coding on a single
primitive: the dithered \emph{pairwise error probability} (PEP) and the
\emph{error spectrum} it induces. It showed that one functional, read through
different kernels, delivers the random-coding achievability bound, the exact
fixed-code converse, and (under matched decoding) the prior-optimized minimax
meta-converse as a polynomial-size linear program. The mechanism behind every
such kernel is the lift: for a random codebook the competitors are i.i.d., so
conditional on the transmitted pair the codebook error probability is exactly
$1-(1-p_e)^{M-1}$ --- a deterministic function of the pairwise quantity --- and
every bound is this lift or a variant, that is, a kernel applied to the PEP.

The construction never used
that the candidate set was a channel codebook: the PEP is defined for an
arbitrary candidate space and an arbitrary decoding metric. This paper exploits
that latitude. We instantiate the same primitive on four enlarged candidate
spaces and obtain, in each, the same spectrum-governed development (for the
erasure option, achievability and a correct-decoding converse); in the
single-terminal settings the convex prior optimization of~\cite{elkayampep1}
extends as well, and we indicate how, without redeveloping it:
\begin{itemize}
\item \textbf{Lossy source coding} (\cref{sec:rd}): the lift runs in the
\emph{value} domain --- the pairwise \emph{correct} probability is the rank
transform of the distortion --- and the same construction yields exact
random-coding distortion and an exact fixed-code converse, recovering the
one-shot bound of Matsuta--Uyematsu~\cite{matsuta2015non} and complementing
the $d$-tilted-information bound of Kostina--\verdu~\cite{kostina2012fixed}.
\item \textbf{Joint source--channel coding with list decoding}
(\cref{sec:jscc}): enlarging the candidate space to source--codeword pairs gives
a joint spectrum in which the list size enters as a rate offset,
with fixed-code exactness extending V\'azquez-Vilar's matched-code
identity~\cite{vazquez2016bayesian}.
\item \textbf{Erasure/undetected-error decoding} (\cref{sec:erasure}): a
two-threshold decision region over the PEP produces achievability bounds for
the erasure-option problem, and the high-rate correct-decoding regime with its
Arimoto-type strong converse falls out of the complementary spectrum.
\item \textbf{The two-user multiple-access channel} (\cref{sec:network}): three
pairwise error events (joint, message-1, message-2) give a sum-spectrum
achievability bound and a max-spectrum converse, preserving the comparable
achievability/converse structure into the multiuser setting.
\end{itemize}
Throughout, the framework results of~\cite{elkayampep1} are used as black boxes; the next
section recaps the few we need, so the present paper is readable on its own. The
emphasis here is on the \emph{reach} of the primitive --- that one construction,
unchanged, covers source, joint, erasure, and multiuser coding --- rather than on
new converse machinery.

\section{Recap of the PEP Framework}\label{sec:pep}

We recall, without proof, the results of~\cite{elkayampep1} used below. The
guiding observation is that only the \emph{relative order} of metric scores
matters: any monotone transformation of the metric gives the same decoder, and
the PEP is the canonical representative of this order-equivalence class ---
uniquely, its value is intrinsic, being the probability of being outranked. At
fixed $y$ it is uniform on $[0,1]$; its law under the channel-joint
input--output law --- the error spectrum --- is the non-trivial object.

Fix finite alphabets $\cX,\cY$, a prior $Q_X\in\cP(\cX)$, and a measurable metric
$m:\cX\times\cY\to[0,\infty]$. Attaching an i.i.d.\ dither $U\sim\uU$ to each
candidate and comparing score--dither pairs $(m(x,y),u)$ in the weak
lexicographic order (a higher score wins; at equal scores, a smaller dither
wins), the \emph{dithered PEP} of $(x,y,u)$ is
$\PEPU{x}{y}{u}\triangleq G_{x,y}+u\,H_{x,y}$, where $G_{x,y}=Q_X\{m(\cdot,y)>m(x,y)\}$
and $H_{x,y}=Q_X\{m(\cdot,y)=m(x,y)\}$, and $\PEP{x}{y}\triangleq\PEPU{x}{y}{U}$.

\begin{proposition}[PEP properties]\label{thm:pep-uniform}
Let $(X,U)\sim Q_X\times\uU$ and fix $y\in\cY$. Then:
\begin{enumerate}
\item[(i)]\emph{(Lexicographic equivalence.)} For every fixed $(x,y,u)$,
\[
\PRs{X,U}{(m(X,y),U)\succ(m(x,y),u)}=\PRs{X,U}{\PEP{X}{y}<\PEPU{x}{y}{u}}=\PEPU{x}{y}{u}.
\]
\item[(ii)]\emph{(Order preservation.)} If $m(x_1,y)>m(x_2,y)$ then
$\PEPU{x_1}{y}{u_1}\le\PEPU{x_2}{y}{u_2}$ for all $u_1,u_2\in[0,1]$.
\item[(iii)]\emph{(Uniformity.)} $\PEP{X}{y}\sim\uU$; this lifts to
$\PEP{X}{Y}\sim\uU$ under the product law $(X,Y)\sim Q_X\times Q_Y$ for any
output measure $Q_Y$.
\item[(iv)]\emph{(Tail--quantile inverse map.)} There is a measurable map
$\tilde m(\cdot,y)$, non-increasing in its first argument, with
$\tilde m(\PEP{X}{y},y)=m(X,y)$ a.s.
\end{enumerate}
\end{proposition}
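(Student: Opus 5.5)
The whole proof runs through a level-set decomposition at fixed $y$. Since $\cX$ is finite, the values of $m(\cdot,y)$ over $\mathrm{supp}(Q_X)$ form a finite set $t_1>t_2>\dots>t_k$. For $x$ with $m(x,y)=t_j$ we have
\[
G_{x,y}=\sum_{i<j}Q_X\{m(\cdot,y)=t_i\}=:g_j,\qquad H_{x,y}=Q_X\{m(\cdot,y)=t_j\}=:h_j .
\]
So $\PEPU{x}{y}{u}=g_j+uh_j$ sweeps the interval $I_j=[g_j,g_j+h_j]$ as $u$ ranges over $[0,1]$. The intervals $I_1,\dots,I_k$ tile $[0,1]$ in order, overlapping only at endpoints, because $g_{j+1}=g_j+h_j$ and $\sum_j h_j=1$. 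Every item is a consequence of this picture.

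\textbf{Step (ii).} If $m(x_1,y)=t_i>t_j=m(x_2,y)$, then $i<j$. Hence
\[
\PEPU{x_1}{y}{u_1}\le g_i+h_i=g_{i+1}\le g_j\le\PEPU{x_2}{y}{u_2}.
\]

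\textbf{Step (iii).} Conditional on $m(X,y)=t_j$, which has probability $h_j$, the independent dither makes $\PEP{X}{y}=g_j+Uh_j$ uniform on $I_j$. Mixing these conditional laws with weights $h_j$ gives a density equal to $1$ on $\bigcup_j I_j=[0,1]$, so $\PEP{X}{y}\sim\uU$. Level sets with $h_j=0$ are $Q_X$-null and can be ignored. For the lift to $(X,Y)\sim Q_X\times Q_Y$: conditional on $Y=y$, the law of $X$ is still $Q_X$, so the conditional law is uniform for every $y$, and integrating over $Q_Y$ preserves uniformity. Measurability in $y$ is trivial because $\cY$ is finite.

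\textbf{Step (i).} The first equality is an event identity that holds almost surely.
\begin{itemize}
\item If $m(X,y)>m(x,y)$, then the interval of $X$ lies weakly to the left of that of $x$. The only possible tie is at a shared endpoint, which requires $U=1$ or $u=0$.
\item If the scores are equal, the two PEPs are $g_j+Uh_j$ and $g_j+uh_j$ with $h_j>0$ on the support, so "$U<u$" is exactly "$\PEP{X}{y}<\PEPU{x}{y}{u}$".
\item If $m(X,y)<m(x,y)$, neither event occurs, up to the same endpoint ambiguity.
\end{itemize}
The discrepancy set has probability zero (the event $U=1$ is null), except possibly when $u=0$ or $h_j=0$ at $x$. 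That case must be checked separately: a point $x$ outside $\mathrm{supp}(Q_X)$ has $H_{x,y}$ possibly $0$, but then both sides equal $G_{x,y}$ directly.

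The second equality then follows from (iii): $\PR{V<p}=p$ for $V\sim\uU$. Alternatively it can be computed directly as
\[
\PR{m(X,y)>m(x,y)}+\PR{m(X,y)=m(x,y)}\,\PR{U<u}=G_{x,y}+uH_{x,y}.
\]

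\textbf{Step (iv).} Define $\tilde m(p,y)=t_j$ for $p$ in the interior of $I_j$. At the shared endpoint $g_{j+1}$, take the larger value $t_j$; set $\tilde m(0,y)=t_1$ and $\tilde m(1,y)=t_k$. This is a step function, hence measurable, and it is non-increasing in $p$. Whenever $U\in(0,1)$, which holds a.s., and $X$ is in the support, $\PEP{X}{y}$ lies in the interior of $I_{j(X)}$, so $\tilde m(\PEP{X}{y},y)=m(X,y)$ a.s. Equivalently, $\tilde m(p,y)=\inf\{t: Q_X\{m(\cdot,y)>t\}\le p\}$, the generalized tail-quantile, which also covers the case of infinite metric values.

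\textbf{Main obstacle.} The only delicate point is the null-set and boundary bookkeeping in (i) and (iv). This covers shared interval endpoints, level sets of zero $Q_X$-mass, and fixed arguments $x$ outside the support. I would handle all of these uniformly by working with the tiling $\{I_j\}$ and noting that every discrepancy requires $U\in\{0,1\}$. The rest is direct computation.
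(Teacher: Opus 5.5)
Your proof is correct, and it takes a more elementary, self-contained route than the paper. The paper proves only (iv) here and defers (i)--(iii) to the companion paper. For (iv) it uses the randomized probability-integral transform (Lemma~\ref{app:lem-rpit}): with $T=m(X,y)$ it writes $1-\PEPU{X}{y}{U}=F_T(T^-)+(1-U)P_T(T)$ and applies the RPIT inversion with the reversed dither $1-U$. It then sets $\tilde m(w,y)=F_T^{\langle-1\rangle}(1-w)=\inf\{t:Q_X\{m(\cdot,y)>t\}\le w\}$, which is exactly the tail-quantile you mention at the end of your Step (iv).

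Your tiling $\{I_j\}$ is the finite-alphabet specialization of that picture, since $I_j=[1-F_T(t_j),\,1-F_T(t_j^-)]$. Using finiteness lets you read all four items off one explicit picture. It also replaces the paper's null-set argument about flat pieces of $F_T$ with the observation that $U\in(0,1)$ a.s. puts $\PEP{X}{y}$ in the interior of its interval. What the paper's route buys is generality. The RPIT inversion holds for an arbitrary law of $m(X,y)$, including non-atomic parts, and needs no decreasing enumeration $t_1>t_2>\cdots$ of score values. It also handles joint measurability in $(w,y)$ without relying on $\cY$ being finite.

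Two small bookkeeping points. First, your Step (ii) assumes both scores are supported levels $t_i,t_j$, whereas the statement is for all $x_1,x_2$. The general case is the one-line chain $\PEPU{x_1}{y}{u_1}\le Q_X\{m(\cdot,y)\ge m(x_1,y)\}\le Q_X\{m(\cdot,y)>m(x_2,y)\}\le\PEPU{x_2}{y}{u_2}$. Second, your step function and the tail-quantile are not literally ``equivalent.'' The tail-quantile takes the smaller value $t_{j+1}$ at each shared endpoint $g_{j+1}$, and a different value at $p=1$. They agree off finitely many points, which $\PEP{X}{y}$ hits with probability zero, so either choice proves (iv).
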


Parts (i)--(iii) are proved in~\cite{elkayampep1}. Part (iv), which
\cite{elkayampep1} does not use, is proved in Appendix~\ref{app:rank-recall} from the
randomized probability-integral transform (Lemma~\ref{app:lem-rpit}).

The \emph{error spectrum} is the CDF of the negative log-PEP under the
channel-induced joint law,
\begin{equation}\label{eq:def-spectrum}
\Fspec{Q_X}{z}\triangleq\PR{-\log\PEP{X}{Y}\le z},\qquad(X,Y)\sim Q_X\cdot\W,
\end{equation}
which (unlike the product-law PEP) is non-trivial and carries the dependence on
$Q_X$ and the channel.

\begin{proposition}[Atomlessness~{\cite{elkayampep1}}]\label{lem:atomless}
Let $Q_X^{\mathrm{pep}}$ define the PEP and let $Q_X^{\mathrm{prob}}\ll
Q_X^{\mathrm{pep}}$; assume moreover $Q_X^{\mathrm{prob}}\cdot\W\ll
Q_X^{\mathrm{prob}}\times P_Y$ for the induced output marginal $P_Y$ ---
automatic for discrete alphabets. Then under
$(X,Y)\sim Q_X^{\mathrm{prob}}\cdot\W$ the PEP has no atoms; consequently
$z\mapsto\Fspec{Q_X}{z}$ is continuous.
\end{proposition}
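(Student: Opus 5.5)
The plan is to show that for every fixed $t\in[0,1]$ the event $\{\PEP{X}{Y}=t\}$ has probability zero under $(X,Y)\sim Q_X^{\mathrm{prob}}\cdot\W$. Since the dither $U$ is independent of $(X,Y)$, this reduces to controlling, for each pair $(x,y)$, the $U$-probability that $G_{x,y}+U H_{x,y}=t$. We then average over $(x,y)$. Continuity of $z\mapsto\Fspec{Q_X}{z}$ follows at once: $-\log$ is a bijection from $(0,1]$ onto $[0,\infty)$, so an atom of the spectrum at $z$ would be an atom of the PEP at $t=e^{-z}$, and the value $t=0$ corresponds to $z=\infty$ and does not affect continuity on the real line.

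The first key step is a pointwise split. Fix $(x,y)$. If $H_{x,y}>0$, then $u\mapsto G_{x,y}+uH_{x,y}$ is strictly increasing, so $\mathbb{P}_U\{\PEPU{x}{y}{U}=t\}=0$ because $U$ is uniform. If $H_{x,y}=0$, the PEP is deterministic and equals $G_{x,y}$. Hence
\[
\PR{\PEP{X}{Y}=t}\le \mathbb{P}\big\{H_{X,Y}=0,\ G_{X,Y}=t\big\}.
\]
The second key step is to show that $H_{X,Y}>0$ almost surely under $Q_X^{\mathrm{prob}}\cdot\W$. Note that $H_{x,y}\ge Q_X^{\mathrm{pep}}(\{x\})$. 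In the discrete case, $Q_X^{\mathrm{prob}}\ll Q_X^{\mathrm{pep}}$ means every $x$ with $Q_X^{\mathrm{prob}}(x)>0$ has $Q_X^{\mathrm{pep}}(x)>0$, so $H_{X,Y}>0$ almost surely and the bound above is zero.

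For general alphabets I expect the main obstacle to be the atomless part of $Q_X^{\mathrm{pep}}$, where $H$ can vanish. Here I would use the absolute-continuity hypothesis and argue as follows. Fix $y$ and condition on the set $\{H_{x,y}=0,\ G_{x,y}=t\}$. On this set $m(x,y)$ takes a single score value $s$: any two scores in it yield the same $G$, so the $Q_X^{\mathrm{pep}}$-mass strictly between them is zero. This forces the set to lie inside the level set $\{m(\cdot,y)=s\}$, which has $Q_X^{\mathrm{pep}}$-measure $H=0$, apart from boundary cases that need a separate check. By $Q_X^{\mathrm{prob}}\ll Q_X^{\mathrm{pep}}$ this set is then $Q_X^{\mathrm{prob}}$-null for each $y$, that is, null under the product law $Q_X^{\mathrm{prob}}\times P_Y$. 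The hypothesis $Q_X^{\mathrm{prob}}\cdot\W\ll Q_X^{\mathrm{prob}}\times P_Y$ transfers this nullity to the joint law, which proves atomlessness.

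The delicate point in this last step is the measurability and "single score value" argument. Scores could accumulate only at endpoints of intervals of zero $Q_X^{\mathrm{pep}}$-mass. I would handle this by noting that the set of $x$ with $G_{x,y}=t$ is an interval in score order of zero $Q_X^{\mathrm{pep}}$-mass, so its closure adds at most one score, which is itself a level set of zero mass.
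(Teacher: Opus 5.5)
The paper does not prove this proposition; it imports it from \cite{elkayampep1}. So your argument has to stand on its own, and in substance it does. Your split is in fact an identity, $\PR{\PEP{X}{Y}=t}=\PR{H_{X,Y}=0,\ G_{X,Y}=t}$. In the finite-alphabet setting the paper actually works in, the bound $H_{x,y}\ge Q_X^{\mathrm{pep}}(\{x\})>0$ on the support of $Q_X^{\mathrm{prob}}$ closes the proof without using the second absolute-continuity hypothesis.

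Your general-alphabet paragraph, however, contains a false step: the set $\{x:H_{x,y}=0,\ G_{x,y}=t\}$ need not carry a single score value. Take $Q_X^{\mathrm{pep}}$ uniform on $[0,1]\cup[2,3]$ and $m(x,y)=x$; then every $x\in[1,2]$ has $H_{x,y}=0$ and $G_{x,y}=1/2$. What is true, and is all you need, is the following. Let $T=m(X',y)$ with $X'\sim Q_X^{\mathrm{pep}}$, and let $J$ be the set of scores carried by this set. Then $J$ is an interval of zero $T$-mass:
\begin{itemize}
\item for $s<r$ in $J$, equal $G$ gives $\PR{s<T\le r}=0$;
\item each point of $J$ carries no mass, because $H=0$;
\item $J$ is covered by countably many such $[s,r]$ together with its endpoints, which are massless if they lie in $J$.
\end{itemize}
The restriction $H=0$ matters here. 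The set $\{x:G_{x,y}=t\}$ by itself can contain the atom sitting at the bottom of that score interval, so the phrasing of your last paragraph also needs the restriction. With this repair, your chain of nullity goes through as you describe: per $y$ under $Q_X^{\mathrm{pep}}$, then under $Q_X^{\mathrm{prob}}$, then under $Q_X^{\mathrm{prob}}\times P_Y$, and finally under $Q_X^{\mathrm{prob}}\cdot\W$.

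The shape of the hypotheses points to a shorter route that avoids score-order analysis altogether. They give the chain
\[
(Q_X^{\mathrm{prob}}\cdot\W)\times\uU\ll(Q_X^{\mathrm{prob}}\times P_Y)\times\uU\ll(Q_X^{\mathrm{pep}}\times P_Y)\times\uU,
\]
and under the last law the PEP is exactly uniform by \cref{thm:pep-uniform}(iii). So each event $\{\PEPU{x}{y}{u}=t\}$ is null under the first law as well.

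This route also gives more. The law of the PEP under $Q_X^{\mathrm{prob}}\cdot\W$ is the image of a measure with density $\frac{dQ_X^{\mathrm{prob}}}{dQ_X^{\mathrm{pep}}}(x)\,\frac{d(Q_X^{\mathrm{prob}}\cdot\W)}{d(Q_X^{\mathrm{prob}}\times P_Y)}(x,y)$ relative to one whose image is uniform. Hence it is absolutely continuous with respect to Lebesgue measure, and the spectrum is absolutely continuous, not merely continuous.

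In the general case, your pointwise route re-derives by hand the piece of that uniformity it needs. Its merit is in the discrete case, where it is a self-contained one-liner resting only on $H_{x,y}\ge Q_X^{\mathrm{pep}}(\{x\})$.
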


\subsection{Channel coding}\label{sec:channel}

For $M=\lceil e^{R}\rceil$ codewords drawn i.i.d.\ from $Q_X$, the ensemble
error of the dithered maximum-metric decoder is exact~\cite{elkayampep1}:
conditional on the transmitted pair the competitors are i.i.d., so
\begin{equation}\label{eq:recap-exact-rc}
\bar P_e(R;Q_X)=\Es{X,Y,U}{1-\bigl(1-\PEPU{X}{Y}{U}\bigr)^{M-1}}.
\end{equation}
The refined-union ($\RCUp$) relaxation of \eqref{eq:recap-exact-rc} and the
exact fixed-code converse of~\cite{elkayampep1} read
\begin{equation}\label{eq:recap-rcu}
\bar P_e(R;Q_X)\le\Es{X,Y,U}{\min\BRA{1,e^{R}\PEPU{X}{Y}{U}}}
=e^{R}\!\int_R^\infty e^{-z}\Fspec{Q_X}{z}\,dz,
\end{equation}
\begin{equation}\label{eq:recap-converse}
P_e(\cC)=\PR{\PEP{X}{Y}\ge 1/M}=\Fspec{Q_X^\cC}{R}\ \ \text{(empirical prior $Q_X^\cC$; converse-natural $M=e^{R}$, \cref{rem:size-conv}),}
\end{equation}
so achievability and converse are kernel/threshold readings of the same spectrum.
The identity \eqref{eq:recap-converse} is that of the \emph{dithered}
maximum-metric decoder; in the matched case it is tie-rule-independent and
equals the optimal (MAP) error probability~\cite{elkayampep1}. Writing
$\tilde P_e(R;Q_X)$ for the right-hand side of \eqref{eq:recap-rcu} and
$E(R)\triangleq-\log\tilde P_e(R;Q_X)$, the slope identity of~\cite{elkayampep1} reads
\begin{equation}\label{eq:recap-slope}
\tilde P_e(R;Q_X)=\frac{\Fspec{Q_X}{R}}{1+\dot E(R)},\qquad -1<\dot E(R)\le0,
\end{equation}
at every $R$ with $\Fspec{Q_X}{R}>0$; the general list-size ($L$-ary) version is
Lemma~\ref{app:lem-integral-L} in Appendix~\ref{app:jscc}.

\begin{remark}[Codebook-size conventions]\label{rem:size-conv}
Three size--rate conventions appear in this paper, each where it is natural:
the default $M=\lceil e^{R}\rceil$ above; the \emph{converse-natural} $M=e^{R}$,
under which a size-$M$ code has rate exactly $R=\log M$; and the
\emph{achievability-natural} $M-1=e^{R}$, which counts competitors rather than
codewords and is used in the correct-decoding analysis of \cref{subsec:era-cd}.
The three differ by at most one codeword; later uses point back here rather
than re-deriving the bookkeeping.
\end{remark}

\subsection{Identities and prior optimization}\label{sec:prioropt}

Three identities make the spectrum computable. Writing
$\BETA{\alpha}{P}{Q}$ for the Neyman--Pearson functional and
$((Q_X\times P_Y)\cdot m)(x,y)=Q_X(x)P_Y(y)m(x,y)$ for the metric-tilted measure:

\begin{proposition}[Meta-converse identity~{\cite{elkayampep1}}]\label{thm:meta-converse}
With $Z_e=-\log\PEP{X}{Y}$, $Z_c=-\log\PEC{X}{Y}$ under $(X,Y)\sim Q_X\times P_Y$,
and assumptions \textup{(A1)} $Q_X\{m(\cdot,y)<\infty\}=1$, \textup{(A2)}
$\Es{Q_X\times P_Y}{m}<\infty$,
\begin{align}
\Es{Q_X\times P_Y}{m\,\Ind{Z_e\le R}}
&=\max_{Q_Y}\BETA{1-e^{-R}}{Q_X\times Q_Y}{(Q_X\times P_Y)\cdot m},\label{eq:meta-converse-e}\\
\Es{Q_X\times P_Y}{m\,\Ind{Z_c\ge R}}
&=\max_{Q_Y}\BETA{e^{-R}}{Q_X\times Q_Y}{(Q_X\times P_Y)\cdot m}.\label{eq:meta-converse-c}
\end{align}
In the matched case $\Fspec{Q_X}{R}=\max_{Q_Y}\BETA{1-e^{-R}}{Q_X\times Q_Y}{Q_X\W}$.
\end{proposition}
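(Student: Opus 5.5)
We prove \eqref{eq:meta-converse-e} as a Neyman--Pearson optimality statement, treating the dithered PEP as a randomized test statistic; \eqref{eq:meta-converse-c} follows by the same argument with the roles of the two tails exchanged. Fix $R$ and consider testing $H_0: Q_X\times Q_Y$ against $H_1:(Q_X\times P_Y)\cdot m$ on $\cX\times\cY$, with dither as external randomization.

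\textbf{Lower bound (achievability of the value).} Take the randomized test $T(x,y,u)=\Ind{-\log\PEPU{x}{y}{u}\le R}=\Ind{\PEPU{x}{y}{u}\ge e^{-R}}$. We read $T=1$ as deciding $H_1$ is \emph{rejected}, so that $\beta$ is the $H_1$-mass where $T=1$. The $H_1$-mass of this region is exactly the left side $\Es{Q_X\times P_Y}{m\,\Ind{Z_e\le R}}$.

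For the $H_0$ side, fix $y$. By Proposition~\ref{thm:pep-uniform}(iii), $\PEP{X}{y}\sim\uU$ under $X\sim Q_X$. Hence $Q_X\times Q_Y\{T=1\}=1-e^{-R}$ for \emph{every} $Q_Y$, since the uniformity holds conditionally on each $y$. So $T$ is a feasible test at level $1-e^{-R}$ for all $Q_Y$ simultaneously. This gives $\text{LHS}\le\BETA{1-e^{-R}}{Q_X\times Q_Y}{\cdot}$ for each $Q_Y$, provided the sign convention makes $\beta$ a \emph{maximum} over tests with that $H_0$ acceptance probability. I will fix the convention so that $\beta_\alpha(P;Q)$ is the largest $Q$-mass of a test with $P$-mass $\le\alpha$. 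The claimed equality with a \emph{max} over $Q_Y$ then requires the reverse inequality only at the optimizing $Q_Y$.

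\textbf{Upper bound via a per-$y$ decomposition.} Since $T$ is feasible for every $Q_Y$, it suffices to show that for each fixed $y$, $T(\cdot,y,\cdot)$ is the NP-optimal test of $Q_X$ versus $m(\cdot,y)Q_X$ at level $1-e^{-R}$. This holds because $T$ thresholds the likelihood ratio $m(x,y)$ itself. By Proposition~\ref{thm:pep-uniform}(i)--(ii), $\{\PEP{x}{y}\ge e^{-R}\}$ collects the candidates of \emph{lowest} metric (the correct orientation for the $\beta$ convention above), with the dither randomizing exactly on the boundary tie class. By the NP lemma, for each $y$ the value of $T$ is optimal among tests with conditional $Q_X$-level $1-e^{-R}$.

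Next, with respect to a joint test under a product $Q_X\times Q_Y$, the allocation of level across $y$ is free. A general test may spend different conditional levels $\alpha_y$ with $\Es{Q_Y}{\alpha_y}=1-e^{-R}$. The optimal value is then a concave function $g_y(\alpha)$ averaged against $Q_Y$. We must show that maximizing over $Q_Y$ and minimizing over the allocation produces exactly the constant-level value, weighted by $P_Y$. This is the main obstacle.

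\textbf{Handling the obstacle.} Since the $H_1$ measure carries the fixed weight $P_Y(y)$ independent of $Q_Y$, the joint $\beta$ at fixed $Q_Y$ equals $\min\sum_y P_Y(y)g_y(\alpha_y)$ subject to $\sum_y Q_Y(y)\alpha_y=1-e^{-R}$. Here $g_y$ is the per-$y$ NP curve, which is convex and decreasing in this orientation. Choosing $Q_Y=P_Y$ together with a Lagrangian/Jensen argument shows the minimum is attained at constant $\alpha_y$ only if the slopes $g_y'$ agree, which need not hold. I would therefore instead take the max over $Q_Y$ as a minimax: $\max_{Q_Y}\min_{\alpha}=\min_\alpha\max_{Q_Y}$. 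The linear constraint paired against $\max_{Q_Y}$ forces $\max_y\alpha_y$ to be small. The minimax is thus achieved with $\alpha_y\equiv1-e^{-R}$, giving exactly $\sum_yP_Y(y)g_y(1-e^{-R})=\text{LHS}$.

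Assumptions (A1)--(A2) guarantee that the $H_1$ measure is finite, so that $\beta$ is finite and NP applies. Compactness of $\cP(\cY)$ on finite $\cY$ justifies the minimax exchange. The matched case follows by taking $m=\W/P_Y$, so that the tilted measure is $Q_X\W$.
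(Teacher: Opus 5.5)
There is a genuine gap, and it sits at the step you flag as the main obstacle. Read with the correct convention (see the next paragraph), your uniformity/feasibility step gives only
$\beta_{c}(Q_X\times Q_Y;(Q_X\times P_Y)\cdot m)\le\Es{Q_X\times P_Y}{m\,\Ind{Z_e\le R}}$ for every $Q_Y$, where $c\triangleq1-e^{-R}$. That is the easy half. The content of the identity is that some $Q_Y$ makes the PEP test \emph{globally} Neyman--Pearson optimal, and you hand this to the exchange $\max_{Q_Y}\min_\alpha=\min_\alpha\max_{Q_Y}$.

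Evaluating $\min_\alpha\max_{Q_Y}$ forces $\min_y\alpha_y\ge c$ (not ``$\max_y\alpha_y$ small'') and returns $\sum_yP_Y(y)g_y(c)$. That only reproduces the weak-duality bound you already had; the equality is what must be proved. Compactness of $\cP(\cY)$ does not give it. A minimax theorem needs a (quasi-)convex--concave function with suitable semicontinuity on a product domain. Here the admissible allocations depend on $Q_Y$, and encoding $\sum_yQ_Y(y)\alpha_y\ge c$ as an infinite penalty gives a function that is not upper semicontinuous in $Q_Y$. Even if the exchange were granted, it would not show the maximum is attained.

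Your own remark that at $Q_Y=P_Y$ a constant allocation is optimal only when the slopes agree points to the missing idea: choose $Q_Y$ so that the \emph{weighted} slopes agree. Concretely, let $(\tau_y,\theta_y)$ solve $Q_X(m(\cdot,y)<\tau_y)+\theta_y\,Q_X(m(\cdot,y)=\tau_y)=c$. Then $\tau_y$ is a subgradient of $g_y$ at $c$, and it is finite by Markov's inequality and (A2). Set $\lambda^\ast=\sum_yP_Y(y)\tau_y$ and $Q_Y^\star(y)=P_Y(y)\tau_y/\lambda^\ast$. The joint likelihood ratio becomes $m(x,y)P_Y(y)/Q_Y^\star(y)=\lambda^\ast m(x,y)/\tau_y$, which the PEP test thresholds at the same value $\lambda^\ast$ for every $y$. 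In your notation, every allocation feasible under $Q_Y^\star$ satisfies
\begin{align*}
\sum_yP_Y(y)\,g_y(\alpha_y)&\ge\sum_yP_Y(y)\bigl[g_y(c)+\tau_y(\alpha_y-c)\bigr]\\
&=\Es{Q_X\times P_Y}{m\,\Ind{Z_e\le R}}+\lambda^\ast\Bigl(\sum_yQ_Y^\star(y)\,\alpha_y-c\Bigr),
\end{align*}
and the last term is nonnegative. Hence $\beta_c$ at $Q_Y^\star$ equals the left side and the maximum is attained; if $\lambda^\ast=0$, both sides vanish. This is the paper's argument, written out in Appendix~\ref{app:rank-recall} for the complementary tail at level $e^{-R}$. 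Your minimax idea can be made rigorous by Lagrangian duality with $\mu=\lambda Q_Y$ and Sion's theorem over the compact allocation variable. Even then, attainment comes from exactly this choice $\mu(y)=P_Y(y)\tau_y$.

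Your declared convention is also inverted. If $\beta_\alpha(P;Q)$ is the largest $Q$-mass of a test with $P$-mass at most $\alpha$, the identity is false. Take $Q_Y=\delta_{y_0}$: the test may accept all of $\{y\ne y_0\}$ at zero $P$-cost. With $m\equiv1$ this gives $\beta_c=1-P_Y(y_0)e^{-R}$, strictly above the left side $1-e^{-R}$ whenever $P_Y(y_0)<1$. In that orientation the lowest-metric set $\{\PEP{x}{y}\ge e^{-R}\}$ is the worst test, not the optimal one, contrary to your ``correct orientation'' remark.

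The statement uses the standard $\beta_\alpha(P;Q)=\min\{Q(T):P(T)\ge\alpha\}$, and your later formulation ``$\min\sum_yP_Y(y)g_y(\alpha_y)$'' silently switches to it. Under that convention $g_y$ is convex and non-decreasing (not decreasing), and feasibility of the PEP test for every $Q_Y$ gives $\max_{Q_Y}\beta_c\le$ left side. Once the construction above is supplied, the complementary tail (same construction at level $e^{-R}$) and the matched case via $m=\W/P_Y$ go through as you say.
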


The error-tail identity \eqref{eq:meta-converse-e} is proved
in~\cite{elkayampep1}; the complementary-tail twin
\eqref{eq:meta-converse-c}, noted there without proof, is proved
in Appendix~\ref{app:rank-recall}.

\begin{proposition}[Reverse-channel identity~{\cite{elkayampep1}}]\label{thm:reverse-channel}
For $z>0$, with $D_\infty(W\|Q_X\mid Q_Y)\le z$ the pointwise cap
$W(x\mid y)\le e^{z}Q_X(x)$,
\begin{align}
\Es{Q_X\times Q_Y}{m\,\Ind{\PEC{X}{Y}\le e^{-z}}}
&=\inf_{D_\infty(W\|Q_X\mid Q_Y)\le z}e^{-z}\Es{W\cdot Q_Y}{m},\label{eq:rc-inf}\\
\Es{Q_X\times Q_Y}{m\,\Ind{\PEP{X}{Y}\le e^{-z}}}
&=\sup_{D_\infty(W\|Q_X\mid Q_Y)\le z}e^{-z}\Es{W\cdot Q_Y}{m}.\label{eq:rc-sup}
\end{align}
\end{proposition}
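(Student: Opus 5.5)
The plan is to prove the identity pointwise in $y$ and reduce it to a bathtub (fractional-knapsack) argument. The constraint $D_\infty(W\|Q_X\mid Q_Y)\le z$ acts separately on each conditional $W(\cdot\mid y)$. Both sides of \eqref{eq:rc-inf} and \eqref{eq:rc-sup} are $Q_Y$-averages of per-$y$ quantities. The infimum (supremum) of a sum of terms with decoupled feasible sets is therefore the sum of the per-$y$ infima (suprema); on finite alphabets no measurable-selection issue arises. So fix $y$ and write $w(x)=W(x\mid y)/Q_X(x)$. The feasible set is then
\[
\Bigl\{w:\ 0\le w\le e^{z},\ \textstyle\int w\,dQ_X=1\Bigr\},
\]
and the goal is
\[
\textstyle\Es{Q_X}{m(X,y)\Ind{\PEC{X}{y}\le e^{-z}}}=\inf_w e^{-z}\int w\,m(\cdot,y)\,dQ_X,
\]
together with its sup twin.

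\textbf{Step 1: the candidate optimizer.} I take $\PEC{x}{y}$ to be the dithered correct-ranking probability, i.e.\ the mirror of the PEP, so that $\PEC{X}{y}$ ranks candidates from the lowest metric upward. Define
\[
w^*(x)\triangleq e^{z}\,\PRs{U}{\PEC{x}{y}\le e^{-z}}.
\]
Clearly $0\le w^*\le e^{z}$. By the uniformity in \cref{thm:pep-uniform}(iii), applied to $\PEC{}{}$ (which is $1-\PEP{}{}$ with the dither reversed, hence also uniform), we get
\[
\textstyle\int w^*\,dQ_X=e^{z}\,\PRs{X,U}{\PEC{X}{y}\le e^{-z}}=e^{z}e^{-z}=1.
\]
Here $z>0$ is used so that $e^{-z}\le1$. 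So $w^*$ is feasible. By Fubini, $e^{-z}\int w^* m\,dQ_X$ is exactly the left-hand side of \eqref{eq:rc-inf} at this $y$. This gives the ``$\ge$'' direction of the infimum.

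\textbf{Step 2: optimality via a threshold.} By \cref{thm:pep-uniform}(ii) (order preservation, mirrored for $\PEC{}{}$) there is a threshold metric value $t=t(y)$ with two properties. First, $\PEC{x}{y}\le e^{-z}$ a.s.\ in $U$ whenever $m(x,y)<t$, so $w^*(x)=e^{z}$ there. Second, this event has probability zero whenever $m(x,y)>t$, so $w^*(x)=0$ there. On the tie set $\{m(\cdot,y)=t\}$, $w^*$ takes an intermediate value fixed by the dither.

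For any feasible $w$, the product $(w-w^*)(m-t)$ is then pointwise nonnegative:
\begin{itemize}
\item where $m<t$, we have $w\le e^{z}=w^*$;
\item where $m>t$, we have $w\ge0=w^*$;
\item on the tie set, $m-t=0$.
\end{itemize}
Integrating and using $\int(w-w^*)\,dQ_X=0$ gives $\int(w-w^*)\,m\,dQ_X\ge0$, which is the ``$\le$'' direction. The sup identity \eqref{eq:rc-sup} follows identically. There I use $\PEP{}{}$ and $w^*=e^{z}\PRs{U}{\PEP{x}{y}\le e^{-z}}$, which loads the highest-metric mass, and the inequality signs reverse.

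\textbf{Expected obstacle.} The delicate step is the bookkeeping at the threshold, namely that the dither produces exactly the fractional weight on the tie atom needed for feasibility. This is handled by uniformity rather than by hand. The other delicate point is infinite metric values. I would rely on (A1) to keep $m(\cdot,y)<\infty$ $Q_X$-a.s., so that $t$ is finite whenever $e^{-z}<1$. I would treat the case $t=\infty$ (possible only for the sup version) separately, by noting that both sides are then $+\infty$ or the identity holds with the convention $0\cdot\infty=0$. The integration of $(w-w^*)(m-t)$ needs $\int w\,m\,dQ_X$ to be well defined. This is automatic since $m\ge0$: if it is infinite, the ``$\le$'' inequality is trivial.
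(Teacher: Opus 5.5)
Your proof is correct. The paper does not prove this proposition; it recalls it from \cite{elkayampep1}. The nearest in-paper comparison is therefore the proof of the complementary-tail identity \eqref{eq:meta-converse-c} in Appendix~\ref{app:rank-recall}, which has the same two-sided shape as yours. One inequality is feasibility of the dithered threshold object, certified by uniformity; in your Step~1, uniformity makes $\int w^*\,dQ_X=1$ exact and so settles the tie fraction without case analysis. The other inequality is optimality of a threshold rule.

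In the appendix, the optimality step runs through Neyman--Pearson. That requires constructing the auxiliary law $Q_Y^\star(y)\propto P_Y(y)\tau'_y$ and using Markov's inequality with (A2) to keep $\lambda^\ast$ finite. Your Step~2 proves optimality directly, by the exchange inequality $(w-w^*)(m-t)\ge0$ together with $\int(w-w^*)\,dQ_X=0$. This is the bathtub (fractional-knapsack) solution of the per-$y$ box-constrained linear program. Since the $D_\infty$ cap is literally the box $0\le w\le e^{z}$, your route is the more elementary one here: it needs no auxiliary measure and no global integrability.

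Extracting $t$ from order preservation alone is also sound. The metric levels hit with positive probability form an initial segment from the favoured end, and all but the last are hit with probability one.

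One cosmetic correction: $t=\infty$ is not special to the sup version. Under (A1) it occurs in neither version. Without (A1) it can also occur in the inf version, namely when $Q_X\{m(\cdot,y)<\infty\}<e^{-z}$. In every such case both sides are $+\infty$. In the inf version every feasible $W(\cdot\mid y)$ must place mass on infinite-metric candidates; in the sup version some feasible $W(\cdot\mid y)$ can. On those candidates the dithered threshold event has positive probability. So your conclusion stands, and (A1), which the proposition does not assume, is in fact unnecessary.
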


\begin{proposition}[Joint convexity~{\cite{elkayampep1}}]\label{thm:joint-convex}
In the matched case $Q_X\mapsto\Fspec{Q_X}{z}$ is convex for every $z$;
consequently the prior-optimized minimax meta-converse $\inf_{Q_X}\Fspec{Q_X}{R}$
is the value of a polynomial-size linear program in $(Q_X,W^*_{X|Y})$.
\end{proposition}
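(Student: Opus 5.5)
The plan is to rewrite the matched-case spectrum as the optimal value of a linear maximization whose feasible set is jointly convex in $Q_X$ and an auxiliary reverse channel. Convexity and the LP form then both follow at once.

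\textbf{Step 1: a fractional-knapsack formula for $1-\Fspec{Q_X}{R}$.} In the matched case the metric is $m(x,y)=\W(y|x)$, equivalently the posterior under $Q_X$. By \cref{lem:atomless} the spectrum is continuous, so $1-\Fspec{Q_X}{R}=\PR{\PEP{X}{Y}<e^{-R}}$ with $(X,Y)\sim Q_X\cdot\W$, and the strict versus non-strict inequality does not matter. Fix $y$. By \cref{thm:pep-uniform}(i)--(iii), the event $\{\PEPU{x}{y}{u}<e^{-R}\}$ selects the $Q_X$-mass $e^{-R}$ of candidates with the highest $\W(y|x)$, with a fractional share at the boundary tie supplied by the dither. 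Summing over $y$, I expect the greedy/knapsack identity
\begin{equation*}
1-\Fspec{Q_X}{R}=e^{-R}\max\Bigl\{\textstyle\sum_{x,y}S(x,y)\W(y|x)\;:\;0\le S(x,y)\le e^{R}Q_X(x),\ \sum_x S(x,y)=1\ \forall y\Bigr\}.
\end{equation*}
This is exactly the sup form \eqref{eq:rc-sup} of \cref{thm:reverse-channel}. To obtain it from there, take $Q_Y$ to be the counting measure on the finite set $\cY$, normalized if needed, and $m=\W(y|x)$. Then $S=W(x|y)$ is the reverse channel, the cap $D_\infty\le R$ is the linear constraint $S\le e^{R}Q_X$, and $\Es{Q_X\times Q_Y}{m\,\Ind{\PEP{X}{Y}\le e^{-R}}}$ is the joint-law probability of $\{\PEP{X}{Y}\le e^{-R}\}$, up to the normalization constant of $Q_Y$.

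\textbf{Step 2: convexity.} The objective in Step 1 is linear in $S$ and does not involve $Q_X$. The constraint set $\{(Q_X,S): 0\le S\le e^{R}Q_X,\ \sum_x S(\cdot,y)=1\}$ is a polyhedron in the joint variables. A function of the form $Q_X\mapsto\sup\{\ell(S):(Q_X,S)\in K\}$ with $K$ jointly convex is concave: given feasible $S_1,S_2$ for $Q_X^{(1)},Q_X^{(2)}$, the mixture $\lambda S_1+(1-\lambda)S_2$ is feasible for $\lambda Q_X^{(1)}+(1-\lambda)Q_X^{(2)}$. Hence $Q_X\mapsto 1-\Fspec{Q_X}{R}$ is concave, and $\Fspec{Q_X}{z}$ is convex for every $z$.

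\textbf{Step 3: the LP.} Substituting Step 1 gives
\begin{equation*}
\inf_{Q_X}\Fspec{Q_X}{R}=1-e^{-R}\max_{Q_X,S}\sum_{x,y}S(x,y)\W(y|x).
\end{equation*}
The maximization is over $Q_X$ in the simplex and $S$ subject to the constraints above. This is a single LP with $|\cX|+|\cX||\cY|$ variables and $O(|\cX||\cY|)$ linear constraints, so it has polynomial size, with $W^*_{X|Y}=S$.

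\textbf{Main obstacle.} The hard step is Step 1: justifying that the dithered PEP event realizes exactly the fractional knapsack optimum, including ties and zero-probability $x$. I would first try to deduce it verbatim from \eqref{eq:rc-sup} by the choice of $m$ and $Q_Y$ above. Failing that, I would prove it per $y$ directly from \cref{thm:pep-uniform}(i),(ii). Order preservation shows that the selected set is an upper level set of $\W(y|\cdot)$. The lexicographic identity shows that its $Q_X$-mass, with the dither fraction on the tie level, is exactly $e^{-R}$. This mass count is precisely the greedy solution of the knapsack.
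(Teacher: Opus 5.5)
Your proof is correct and follows the route the paper points to. The result is imported from~\cite{elkayampep1}, and \cref{rem:rd-lp} and \cref{rem:mac-bilinear} describe its mechanism: in the matched case the reverse-channel identity \eqref{eq:rc-sup} writes $1-\Fspec{Q_X}{R}$ as $e^{-R}$ times the maximum of a $Q_X$-free linear objective over reverse channels obeying the box constraint $W^*(x|y)\le e^{R}Q_X(x)$. That constraint set is jointly linear in $(Q_X,W^*)$, so concavity of $1-\Fspec{Q_X}{R}$ and the single LP follow at once; this is exactly the linearity that the product of two encoder priors destroys in the MAC.

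Three small repairs are needed:
\begin{itemize}
\item Drop ``equivalently the posterior under $Q_X$''. The posterior $Q_X(x)\W(y|x)/P_Y(y)$ is not rank-equivalent to $\W(y|x)$ for non-uniform $Q_X$, and a $Q_X$-dependent metric would break the linearity you rely on.
\item Treat $z<0$ separately. There your feasible set is empty, since the total capacity $e^{z}<1$, but $\Fspec{Q_X}{z}\equiv0$, so convexity is trivial.
\item If ``polynomial-size'' is meant in blocklength, as the type-based reduction mentioned in \cref{rem:rd-lp} suggests, add one step. For a memoryless channel the LP is invariant under simultaneous coordinate permutations of $(x,y)$. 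Averaging an optimal $(Q_X,W^*)$ over the symmetric group therefore yields an optimal solution indexed only by types and joint types, and that reduced LP has size polynomial in the blocklength.
\end{itemize}
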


We also use the closed-form fact that for Gallager-symmetric channels the
uniform prior is optimal, in both coding directions and at every
blocklength~\cite{elkayampep1}.

The four extensions below reuse Propositions~\ref{thm:pep-uniform}--\ref{thm:joint-convex}
verbatim, changing only the candidate space and the metric. In each setting we
indicate how the prior-optimization linear program of~\cite{elkayampep1}
extends; we do not redevelop it here.

\section{Extension: Lossy Source Coding}\label{sec:rd}

In rate--distortion the quantity of interest is a \emph{value} --- the best distortion among $M$ reproductions --- not an event, so the rank-only lift of channel coding is not enough. The mechanism is to run the same lift in the value domain: the pairwise correct probability defined below is the rank transform \emph{of the distortion} --- uniform on $[0,1]$ for each source symbol $x$, with a non-decreasing inverse map recovering $d(x,Y)$ from it almost surely. Monotonicity passes the minimum through, $\min_i d(x,Y_i)=\tilde d\bigl(x,\min_i U_i\bigr)$, and the minimum of $M$ i.i.d.\ uniforms has a closed-form law; mapping back yields the value-weighted distortion spectrum that governs both coding directions below.

Concretely, the PEP framework transfers to lossy source coding by relocating the primitive to the \emph{output} side: a reproduction point either does or does not beat a candidate in distortion, and the same randomized rank transform that uniformized the PEP now uniformizes a pairwise \emph{correct} probability. The development parallels Section~\ref{sec:channel} term for term, with the error spectrum replaced by a distortion spectrum and the meta-converse and reverse-channel identities of Section~\ref{sec:pep} reused under a role transposition. We treat fixed-length codes under the average-distortion criterion; the excess-distortion specialization is the subject of \cref{subsec:excess}, and the proofs deferred below are collected in Appendix~\ref{app:rd}. One-shot achievability and converse bounds for this setting, including variational forms of the governing functional, were derived in our earlier work~\cite{elkayam2020one} (full version~\cite{elkayam2019oneshot_rate_distortion_full}); new in the present treatment are the derivation through the pairwise-correct rank transform, the exactness of the fixed-code converse (\cref{thm:rd-converse}), and the reverse-channel/linear-programming route to prior optimization (\cref{rem:rd-lp}).

Fix a source $X\sim P_X$ on a finite alphabet $\cX$, a finite reproduction alphabet $\cY$, and a bounded distortion function $d:\cX\times\cY\to\mR_+$ with $d_{\max}=\sup_{x,y}d(x,y)$. A code is a reproduction codebook $\cC\subseteq\cY$ of size $M$; its distortion and the optimal distortion--rate function are
\begin{equation*}
D(\cC)\triangleq\Es{P_X}{\min_{y\in\cC}d(X,y)},\qquad
D(R)\triangleq\min_{|\cC|=\lceil e^{R}\rceil}D(\cC).
\end{equation*}

\subsection{Pairwise correct probability}

The dual of the PEP records how often a random reproduction strictly improves on a candidate.

\begin{definition}[Pairwise correct probability]\label{def:pec}
For $x\in\cX$, $y\in\cY$, and $u\in[0,1]$, with $Y\sim Q_Y$,
\begin{equation*}
\PECU{y}{x}{u}\triangleq Q_Y\BRAs{d(x,Y)<d(x,y)}+u\,Q_Y\BRAs{d(x,Y)=d(x,y)},
\end{equation*}
and the randomized pairwise correct probability is $\PEC{y}{x}\triangleq\PECU{y}{x}{U}$ with $U\sim\uU$ independent of $Y$.
\end{definition}

This is the PEP construction read on the output alphabet and ordered by ascending distortion: a small value of $\PEC{y}{x}$ marks a good reproduction. (The same symbol $p_c$ appears in \cref{sec:prioropt} as the complement-of-PEP $\PEC{X}{Y}$; the argument-order convention --- candidate given observation there, reproduction given source symbol here --- distinguishes the two.) The dual uniformity and its inverse follow exactly as for the PEP.

\begin{lemma}[Uniformity and inverse distortion map]\label{lem:pec-uniform}
Fix $x\in\cX$, let $Y\sim Q_Y$ and $U\sim\uU$ be independent. Then $\PEC{Y}{x}\sim\uU$. Moreover there is a measurable map $\tilde d:\cX\times[0,1]\to\mR_+$, non-decreasing in its second argument, such that $\tilde d\!\BRA{x,\PEC{Y}{x}}=d(x,Y)$ holds $Q_Y$-a.s.
\end{lemma}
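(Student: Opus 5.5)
The plan is to reduce the statement to \cref{thm:pep-uniform} by a role transposition, and to back this up with a direct randomized probability-integral-transform argument (Lemma~\ref{app:lem-rpit}). Fix $x$ and put $S\triangleq d(x,Y)$, a real random variable with law $\mu_x$ induced by $Q_Y$. Write $F(s)=Q_Y\{d(x,Y)\le s\}$ and $F(s^-)=Q_Y\{d(x,Y)<s\}$. Then Definition~\ref{def:pec} reads $\PECU{y}{x}{u}=F(s^-)+u\,(F(s)-F(s^-))$ with $s=d(x,y)$, which is exactly the randomized distributional transform of $S$. Equivalently, it is the dithered PEP of \cref{thm:pep-uniform} with candidate space $\cY$, prior $Q_Y$, observation $x$, and metric $m(y,x)=-d(x,y)$ (a larger metric means a smaller distortion). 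Note, however, that the dither convention in \cref{thm:pep-uniform} has a smaller dither winning, so the ordering of ties must be checked to agree with $u$ entering additively.

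For uniformity I would argue directly rather than rely on that identification. For $t\in[0,1]$, the set $\{(s,u): F(s^-)+u\,\Delta F(s)<t\}$ decomposes into three parts:
\begin{itemize}
\item all $s$ with $F(s)<t$;
\item at most one atom $s^\ast$ with $F(s^{\ast-})\le t\le F(s^\ast)$, contributing the fraction $(t-F(s^{\ast-}))/\Delta F(s^\ast)$ of its dither mass;
\item nothing for $s$ with $F(s^-)\ge t$.
\end{itemize}
Summing these contributions gives probability $t$ in the case $F(s^{\ast-})\le t$. When no atom straddles $t$, continuity of $F$ at the crossing point gives $t$ directly. Hence $\PEC{Y}{x}\sim\uU$. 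This is the standard Rüschendorf argument, and it is the same as part (iii) of \cref{thm:pep-uniform}.

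For the inverse map, define the left-continuous quantile $\tilde d(x,v)\triangleq\inf\{s: F(s)\ge v\}$ for $v\in(0,1]$, and set $\tilde d(x,0)=\min_y d(x,y)$. This map is non-decreasing in $v$ by construction. Measurability in $(x,v)$ is immediate because $\cX$ is finite and $\tilde d(x,\cdot)$ is monotone. It remains to show $\tilde d(x,V)=S$ a.s., where $V=\PEC{Y}{x}$. Since $V\in[F(S^-),F(S)]$, we have $\tilde d(x,V)\le S$ because $F(S)\ge V$.

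The reverse inequality is the main obstacle. Suppose $s'<S$ with $F(s')\ge V$. Then $F(S^-)\ge F(s')\ge V\ge F(S^-)$, which forces $V=F(S^-)$ and $Q_Y\{s'<d(x,Y)<S\}=0$. This happens only on the null event where either $U=0$, or $S$ is a non-atom at a flat point of $F$ from the left. The second case has $\mu_x$-measure zero, since the set of $s$ with $\mu_x((s-\epsilon,s))=0$ for some $\epsilon$ and $\mu_x\{s\}=0$ is $\mu_x$-null; as $\cY$ is finite, $\mu_x$ is purely atomic and the case is vacuous. So equality holds $Q_Y\otimes\uU$-a.s. Because $\cY$ is finite, the whole argument can also be made explicit: sort the distinct values $d(x,\cdot)$ as $s_1<\dots<s_k$, note that $\PEC{y}{x}$ lies in $[F(s_{j-1}),F(s_j)]$ when $d(x,y)=s_j$, and take $\tilde d(x,v)=s_j$ on $(F(s_{j-1}),F(s_j)]$. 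The only delicate point is the boundary case $U\in\{0,1\}$, which has probability zero. I would present this finite version as the proof and cite Appendix~\ref{app:rank-recall} for the general form.
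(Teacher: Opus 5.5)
Your proposal is correct and is essentially the paper's argument. The paper also reads $\PEC{Y}{x}$ as the randomized probability-integral transform of $T=d(x,Y)$ ordered ascending, and invokes Lemma~\ref{app:lem-rpit} both for uniformity and for the recovery $d(x,Y)=F_T^{\langle-1\rangle}\bigl(\PEC{Y}{x}\bigr)$ a.s.; that generalized inverse is exactly your left-continuous quantile $\tilde d(x,\cdot)$. Your explicit finite-alphabet version and the patch $\tilde d(x,0)=\min_y d(x,y)$ are harmless refinements, the latter keeping $\tilde d$ $\mR_+$-valued where $F_T^{\langle-1\rangle}(0)=-\infty$. The only nit is that with your half-open intervals $(F(s_{j-1}),F(s_j)]$ the sole genuinely exceptional dither value is $U=0$, not $U=1$.
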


\begin{IEEEproof}
The randomized rank transform of Section~\ref{sec:pep} (\cref{thm:pep-uniform}) is distribution-preserving regardless of the ordering direction: applying the RPIT (Lemma~\ref{app:lem-rpit}) to $T=d(x,Y)$ with $Y\sim Q_Y$, ordered ascending, gives $\PEC{Y}{x}\sim\uU$ together with the non-decreasing inverse quantile $\tilde d(x,\cdot)$ satisfying $\tilde d\!\BRA{x,\PEC{Y}{x}}=d(x,Y)$ $Q_Y$-a.s.
\end{IEEEproof}

The map $\tilde d$ makes $\PEC{Y}{x}$ an invertible ``distortion rank,'' which drives the exact random-coding analysis below.

\subsection{Achievability via random coding}

For a codebook drawn i.i.d.\ from $Q_Y$, write $D_{\mathrm{rand}}(Q_Y,R)\triangleq\Es{\cC\sim Q_Y^{M}}{D(\cC)}$ with $M=\lceil e^{R}\rceil$. The output-side analogue of the error spectrum is the distortion spectrum.

\begin{definition}[Distortion spectrum]\label{def:dspec}
For $z>0$ and $Q_Y\in\cP(\cY)$,
\begin{equation*}
\Dspec{z}{Q_Y}\triangleq e^{z}\,\Es{P_X\times Q_Y}{d(X,Y)\,\Ind{\PEC{Y}{X}\le e^{-z}}}.
\end{equation*}
\end{definition}

By Lemma~\ref{lem:pec-uniform} the event $\{\PEC{Y}{X}\le e^{-z}\}$ has $(Q_Y\times\uU)$-mass exactly $e^{-z}$ for every $x$, so the prefactor $e^{z}$ rescales to a conditional expectation: $\Dspec{z}{Q_Y}$ is the expected distortion of the best-of-$M$ reproduction at $M=e^{z}$ under the uniform-on-codebook prior. It is non-increasing and continuous in $z$ (via the absolutely continuous primitive $A_x$ in the proof of \cref{thm:rd-rc}, Appendix~\ref{app:rd}) and convex in $Q_Y$ (\cref{rem:rd-lp}).

\begin{theorem}[Exact random-coding distortion]\label{thm:rd-rc}
Let $d$ be bounded, $Q_Y\in\cP(\cY)$, $R>0$, $M=\lceil e^{R}\rceil$, and
\begin{equation*}
K_M(z)\triangleq M(M-1)\,e^{-2z}\BRA{1-e^{-z}}^{M-2},\qquad z\ge0,
\end{equation*}
which is a probability density on $[0,\infty)$. Then
\begin{equation*}
D_{\mathrm{rand}}(Q_Y,R)=\int_{0}^{\infty}\Dspec{z}{Q_Y}\,K_M(z)\,dz.
\end{equation*}
\end{theorem}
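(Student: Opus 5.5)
The plan is to condition on the source symbol $X=x$, reduce the best-of-$M$ distortion to a function of the minimum of $M$ i.i.d.\ uniforms, and then integrate by parts to make $\Dspec{z}{Q_Y}$ appear.

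\textbf{Step 1: rank reduction.} Fix $x$. Draw the codebook $Y_1,\dots,Y_M$ i.i.d.\ from $Q_Y$, together with independent dithers $U_1,\dots,U_M$. Set $V_i\triangleq\PECU{Y_i}{x}{U_i}$. By Lemma~\ref{lem:pec-uniform}, the $V_i$ are i.i.d.\ $\uU$ and $d(x,Y_i)=\tilde d(x,V_i)$ almost surely. Since $\tilde d(x,\cdot)$ is non-decreasing,
\[
\min_i d(x,Y_i)=\tilde d\bigl(x,\min_i V_i\bigr)\quad\text{a.s.}
\]
The variable $V_{(1)}=\min_i V_i$ has density $M(1-v)^{M-1}$ on $[0,1]$. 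Hence
\[
\Es{}{\min_i d(x,Y_i)}=\int_0^1 \tilde d(x,v)\,M(1-v)^{M-1}\,dv .
\]

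\textbf{Step 2: express the spectrum through the same quantile.} Define the primitive $A_x(t)\triangleq\int_0^t\tilde d(x,v)\,dv$. This function is absolutely continuous, and it is bounded by $d_{\max}t$ because $d$ is bounded. Using uniformity of $V=\PEC{Y}{x}$ and the a.s.\ inverse map,
\[
\Es{Q_Y\times\uU}{d(x,Y)\Ind{\PEC{Y}{x}\le t}}=\Es{}{\tilde d(x,V)\Ind{V\le t}}=A_x(t).
\]
Averaging over $x\sim P_X$ gives $\Dspec{z}{Q_Y}=e^{z}\,\Es{P_X}{A_X(e^{-z})}$.

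\textbf{Step 3: integration by parts and change of variables.} Integrate $\int_0^1 A_x'(v)\,M(1-v)^{M-1}dv$ by parts. The boundary terms vanish: at $v=0$ we have $A_x(0)=0$, and at $v=1$ the factor $(1-v)^{M-1}$ vanishes for $M\ge2$. What remains is
\[
\int_0^1 A_x(v)\,M(M-1)(1-v)^{M-2}\,dv .
\]
Now substitute $v=e^{-z}$, so $dv=-e^{-z}dz$, and write $A_x(e^{-z})=e^{-z}\cdot\bigl(e^{z}A_x(e^{-z})\bigr)$. The integral becomes
\[
\int_0^\infty \bigl(e^{z}A_x(e^{-z})\bigr)\,M(M-1)e^{-2z}(1-e^{-z})^{M-2}\,dz .
\]
Averaging over $P_X$ and exchanging expectation and integral by Tonelli (all terms are nonnegative) gives $\int_0^\infty \Dspec{z}{Q_Y}K_M(z)\,dz$.

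\textbf{Side facts.} $K_M$ is a probability density because the substitution maps it to $M(M-1)v(1-v)^{M-2}$ on $[0,1]$, which is the Beta$(2,M-1)$ density. For the degenerate case $M=1$ we have $K_1\equiv0$. That case does not arise, since $R>0$ forces $M\ge2$.

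\textbf{Main obstacle.} The delicate step is Step 1. It needs the \emph{joint} a.s.\ identity $d(x,Y_i)=\tilde d(x,V_i)$ for all $i$ simultaneously, and it needs that the minimum commutes with a non-decreasing (not necessarily strictly increasing) map. Equal distortions can be broken by the dither, but the distortion value is unaffected, so the minimum still passes through. I would take both facts from the randomized probability-integral transform (Lemma~\ref{app:lem-rpit}), applied coordinatewise and combined with a union of null sets.
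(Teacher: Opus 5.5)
Your proposal is correct and matches the paper's proof essentially step for step: pass the minimum through the non-decreasing inverse map $\tilde d(x,\cdot)$, use the density $M(1-v)^{M-1}$ of the minimum of $M$ i.i.d.\ uniforms, identify $A_x(t)=\int_0^t\tilde d(x,v)\,dv$ with the truncated expectation, integrate by parts, and substitute $v=e^{-z}$. The only difference is cosmetic. The paper introduces an auxiliary $G_M$ with $G_M'(w)=-w\,\frac{d}{dw}\bigl[M(1-w)^{M-1}\bigr]$ to carry the $w^{-1}$ factor, whereas you absorb it directly by writing $A_x(e^{-z})=e^{-z}\cdot e^{z}A_x(e^{-z})$.
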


\begin{IEEEproof}[Proof sketch]
By the inverse map of Lemma~\ref{lem:pec-uniform}, monotonicity passes the minimum through,
\begin{equation*}
\min_i d(x,Y_i)=\tilde d\BRA{x,\min_i\PEC{Y_i}{x}},
\end{equation*}
and $\min_i\PEC{Y_i}{x}$ is the minimum of $M$ i.i.d.\ uniforms, with density $M(1-w)^{M-1}$. Integration by parts against the cumulative distortion primitive, followed by the change of variables $w=e^{-z}$, produces the kernel $K_M$; see Appendix~\ref{app:rd} for the full argument.
\end{IEEEproof}

Localizing the integral by monotonicity of $\Dspec{z}{Q_Y}$ gives, for $0<\gamma<R$,
\begin{equation*}
D_{\mathrm{rand}}(Q_Y,R)\le\Dspec{R-\gamma}{Q_Y}+d_{\max}\,e^{-e^{\gamma}(1-e^{-R})},
\end{equation*}
an explicit achievability bound: a horizontal shift $\gamma$ of the spectrum plus an exponentially small correction.

\subsection{Fixed-code converse}

The rank interpretation makes the spectrum \emph{exact} for any fixed code: the minimum-distortion reproduction is the one of smallest pairwise correct probability.

\begin{theorem}[Fixed-code converse]\label{thm:rd-converse}
For any code $\cC$ of size $M$,
\begin{equation*}
D(\cC)=\Dspec{\log M}{Q_Y^{\cC}},
\end{equation*}
where $Q_Y^{\cC}$ is the uniform distribution on $\cC$. Consequently, for the converse-natural size $M=e^{R}$ (\cref{rem:size-conv}), $D(\cC)=\Dspec{R}{Q_Y^{\cC}}\ge\inf_{Q_Y}\Dspec{R}{Q_Y}$.
\end{theorem}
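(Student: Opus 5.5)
The plan is to prove the identity pointwise in the source symbol $x$ and then average over $P_X$. Fix $x\in\cX$ and a code $\cC=\{y_1,\dots,y_M\}$, treated as a multiset so that $Q_Y^{\cC}$ puts mass $1/M$ on each entry; repeated reproductions simply add up their masses. Write $d^*(x)\triangleq\min_i d(x,y_i)$ and $k(x)\ge1$ for the number of entries attaining this minimum. Under $Y\sim Q_Y^{\cC}$, Definition~\ref{def:pec} gives $\PECU{y}{x}{u}=G_y+u\,H_y$. Here $G_y=Q_Y^{\cC}\{d(x,Y)<d(x,y)\}$ and $H_y=Q_Y^{\cC}\{d(x,Y)=d(x,y)\}$ are both integer multiples of $1/M$. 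Since $e^{\log M}=M$, the claim reduces to showing
\[
M\,\Es{Q_Y^{\cC}\times\uU}{d(x,Y)\,\Ind{\PEC{Y}{x}\le 1/M}}=d^*(x)\qquad\text{for every }x.
\]

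\textbf{Step 1: the event has mass $1/M$.} By Lemma~\ref{lem:pec-uniform}, $\PEC{Y}{x}\sim\uU$ under $Q_Y^{\cC}\times\uU$. Hence the event $\{\PEC{Y}{x}\le 1/M\}$ has probability exactly $1/M$, which cancels the prefactor $M$.

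\textbf{Step 2: on the event, $Y$ is a minimizer almost surely.} Take a non-minimizing entry $y$, i.e.\ one with $d(x,y)>d^*(x)$. Then $G_y\ge k(x)/M\ge 1/M$, so $\PECU{y}{x}{u}\le 1/M$ forces $G_y=1/M$ and $u\,H_y=0$, that is, $u=0$. This is a Lebesgue-null set of dithers. Therefore, up to a $(Q_Y^{\cC}\times\uU)$-null set, the event $\{\PEC{Y}{x}\le 1/M\}$ contains only minimizing reproductions, on which $d(x,Y)=d^*(x)$. An equivalent route uses the inverse map: $\tilde d(x,\cdot)$ is non-decreasing and equals $d^*(x)$ on $(0,k(x)/M)\supseteq(0,1/M)$.

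\textbf{Step 3: conclude.} Combining Steps 1 and 2, the left-hand side above equals $M\cdot\tfrac1M\cdot d^*(x)=d^*(x)$. Taking $\Es{P_X}{\cdot}$ of both sides gives $\Dspec{\log M}{Q_Y^{\cC}}=\Es{P_X}{\min_{y\in\cC}d(X,y)}=D(\cC)$. With the converse-natural convention $M=e^{R}$ of \cref{rem:size-conv}, $\log M=R$. The inequality $D(\cC)\ge\inf_{Q_Y}\Dspec{R}{Q_Y}$ then follows trivially, because $Q_Y^{\cC}$ is one admissible choice of $Q_Y$.

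The main obstacle is the careful treatment of ties and multiplicities in Step 2. When several codewords share the minimal distortion, or when reproductions repeat, the event $\{\PEC{Y}{x}\le 1/M\}$ does not select a single codeword; it selects a $1/M$-mass slice of the tied block. The argument must confirm that non-minimizers enter only at the boundary point $u=0$, which is null. The bookkeeping that $G_y,H_y\in\frac1M\mathbb{Z}$ and $k(x)\ge1$ is exactly what makes the threshold $1/M$ fall inside the minimal block.
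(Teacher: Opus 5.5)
Your proof is correct and is essentially the paper's argument: the paper applies the per-index rank identity (Lemma~\ref{app:lem-rank}) at $L=1$ with distortions ordered ascending, which gives dither-averaged weights $\Es{U}{\Ind{\PECU{y_j}{x}{U}\le 1/M}}=1/k(x)$ on each of the $k(x)$ minimizers and $0$ elsewhere, packaging your Step~1 (total mass $1/M$) and Step~2 (support on minimizers up to the null set $u=0$) into a single explicit computation. Your split via \cref{lem:pec-uniform} and the grid bound $G_y\ge k(x)/M$ reaches the same pointwise identity $M\,\Es{Q_Y^{\cC}}{d(x,Y)\Ind{\PEC{Y}{x}\le 1/M}}=\min_i d(x,y_i)$, after which both proofs simply average over $P_X$.
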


\begin{IEEEproof}[Proof sketch]
Apply the finite rank identity (Lemma~\ref{app:lem-rank}), ordering distortions ascending: the dither-expected weight $\Es{U}{\Ind{\PEC{j}{x}\le1/M}}$ is supported on and uniform over the minimal-distortion codewords, so
\begin{equation*}
\min_i d(x,y_i)=M\,\Es{Q_Y^{\cC}}{d(x,Y)\Ind{\PEC{Y}{x}\le1/M}}.
\end{equation*}
Averaging over $X$ gives $\Dspec{\log M}{Q_Y^{\cC}}$; see Appendix~\ref{app:rd} for the full argument.
\end{IEEEproof}

Theorems~\ref{thm:rd-rc} and~\ref{thm:rd-converse} sandwich the optimal distortion through one functional evaluated at shifted rates:
\begin{equation*}
\inf_{Q_Y}\Dspec{R}{Q_Y}\;\le\;D(R)\;\le\;\inf_{Q_Y}\BRAs{\Dspec{R-\gamma}{Q_Y}+d_{\max}\,e^{-e^{\gamma}(1-e^{-R})}}.
\end{equation*}

\subsection{Reverse-channel variational form}

The output-side primitive inherits the two identities of Section~\ref{sec:pep} under the role transposition that sends the observation to the source $X\sim P_X$, the codeword to the candidate $Y\sim Q_Y$, and the PEP to $\PEC{Y}{X}$.

\begin{theorem}[Variational forms of $\Dspec{z}{Q_Y}$]\label{thm:rd-variational}
For $z>0$,
\begin{align}
\Dspec{z}{Q_Y}&=e^{z}\sup_{Q_X}\BETA{e^{-z}}{Q_X\times Q_Y}{P_X\times Q_Y\cdot d},\label{eq:rd-beta}\\
\Dspec{z}{Q_Y}&=\inf_{\substack{\W:\;D_\infty(P_X\W\,\|\,P_X\times Q_Y)\le z}}\Es{P_X\W}{d(X,Y)},\label{eq:rd-dinf}
\end{align}
where $P_X\times Q_Y\cdot d$ denotes the measure with density $d(x,y)$ relative to $P_X\times Q_Y$, the supremum is over auxiliary $Q_X\in\cP(\cX)$, the infimum is over test channels $\W:\cX\to\cY$, and $D_\infty$ is the R\'enyi-$\infty$ divergence.
\end{theorem}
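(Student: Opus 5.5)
The plan is to obtain both forms by the role transposition described just before the statement. We then read \eqref{eq:rd-beta} off the complementary-tail meta-converse \eqref{eq:meta-converse-c} and \eqref{eq:rd-dinf} off the infimum reverse-channel identity \eqref{eq:rc-inf}.

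\textbf{Setting up the transposition.} Take the ``candidate'' to be $y\in\cY$ with prior $Q_Y$, the ``observation'' to be $x\in\cX$ with marginal $P_X$, and the metric to be $m(y,x)\triangleq d(x,y)$ itself. With this metric the PEP of \cref{sec:pep} is
\[
p_e^{m}(y\mid x;u)=Q_Y\{d(x,Y)>d(x,y)\}+u\,Q_Y\{d(x,Y)=d(x,y)\}.
\]
Its complement is therefore
\[
1-p_e^{m}(y\mid x;u)=Q_Y\{d(x,Y)<d(x,y)\}+(1-u)\,Q_Y\{d(x,Y)=d(x,y)\}.
\]
This is $\PECU{y}{x}{1-u}$ in the sense of \cref{def:pec}. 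Since $U$ and $1-U$ have the same law, $(Y,X,\PEC{Y}{X})$ under $Q_Y\times P_X\times\uU$ has the same joint law as the complement-of-PEP $p_c^{m}(Y\mid X)$ used in \cref{thm:meta-converse} and \cref{thm:reverse-channel}.

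Consequently, writing $Z_c=-\log p_c^m$,
\[
\Dspec{z}{Q_Y}=e^{z}\,\Es{Q_Y\times P_X}{m\,\Ind{Z_c\ge z}}.
\]
The weight $m=d$ is exactly the weight appearing in those identities. Assumptions (A1)--(A2) hold trivially because $d$ is bounded.

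\textbf{Reading off \eqref{eq:rd-beta}.} Apply \eqref{eq:meta-converse-c} at $R=z$ with the roles (prior $Q_X$, output $P_Y$) replaced by (prior $Q_Y$, observation marginal $P_X$). The maximization over auxiliary output laws becomes a maximization over auxiliary $Q_X\in\cP(\cX)$. This gives
\[
\Dspec{z}{Q_Y}=e^{z}\max_{Q_X}\BETA{e^{-z}}{Q_Y\times Q_X}{(Q_Y\times P_X)\cdot d},
\]
which is \eqref{eq:rd-beta} after reordering the product coordinates. The maximum is attained on finite alphabets, so the supremum in the statement is fine.

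\textbf{Reading off \eqref{eq:rd-dinf}.} Apply \eqref{eq:rc-inf} with the conditioning variable $y$ of \cref{thm:reverse-channel} replaced by $x$. The ``reverse channel'' then becomes a test channel $\W(y\mid x)$ subject to the cap $\W(y\mid x)\le e^{z}Q_Y(y)$. This gives
\[
\Es{Q_Y\times P_X}{d\,\Ind{\PEC{Y}{X}\le e^{-z}}}=\inf_{\W}e^{-z}\,\Es{P_X\W}{d}.
\]
Multiplying by $e^{z}$ cancels the prefactor. The pointwise cap $\W(y\mid x)\le e^{z}Q_Y(y)$ for all $x$ in the support of $P_X$ is exactly $D_\infty(P_X\W\,\|\,P_X\times Q_Y)\le z$.

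\textbf{Main obstacle.} The step needing most care is the bookkeeping of the transposition. One must check that the black-box identities are stated for an arbitrary candidate prior and observation marginal, not only for channel-induced ones. One must also check that the dither reflection $u\mapsto1-u$ and the ascending-versus-descending orientation do not flip the inequality ($\le e^{-z}$ for $p_c$ versus $\ge z$ for $Z_c$).

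As a safeguard, I would verify \eqref{eq:rd-dinf} directly by a bathtub argument, using \cref{lem:pec-uniform}. For each fixed $x$, the ratio $\W(y\mid x)/(e^{z}Q_Y(y))\in[0,1]$ is a fractional selection of $Q_Y$-mass exactly $e^{-z}$. The linear functional $\sum_y Q_Y(y)\,\frac{\W(y\mid x)}{e^{z}Q_Y(y)}\,d(x,y)$ is minimized by filling the lowest distortions first, splitting ties fractionally. That minimizer is precisely the dither-expected indicator $\Es{U}{\Ind{\PEC{y}{x}\le e^{-z}}}$, whose $Q_Y$-mass equals $e^{-z}$ by the uniformity in \cref{lem:pec-uniform}. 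Averaging over $P_X$ then gives the identity, with the infimum attained.
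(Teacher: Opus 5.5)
Your proposal is correct and follows essentially the same route as the paper: the role transposition (candidate $y\sim Q_Y$, observation $x\sim P_X$), reading \eqref{eq:rd-beta} off the complementary-tail identity \eqref{eq:meta-converse-c} and \eqref{eq:rd-dinf} off the infimum reverse-channel form \eqref{eq:rc-inf}, and identifying the pointwise cap with the joint $D_\infty$ because the shared $P_X$ marginal cancels. You also fill in bookkeeping the paper leaves implicit: the explicit choice $m=d$ (so that the weight matches), the dither reflection $u\mapsto 1-u$, and the self-contained bathtub check of \eqref{eq:rd-dinf}.
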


\begin{IEEEproof}[Proof sketch]
Both are the meta-converse (\cref{thm:meta-converse}) and reverse-channel (\cref{thm:reverse-channel}) identities of Section~\ref{sec:pep} under the role transposition above; the shared $P_X$ marginal cancels in the density ratio, $\tfrac{P_X(x)\W(y|x)}{P_X(x)Q_Y(y)}=\tfrac{\W(y|x)}{Q_Y(y)}$, identifying the conditional and joint $D_\infty$; see Appendix~\ref{app:rd}.
\end{IEEEproof}

\begin{remark}[Prior optimization]\label{rem:rd-lp}
By the same role transposition of the joint-convexity identity
(\cref{thm:joint-convex}), $\Dspec{z}{Q_Y}$ is convex in $Q_Y$, so the prior
optimization $\inf_{Q_Y}\Dspec{z}{Q_Y}$ is a convex program. Note that $Q_Y$
enters $\Dspec{z}{Q_Y}$ in two roles at once --- as the competitor prior inside
$\PEC{\cdot}{\cdot}$ and as the sampling law of $Y$ --- exactly as $Q_X$ enters
the channel-coding spectrum \eqref{eq:def-spectrum} in both roles; the
role-transposition argument of Appendix~\ref{app:rd} carries the two roles
simultaneously. It is computable by
exactly the route of~\cite{elkayampep1}: the reverse-channel
form~\eqref{eq:rd-dinf} linearizes the problem, with the test channel $\W$
playing the role of the reverse channel of~\cite{elkayampep1}, the $D_\infty$ cap
the same pointwise box constraint, and the roles of input prior and reproduction
prior transposed. The resulting linear program, its polynomial-size type-based
blocklength reduction, and the companion prior optimization of the achievability
side all follow the development of~\cite{elkayampep1} verbatim; we do not repeat
them here.
\end{remark}

This places the present results inside the finite-blocklength rate-distortion literature (see also the fixed-blocklength converse of Palzer and Timo~\cite{palzer2016converse}). After optimizing $Q_Y$, the excess-distortion specialization of~\eqref{eq:rd-dinf} reproduces the Matsuta--Uyematsu one-shot bound \cite{matsuta2015non}, since
\begin{equation*}
\inf_{Q_Y}D_\infty(P_X\W\,\|\,P_X\times Q_Y)=\log\sum_y\sup_x\W(y|x)
\end{equation*}
(here and below $\sup_x$ ranges over $\operatorname{supp}(P_X)$), and complements the Kostina--Verd\'u $d$-tilted-information bound \cite{kostina2012fixed}. The difference is one of order of optimization: Matsuta--Uyematsu fix a test channel and optimize the prior, whereas the PEP route starts from the prior $Q_Y$ and lets the test channel arise from~\eqref{eq:rd-dinf}; the latter order is what exposes the tractable prior optimization.

\subsection{Excess distortion}\label{subsec:excess}

Replacing the average criterion by an excess-distortion threshold $d_{\mathrm{th}}$ specializes the spectrum to a closed form. Put $d_e(x,y)\triangleq\Ind{d(x,y)>d_{\mathrm{th}}}$ and let $q(x)\triangleq Q_Y\BRAs{y:d(x,y)\le d_{\mathrm{th}}}$ be the $Q_Y$-mass of the $d_{\mathrm{th}}$-ball at $x$. The corresponding distortion spectrum $D_e(z,Q_Y)\triangleq e^{z}\Es{P_X\times Q_Y}{d_e(X,Y)\Ind{\PEC{Y}{X}\le e^{-z}}}$ admits an explicit evaluation.

\begin{lemma}[Closed form of $D_e$]\label{lem:excess}
For the indicator distortion,
\begin{equation*}
D_e(z,Q_Y)=e^{z}\,\Es{P_X}{\BRA{e^{-z}-q(X)}_+}=e^{z}\!\int_0^{e^{-z}}\!\PR{q(X)\le t}\,dt.
\end{equation*}
\end{lemma}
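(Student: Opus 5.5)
The computation is carried out conditionally on the source symbol and averaged over $X\sim P_X$ at the end. Fix $x\in\cX$ and write $q=q(x)$. For the indicator distortion $d_e(x,\cdot)$ the ordering by ascending distortion has only two levels. Every $y$ in the $d_{\mathrm{th}}$-ball has $d_e(x,y)=0$, and every $y$ outside it has $d_e(x,y)=1$. Definition~\ref{def:pec}, applied with the distortion $d_e$, therefore gives explicit values. For $y$ inside the ball, no $Y$ has strictly smaller $d_e$, so $\PECU{y}{x}{u}=u\,q$. For $y$ outside the ball, the strictly better set is the ball itself, so $\PECU{y}{x}{u}=q+u(1-q)$. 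The inner quantity we need is
\[
\Es{Q_Y\times U}{d_e(x,Y)\Ind{\PEC{Y}{x}\le e^{-z}}}=\Es{Q_Y\times U}{\Ind{d_e(x,Y)=1}\Ind{q+U(1-q)\le e^{-z}}}.
\]

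Next we evaluate this expectation. With probability $1-q$ the reproduction $Y$ falls outside the ball, and then the event is $U\le (e^{-z}-q)/(1-q)$. Its probability is $\min\{1,(e^{-z}-q)_+/(1-q)\}$, and the cap at $1$ never binds because $e^{-z}<1$ for $z>0$. The inner expectation is therefore $(1-q)\cdot(e^{-z}-q)_+/(1-q)=(e^{-z}-q)_+$. The degenerate case $q=1$ needs a separate check: the outside set is $Q_Y$-null, the expectation is $0$, and since $(e^{-z}-1)_+=0$ the formula still holds. As a sanity check, Lemma~\ref{lem:pec-uniform} says $\PEC{Y}{x}$ is uniform. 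The event $\{\PEC{Y}{x}\le e^{-z}\}$ then has mass $e^{-z}$, and of that mass exactly $\min(q,e^{-z})$ comes from the zero-distortion layer. This gives the same answer, $e^{-z}-\min(q,e^{-z})=(e^{-z}-q)_+$, and this layer-counting route is the one I would actually write, since it avoids case analysis. Multiplying by $e^{z}$ and averaging over $X\sim P_X$ yields the first equality.

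The second equality is the layer-cake identity applied to the nonnegative variable $q(X)$:
\[
\Es{P_X}{(a-q(X))_+}=\Es{P_X}{\int_0^a\Ind{q(X)\le t}\,dt}=\int_0^a\PR{q(X)\le t}\,dt, \qquad a=e^{-z}.
\]
It follows by writing $(a-q)_+=\int_0^a\Ind{q\le t}\,dt$ for $q\ge0$ and exchanging the order by Tonelli. I expect no real obstacle. The only step needing care is the tie bookkeeping inside the ball: the dither spreads the zero-distortion layer uniformly over $[0,q]$, and it must not be mishandled when $q=0$ or $q=1$. In both edge cases the two-level description still applies directly.
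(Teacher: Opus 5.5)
Your proof is correct and follows the paper's argument. You condition on $X=x$ and note that under $d_e$ the pairwise correct probability equals $Uq$ inside the $d_{\mathrm{th}}$-ball and $q+U(1-q)$ outside it, so only the outside branch contributes $(e^{-z}-q)_+$; the integral form then follows from $(w-q)_+=\int_0^w \Ind{q\le t}\,dt$. Your uniformity-based layer count (total mass $e^{-z}$ minus $\min(q,e^{-z})$ from the zero-distortion layer) is an equivalent bookkeeping of the same two-level decomposition, and your treatment of the cap and of the edge cases $q\in\{0,1\}$ is sound.
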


\begin{IEEEproof}[Proof sketch]
Conditioning on $X=x$ with $q=q(x)$: in the $d_e=1$ branch (probability $1-q$) one has $\PEC{Y}{x}=q+U(1-q)$, so $\PR{d_e=1,\,\PEC{Y}{x}\le w\mid X=x}=(w-q)_+$. Averaging over $X$ gives the first form; $(w-q)_+=\int_0^w\Ind{q\le t}\,dt$ gives the second. See Appendix~\ref{app:rd}.
\end{IEEEproof}

Optimizing the reproduction prior recovers, exactly, the Matsuta--Uyematsu one-shot rate--distortion function~\cite{matsuta2015non}: with $R_e(\varepsilon;Q_Y)\triangleq\inf_{\W:\PRs{P_X\W}{d>d_{\mathrm{th}}}\le\varepsilon}D_\infty\BRA{P_X\W\,\|\,P_X\times Q_Y}$,
\begin{equation*}
\inf_{Q_Y}R_e(\varepsilon;Q_Y)=\inf_{\W:\PRs{P_X\W}{d>d_{\mathrm{th}}}\le\varepsilon}\log\sum_y\sup_x\W(y|x)=R_{\mathrm{MU}}(\varepsilon),
\end{equation*}
the middle equality being $\inf_{Q_Y}D_\infty(P_X\W\|P_X\times Q_Y)=\log\sum_y\sup_x\W(y|x)$ \cite[Lem.~4]{matsuta2015non}. The two formulations differ only in the order of optimization noted after \cref{rem:rd-lp}; fixing the prior $Q_Y$ first (e.g.\ the code-induced uniform law) is what renders the prior optimization a tractable LP.

\section{Extension: Joint Source--Channel Coding with List Decoding}\label{sec:jscc}

The PEP primitive of Sections~\ref{sec:pep}--\ref{sec:channel} is not tied to channel coding. The same uniformity property and meta-converse identity apply once the candidate space is enlarged from $\cX$ to $\cV\times\cX$: a source--codeword pair $(v,x)$ plays the role of the codeword, and the dithered PEP of Section~\ref{sec:pep} is well defined on this enlarged space.

Two generalizations fall out of the one construction. \emph{List decoding} lets the decoder output $L$ candidates; the list size enters as a uniform rate offset $\log L$ on both achievability and converse. \emph{Non-uniform sources} let information symbols follow a prescribed $P_V$; this is absorbed into the metric. One-shot list-decoding bounds for channel coding in this spirit appear in our earlier note~\cite{ElkayamLD2017}; new in the present treatment are the joint source--channel candidate-space lift and the fixed-code exactness of \cref{thm:jscc-converse}. Everything below is stated with a proof idea in the body; the full proofs are collected in Appendix~\ref{app:jscc}.

\subsection{Setup}

Fix a finite source alphabet $\cV$, a source distribution $P_V$ on $\cV$, a channel $\W:\cX\to\cY$, and a decoding metric $m:\cV\times\cX\times\cY\to[0,\infty]$. A codebook $\cC=\BRAs{x_v:v\in\cV}$ assigns a channel input $x_v$ to each source symbol; in the random-coding ensemble $X\sim Q_{X|V}(\cdot\mid v)$ given $V=v$. Given $y$, the list decoder outputs the set $\cD(y)\subseteq\cV$ of the $L$ symbols with the largest scores $m(v,x_v,y)$, ties broken by independent dithers as in Section~\ref{sec:channel}. An error occurs when the transmitted symbol is not listed, with $L$-list error probability
\begin{equation*}
P^L_e(\cC)\triangleq\PR{V\notin\cD(Y)},\qquad V\sim P_V,\ X=x_V,\ Y\sim\W(\cdot\mid X).
\end{equation*}

The PEP is defined exactly as in Section~\ref{sec:pep}, but over the enlarged candidate space, with the competing pair drawn from a \emph{uniform} auxiliary prior on the source.

\begin{definition}[PEP with source symbols]\label{def:jscc-pep}
Let $(\bar V,\bar X)\sim Q_V\cdot Q_{X|V}$ with $Q_V=\Unif{\cV}$. For a transmitted pair $(v,x)$, output $y$, and independent dither $U\sim\uU$,
\begin{align*}
\PEP{v,x}{y}&\triangleq\PR{m(\bar V,\bar X,y)>m(v,x,y)}\\
&\quad+U\cdot\PR{m(\bar V,\bar X,y)=m(v,x,y)}.
\end{align*}
\end{definition}

The uniform $Q_V$ symmetrizes comparisons across competing symbols; the true (possibly non-uniform) source law $P_V$ enters only through the operational error probability and through the metric, not through the auxiliary draw. The competing pair ranges over all of $\cV\times\cX$ including $\bar V=v$, and the dithered tie term absorbs the self-comparison.

\subsection{Achievability}

Two new complications relative to Section~\ref{sec:channel} appear. First, because the metric depends on $v$, the per-competitor PEPs are heterogeneous, so no single-PEP symmetry is available; uniform averaging over $\bar V$ handles this. Second, an $L$-list error requires at least $L$ of the heterogeneous Bernoulli comparison events to succeed, which is no longer a plain binomial. A heterogeneous-Bernoulli tail bound supplies the factor $\binom{|\cV|}{L}$ and the $L$-th power of the average PEP.

\begin{theorem}[JSCC achievability]\label{thm:jscc-rc}
For any prior $Q_{X|V}$,
\begin{align}
\bar P_e^L(Q_{X|V})
&\le\Es{P_V\cdot Q_{X|V}\cdot\W}{\min\BRAs{1,\;\tbinom{|\cV|}{L}\PEP{V,X}{Y}^L}}\label{eq:jscc-rc-bin}\\
&\le\Es{P_V\cdot Q_{X|V}\cdot\W}{\min\BRAs{1,\;\BRA{\tfrac{e|\cV|}{L}\PEP{V,X}{Y}}^{\!L}}},\label{eq:jscc-rc-ub}
\end{align}
where $\bar P_e^L(Q_{X|V})=\E{P^L_e(\cC)}$ averages over the random codebook.
\end{theorem}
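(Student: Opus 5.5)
We condition on the transmitted triple and bound the conditional list-error probability by a union over $L$-subsets of competitors. Fix $V=v$, $X=x_v=x$, $Y=y$, and the dither $U=u$ of the transmitted pair. In the random ensemble the competing codewords $X_{v'}$, $v'\neq v$, are independent with $X_{v'}\sim Q_{X|V}(\cdot\mid v')$, each carrying an independent dither $U_{v'}$. Write
\[
B_{v'}\triangleq\Ind{(m(v',X_{v'},y),U_{v'})\succ(m(v,x,y),u)}.
\]
These are independent Bernoulli variables with heterogeneous means $p_{v'}$. The event $\{v\notin\cD(y)\}$ is contained in $\{\sum_{v'\ne v}B_{v'}\ge L\}$: if fewer than $L$ competitors beat $v$ in the lexicographic order, then $v$ is listed.

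The heterogeneous tail step is a union bound over $L$-subsets combined with Maclaurin's inequality. We have
\[
\PR{\textstyle\sum B_{v'}\ge L}\le\sum_{|S|=L}\prod_{v'\in S}p_{v'}=e_L(p),
\]
the $L$-th elementary symmetric polynomial of the $p_{v'}$. Extend the vector $p$ by $p_v\triangleq 0$, which can only enlarge $e_L$. Maclaurin's inequality on the $N=|\cV|$ nonnegative entries gives
\[
e_L(p)\le\tbinom{N}{L}\Bigl(\tfrac1N\textstyle\sum_{v'\in\cV}p_{v'}\Bigr)^L.
\]
We then identify the average with the PEP. By \cref{thm:pep-uniform}(i) applied on the enlarged candidate space with prior $Q_V\cdot Q_{X|V}$ and $Q_V$ uniform,
\[
\tfrac1N\sum_{v'}\PR{(m(v',\bar X,y),\bar U)\succ(m(v,x,y),u)}=\PEPU{v,x}{y}{u}.
\]
The only discrepancy is the $v'=v$ term. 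We replaced that term by $0$ in the vector, while the PEP includes the self-comparison with a fresh $\bar X\sim Q_{X|V}(\cdot\mid v)$, which is nonnegative. Hence $\tfrac1N\sum_{v'}p_{v'}\le\PEPU{v,x}{y}{u}$, and this inequality goes in the right direction. Combining with the trivial bound $1$ gives the conditional bound $\min\{1,\binom{N}{L}\PEPU{v,x}{y}{u}^L\}$. Taking expectation over $(V,X,Y,U)\sim P_V\cdot Q_{X|V}\cdot\W\times\uU$ yields \eqref{eq:jscc-rc-bin}. Here $P_V$ enters only through this outer average, as the setup intends.

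For \eqref{eq:jscc-rc-ub}, apply the standard estimate $\binom{N}{L}\le N^L/L!\le(eN/L)^L$, which follows from $L!\ge(L/e)^L$, and use monotonicity of $\min\{1,\cdot\}$.

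The main obstacle is the tail step, where we must bound the probability that at least $L$ heterogeneous Bernoullis fire by a function of their average only. I would use the subset union bound followed by Maclaurin, $e_L(p)/\binom{N}{L}\le(e_1(p)/N)^L$. One point needs care: after zero-padding the vector has $N$ entries, not $N-1$, so Maclaurin must be applied with $N$ in order to match $\binom{|\cV|}{L}$ and the uniform-$Q_V$ normalization of the PEP. The tie convention also needs checking, namely that dithered ties are counted as losses exactly as in the weak lexicographic order. This ensures $\{v\notin\cD\}\subseteq\{\sum B\ge L\}$ holds pointwise.
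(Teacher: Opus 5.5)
Your proposal is correct and follows the paper's proof essentially step for step. You condition on the transmitted $(v,x,y,u)$, apply a union bound over $L$-subsets using conditional independence of the competitor events, and bound $e_L(p)\le\binom{|\cV|}{L}\bar p^{\,L}$ (you cite Maclaurin's inequality where the paper proves the same fact via Schur-concavity in Proposition~\ref{app:prop-bp}); you then identify the uniform-$Q_V$ average with the PEP via Proposition~\ref{thm:pep-uniform} and finish with $\binom{n}{k}\le(en/k)^k$. The only cosmetic difference is the self-comparison term: the paper includes a fresh-draw event for $v'=v$ so that the average equals the PEP exactly, whereas you zero-pad and note that the omitted term is nonnegative, which yields the same bound.
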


\begin{IEEEproof}[Proof sketch]
An $L$-list error occurs iff at least $L$ of the competing symbols beat the transmitted one. Enlarging the competitor index set from $\cV\setminus\{v\}$ to $\cV$ keeps the uniform-average PEP equal to $\PEP{v,x}{y}$; the heterogeneous-Bernoulli tail bound at level $L$ then gives the $\binom{|\cV|}{L}\PEP^L$ factor pointwise, expectations factoring because, given $(v,x,y)$, the competing pairs are i.i.d.\ $Q_V\times Q_{X|V}$. Clipping at $1$ and using $\binom{n}{k}\le(en/k)^k$ yields \eqref{eq:jscc-rc-ub}; see Appendix~\ref{app:jscc} for the full argument.
\end{IEEEproof}

At $L=1$ the binomial bound \eqref{eq:jscc-rc-bin} collapses to the refined union bound $\RCUp$ of Section~\ref{sec:channel}. Unlike the channel-coding case, no exact random-coding identity is known here: heterogeneity of the per-competitor PEPs precludes the product form, and the binomial relaxation \eqref{eq:jscc-rc-ub} costs a $\log e$ slack.

\subsection{JSCC error spectrum}

The governing quantity is again the distribution of the randomized PEP.

\begin{definition}[JSCC error spectrum]\label{def:jscc-spectrum}
For a prior $Q_{X|V}$,
\begin{equation*}
\Fspecjscc{z}\triangleq\PRs{P_V\cdot Q_{X|V}\cdot\W}{-\log\PEP{V,X}{Y}\le z}.
\end{equation*}
\end{definition}

As in Section~\ref{sec:pep}, $\Fspecjscc{z}$ is continuous (\cref{lem:atomless}). The relaxed bound \eqref{eq:jscc-rc-ub} admits the spectral integral form
\begin{equation*}
\tilde P^L(R;Q_{X|V})\triangleq L\,e^{LR}\!\int_R^\infty\!\Fspecjscc{z}\,e^{-Lz}\,dz,
\end{equation*}
with the effective rate identified by $e^R=e|\cV|/L$: the source alphabet plays the codebook's role and $\log|\cV|$ the rate's, and since the source alphabet is fixed, $R$ is a bookkeeping parameter that matches the channel-coding formulas. The slope identity \eqref{eq:recap-slope} carries over as $\Fspecjscc{R}/\tilde P^L(R)=1+\dot E(R)/L$ with $-L<\dot E(R)\le0$, reducing to the channel-coding relation at $L=1$; this general-$L$ integral/exponent identity is proved as Lemma~\ref{app:lem-integral-L} in Appendix~\ref{app:jscc}. The spectrum $F_V$ is intrinsic to the source and channel; the list size acts purely as a horizontal $-\log L$ shift, so $\log$-list-size and rate are interchangeable.

\subsection{Converse and fixed-code exactness}

List decoding with dithered tie-breaking is exactly a threshold test on the PEP, which yields the JSCC/list generalization of the fixed-code exactness of Section~\ref{sec:channel}.

\begin{theorem}[Fixed-code exactness for list decoding]\label{thm:jscc-converse}
For any codebook $\cC=\BRAs{x_v:v\in\cV}$,
\begin{equation*}
P^L_e(\cC)=\PR{\PEP{V,X}{Y}\ge L/|\cV|},\qquad(V,X,Y)\sim P_V\cdot Q^{(\cC)}_{X|V}\cdot\W,
\end{equation*}
where $Q^{(\cC)}_{X|V}(\cdot\mid v)=\delta_{x_v}$ is the point-mass prior induced
by the codebook, entering \emph{both} the sampling law and the PEP's competitor
prior $Q_{V,X}=Q_V\cdot Q^{(\cC)}_{X|V}$. Consequently the optimal list-decoding
error probability under the dithered metric-$m$ decoder satisfies
\begin{equation*}
P^L_e\ \ge\ \inf_{Q_{X|V}}\Fspecjscc{\log|\cV|-\log L}.
\end{equation*}
\end{theorem}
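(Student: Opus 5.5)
The plan is to show that, for a fixed codebook, the dithered list decoder is \emph{literally} a threshold test on the PEP computed with the code-induced competitor prior. The equality then follows from the finite rank identity already used for the channel-coding converse \eqref{eq:recap-converse} and for \cref{thm:rd-converse}.

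\textbf{Step 1: the PEP as a rank among the code's pairs.} Under $Q_{V,X}=Q_V\cdot Q^{(\cC)}_{X|V}$ the competitor pair $(\bar V,\bar X)$ is uniform over the $|\cV|$ pairs $(\bar v,x_{\bar v})$. Hence, for the transmitted $(v,x_v)$, output $y$ and dither $u$, the quantity $\PEPU{v,x_v}{y}{u}=G+uH$ has a simple combinatorial reading. Here $|\cV|G$ is the number of symbols $\bar v$ with strictly larger score $m(\bar v,x_{\bar v},y)$, and $|\cV|H$ is the number with equal score, the transmitted symbol itself included. This is the self-comparison absorption noted after \cref{def:jscc-pep}.

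\textbf{Step 2: the decoder as a threshold test.} Let $g$ and $h$ denote these two counts. The transmitted symbol is outside the list $\cD(y)$ iff $g\ge L$, or $g<L<g+h$ and the random tie-break among the $h$ tied symbols fails to place $v$ within the first $L-g$ of them.

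Conditional on $(v,y)$, the event $\{\PEP{v,x_v}{y}\ge L/|\cV|\}$ has dither probability as follows:
\begin{itemize}
\item it equals $1$ when $g\ge L$;
\item it equals $0$ when $g+h\le L$;
\item otherwise it equals $\PR{U\ge (L-g)/h}=1-(L-g)/h$.
\end{itemize}
This is exactly the probability that a uniformly random rank of $v$ among the $h$ tied symbols exceeds $L-g$.

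By \cref{thm:pep-uniform}(i), the dithered lexicographic comparison of score--dither pairs induces a uniformly random ordering of the tied symbols. So the decoder's conditional miss probability equals the conditional PEP-threshold probability. This is the finite rank identity (Lemma~\ref{app:lem-rank}) read at level $L$ rather than $1$.

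\textbf{Step 3: averaging.} Averaging over $(V,Y)\sim P_V\cdot\W(\cdot\mid x_V)$ gives the stated identity for $P^L_e(\cC)$. Two points need care here. First, the sampling law uses the true $P_V$, while the competitor prior uses the uniform $Q_V$; only the competitor prior enters the counts $g,h$, so no interaction arises. Second, one should check that the operational decoder's dither for $v$ and the PEP's dither $U$ can be identified. I would note that only the conditional law of $v$'s rank given the scores matters, so a coupling argument suffices.

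\textbf{Step 4: the converse.} The right-hand side equals $\PR{-\log\PEP{V,X}{Y}\le \log|\cV|-\log L}=\Fspecjscc{\log|\cV|-\log L}$ evaluated at the prior $Q^{(\cC)}_{X|V}$. This uses the identification of the tied boundary: $\ge L/|\cV|$ corresponds to $-\log p_e\le\log(|\cV|/L)$, and by Step 2 the equality case is already accounted for. Since $Q^{(\cC)}_{X|V}$ is one admissible prior, $P^L_e(\cC)\ge\inf_{Q_{X|V}}\Fspecjscc{\log|\cV|-\log L}$. Minimizing over $\cC$ then gives the bound on the optimal dithered metric-$m$ list error.

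\textbf{Main obstacle.} The main obstacle is the tie bookkeeping in Step 2. The problem is matching the partial-tie case, where $g<L<g+h$, between the list decoder's random tie-break and the single dither $U$ in the PEP. I would handle it by conditioning on the score profile and computing both sides explicitly as $1-(L-g)/h$, as above.
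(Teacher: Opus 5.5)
Your proposal is correct and follows essentially the paper's route: at fixed $y$, conditional on the score profile, the PEP threshold event and the rank event have the same dither probability. This is the per-index rank identity of Lemma~\ref{app:lem-rank}, which is then averaged under the non-uniform law of $V$, with only the competitor prior uniform. The differences are minor: you compute the $\ge$ event directly (your case $g+h=L$ gives $\PR{U\ge 1}=0$), whereas the paper obtains the $>$ version from the lemma and promotes it to $\ge$ via \cref{lem:atomless}; and the uniform ordering of a tie block comes from the i.i.d.\ continuous dithers rather than from \cref{thm:pep-uniform}(i).
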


\begin{IEEEproof}[Proof sketch]
Fix $y$. The per-index rank identity (Lemma~\ref{app:lem-rank}) over the candidate alphabet $\cJ=\cV$ with scores $d(v)=m(v,x_v,y)$ holds for each fixed $v$ over the dithers alone, so it may be averaged under the (non-uniform) posterior of $V$ given $Y=y$ --- only the competitor draw need be uniform. The conditional list error thus equals $\PR{\PEP{V,x_V}{y}>L/|\cV|}$; identifying $X=x_V$ and averaging over $Y$, with $>$ promoted to $\ge$ by continuity (\cref{lem:atomless}), gives the equality. The infimum bound follows since the fixed code realizes the prior $Q^{(\cC)}_{X|V}$; the effective rate is lowered by $\log L$, the extra ``room'' afforded by list decoding. See Appendix~\ref{app:jscc} for the full argument.
\end{IEEEproof}

\begin{remark}[Matched-MAP optimality]\label{rem:jscc-map}
The top-$L$ rule minimizes the list error probability under the matched MAP
metric $m(v,x,y)=P_V(v)\W(y\mid x)$, or any monotone transform of it. The
bound of \cref{thm:jscc-converse} therefore lower-bounds the \emph{optimal}
list error in the matched case; for other metrics it is a converse for the
dithered metric-$m$ decoder.
\end{remark}

The achievability rate $R_{\mathrm{ach}}=1+\log|\cV|-\log L$ and the converse rate $R_{\mathrm{conv}}=\log|\cV|-\log L$ differ by exactly one nat, the $\log e$ kernel slack of the binomial relaxation.

\subsection{Meta-converse and relation to known bounds}

For the matched MAP likelihood-ratio metric $m(v,x,y)=|\cV|\,W(y\mid x)P_V(v)/Q_Y(y)$, the spectrum is a binary-hypothesis-testing functional, the JSCC form of the meta-converse identity of Section~\ref{sec:pep}.

\begin{theorem}[JSCC meta-converse]\label{thm:jscc-meta}
With $Q_{V,X}\triangleq Q_V\cdot Q_{X|V}$, $Q_V=\Unif{\cV}$,
\begin{equation*}
\Fspecjscc{z}=\sup_{Q_Y}\BETA{1-e^{-z}}{Q_{V,X}\times Q_Y}{P_V\cdot Q_{X|V}\cdot\W}.
\end{equation*}
\end{theorem}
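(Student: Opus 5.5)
The plan is to obtain \cref{thm:jscc-meta} from the matched meta-converse identity of \cref{thm:meta-converse} by the same candidate-space enlargement used throughout this section. The candidate alphabet $\cX$ becomes $\cV\times\cX$, the PEP prior $Q_X$ becomes $Q_{V,X}=Q_V\cdot Q_{X|V}$, and the joint law $Q_X\W$ becomes $P_V\cdot Q_{X|V}\cdot\W$.

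The first step is to check that the given metric is exactly the density ratio that makes the error-tail identity \eqref{eq:meta-converse-e} read as a spectrum. Take $P_Y$ to be the output reference measure appearing in the metric, i.e.\ use $m(v,x,y)=\frac{P_V(v)Q_{X|V}(x|v)\W(y|x)}{Q_V(v)Q_{X|V}(x|v)P_Y(y)}=|\cV|P_V(v)\W(y|x)/P_Y(y)$. With this choice the tilted measure satisfies $(Q_{V,X}\times P_Y)\cdot m=P_V\cdot Q_{X|V}\cdot\W$. Note that the $Q_{X|V}$ factor cancels on the support of $Q_{X|V}$, and off that support both sides vanish. The metric in the statement is written with $Q_Y$; it is a monotone transform at each fixed $y$ of any such choice, so the PEP, which depends only on the ordering at fixed $y$ (\cref{thm:pep-uniform}), is unchanged.

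The second step is to verify (A1)--(A2) for this metric. For (A2), $\Es{Q_{V,X}\times P_Y}{m}=1$, and $m<\infty$ holds $P_Y$-a.s.\ on finite alphabets. Then \eqref{eq:meta-converse-e} gives
\begin{equation*}
\Es{Q_{V,X}\times P_Y}{m\,\Ind{-\log\PEP{V,X}{Y}\le z}}=\PRs{P_V\cdot Q_{X|V}\cdot\W}{-\log\PEP{V,X}{Y}\le z}=\Fspecjscc{z},
\end{equation*}
where the first equality is the change of measure induced by the density $m$. The right side of \eqref{eq:meta-converse-e} is then $\max_{Q_Y}\BETA{1-e^{-z}}{Q_{V,X}\times Q_Y}{P_V Q_{X|V}\W}$, which is the claimed supremum and is attained.

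The main obstacle is the bookkeeping around the PEP's competitor prior. The uniform auxiliary $Q_V$ enters the competitor draw, while the true $P_V$ enters only the sampling law; I must confirm that the meta-converse identity needs only the PEP prior to equal the first factor of the product measure, which is $Q_{V,X}$ here, and the sampling law to equal the $m$-tilt. Since \cref{thm:meta-converse} is stated for arbitrary metric and prior, this holds once the cancellation above is checked. I also must confirm that the matched-case corollary is invoked with $P_V$ folded into the metric, not into the prior.
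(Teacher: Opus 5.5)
Your proposal is correct and is essentially the paper's proof: you apply the error-tail identity \eqref{eq:meta-converse-e} over the enlarged candidate space $\cV\times\cX$ with competitor prior $Q_{V,X}$, check that the metric tilt of $Q_{V,X}\times P_Y$ is $P_V\cdot Q_{X|V}\cdot\W$, and verify (A1)--(A2) via the normalization $\Es{Q_{V,X}\times P_Y}{m}=1$. The only difference is bookkeeping. The paper takes the metric's output reference to be a full-support $Q_Y$, so that $m<\infty$ everywhere, and then restricts the supremum using continuity of $\beta$; you instead decouple the reference $P_Y$ from the optimization variable and use the per-$y$ order invariance of the PEP, which removes the need for that continuity step (choosing $P_Y$ with full support also makes (A1) hold for every $y$, not only $P_Y$-a.s.).
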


\begin{IEEEproof}
For the matched metric, $(Q_V\times Q_{X|V}\times Q_Y)\cdot m=P_V\cdot Q_{X|V}\cdot\W$ and the metric is a monotone transform of $W(y\mid x)P_V(v)$, so applying the error-tail meta-converse identity \eqref{eq:meta-converse-e} of Theorem~\ref{thm:meta-converse} over the candidate space $\cV\times\cX$ gives the $\BETA{}{}{}$ form. Hypotheses (A1)--(A2) hold once $Q_Y$ has full support: then $m<\infty$ everywhere, and direct summation gives $\Es{Q_{V,X}\times Q_Y}{m}=1$. The supremum may be so restricted, since the metric-tilted measure $P_V\cdot Q_{X|V}\cdot\W$ does not depend on $Q_Y$ and $\BETA{\alpha}{P}{Q}$ is continuous in $P$ on the finite alphabet.
\end{IEEEproof}

Combined with Theorem~\ref{thm:jscc-converse}, this shows the meta-converse is tight for lossless JSCC with list decoding, extending V\'azquez-Vilar's $L=1$ exactness \cite{vazquez2016bayesian} and paralleling the minimax channel converse \cite[Thm.~27]{polyanskiy2010channel}. The achievability bound \eqref{eq:jscc-rc-ub} recovers a Feinstein/Han information-spectrum direct theorem \cite{han2003information} and the Campo \etal\ RCU bound \cite{campo2011random}; the $\log L$ rate offset reproduces the list-size tradeoffs of Merhav \cite{merhav2014list} and of Tan--Moulin \cite{tan2014second}. The setting here is lossless JSCC, in the error-probability tradition whose exponent form goes back to Csisz\'ar~\cite{csiszar1980joint}; the finite-blocklength \emph{lossy} JSCC problem is treated by Kostina and Verd\'u~\cite{kostina2013jscc}, whose bounds the present ones complement rather than subsume.

In the matched case $\Fspecjscc{z}$ is convex in $Q_{X|V}$ (\cref{thm:joint-convex} on the enlarged candidate space), so the prior-optimized converse is again a convex --- indeed linear --- program: the reverse-channel route of~\cite{elkayampep1} applies verbatim over $\cV\times\cX$, and we do not redevelop it here.

\section{Extension: Channel Coding with an Erasure Option}\label{sec:erasure}

The PEP primitive of \cref{sec:pep,sec:channel} extends without modification to
decoders that may refuse to commit. A decoder with an \emph{erasure} option
either outputs a message or declares an erasure symbol $\bot$. It can fail in
two ways: an \emph{erasure} (no commitment) and an \emph{undetected error}
(commitment to a wrong message).

The structural fact is that the \emph{achievability} side of this regime is
governed by the \emph{same} error spectrum $\Fspec{Q_X}{\cdot}$ of
\cref{sec:pep}, read at \emph{two} thresholds: a boosted rate $R+\gamma$ for the
erasure event and the operating rate $R$ for the undetected-error event, where
$\gamma\ge 0$ is a confidence margin. The single parameter $\gamma$ trades
erasure rate against undetected-error rate, and Forney's classical
exponents~\cite{forney1968exponential} arise as the asymptotic limit;
finite-blocklength random-coding union bounds with and without erasures are
studied by Haim, Kochman, and Erez~\cite{HaimKE18}. On the converse side we
establish two facts. The same spectrum, read on its \emph{upper} tail
at level $R$, yields the high-rate correct-decoding probability together with an
exact fixed-code converse of Arimoto type (\cref{subsec:era-cd}). At a fixed
code's \emph{empirical} prior, by contrast, the spectrum above the operating
rate degenerates to an affine function of the error probability, so no erasure
converse can be read off it; we record this pitfall in \cref{subsec:era-empirical}.

Throughout we reuse the recap facts
\eqref{eq:recap-rcu}--\eqref{eq:recap-converse} and
\cref{thm:pep-uniform,lem:atomless}. Short proofs are given in full; the longer
correct-decoding proofs are collected in Appendix~\ref{app:era}.

\subsection{Setup and decision rule}

Fix a channel $\W:\cX\to\cY$, a codebook $\cC\subset\cX$ of size
$M=\lceil e^{R}\rceil$, and a metric $m:\cX\times\cY\to[0,\infty]$ as in
\cref{sec:channel}. Write $X_i$ for the codeword of message $i\in[M]$ and recall
the dithered PEP $\PEPU{x}{y}{u}$ of \cref{sec:pep}.

\begin{definition}[Two-threshold erasure rule]\label{def:era-rule}
Fix a confidence margin $\gamma\ge 0$ and a reference prior $Q_X$, a parameter
of the rule against which the PEP is computed. On observing $y$, set
$C_i\triangleq-\log\PEPU{X_i}{y}{U_i}$, the PEP taken relative to $Q_X$. The
decoder $\cD$ outputs message $i$ iff
\begin{enumerate}
\item $C_i\ge R+\gamma$ \quad(\emph{absolute confidence}), and
\item $C_i\ge C_j+\gamma$ for all $j\ne i$ \quad(\emph{margin over runner-up}).
\end{enumerate}
Otherwise it outputs the erasure symbol $\bot$.
\end{definition}

For a code with equiprobable messages and transmitted index $I\sim\Unif{[M]}$,
the two failure probabilities are
\begin{align}
P_{\mathrm{era}}(\cC)&\triangleq\PR{\cD(Y)=\bot},\label{eq:era-defera}\\
P_{\mathrm{ue}}(\cC)&\triangleq\PR{\cD(Y)\ne I,\ \cD(Y)\ne\bot}.\label{eq:era-defue}
\end{align}
The first clause of \cref{def:era-rule} demands that the accepted codeword have
PEP below $e^{-(R+\gamma)}$; the second demands that no competitor come within a
factor $e^{-\gamma}$ of it. Larger $\gamma$ is more conservative: it raises
$P_{\mathrm{era}}$ and lowers $P_{\mathrm{ue}}$. At $\gamma=0$ the margin clause
selects the maximum-metric candidate; the confidence clause can still erase
(the top PEP may exceed $e^{-R}$), but under the code's empirical prior with
distinct metric scores it is automatic, and the rule coincides with the
ordinary maximum-metric decoder of \cref{sec:channel}: $P_{\mathrm{era}}=0$
and $P_{\mathrm{ue}}$ is the error probability \eqref{eq:recap-converse}. The reference prior matters: the achievability
analysis of \cref{thm:era-ach} runs the rule with the \emph{ensemble} prior
$Q_X$ from which the codebook is drawn. Running the same rule under the
\emph{empirical} prior $Q_X^{\cC}$ of the code at hand is a different decoder
for the same codebook --- and, as \cref{subsec:era-empirical} shows, a degenerate
one.

\subsection{Achievability}

Both failure modes are read off the recap functional
\begin{equation}\label{eq:era-rcu}
\tilde P_e(R;Q_X)\triangleq\Es{X,Y,U}{\min\BRA{1,e^{R}\PEPU{X}{Y}{U}}}
=e^{R}\!\int_R^\infty e^{-z}\,\Fspec{Q_X}{z}\,dz
\end{equation}
of \eqref{eq:recap-rcu}, evaluated at the two thresholds $R+\gamma$ and $R$.

\begin{theorem}[Achievability with erasure option]\label{thm:era-ach}
For any prior $Q_X$ and margin $\gamma\ge 0$, the i.i.d.\ $Q_X$ random-coding
ensemble at rate $R$ satisfies
\begin{align}
\Es{\cC}{P_{\mathrm{era}}(\cC)}&\le\tilde P_e(R+\gamma;Q_X),\label{eq:era-bound-era}\\
\Es{\cC}{P_{\mathrm{ue}}(\cC)}&\le e^{-\gamma}\,\tilde P_e(R;Q_X).\label{eq:era-bound-ue}
\end{align}
\end{theorem}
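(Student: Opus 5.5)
The plan is to condition on the transmitted triple and use two facts: the competing codewords are i.i.d.\ and independent of the channel output, and their PEPs are uniform by \cref{thm:pep-uniform}(iii). Fix the transmitted index $I$ and write $X=X_I$, $U=U_I$, and $p\triangleq\PEPU{X}{Y}{U}$, all computed relative to the ensemble prior $Q_X$. For $j\ne I$ write $p_j\triangleq\PEPU{X_j}{Y}{U_j}$, so that $C_j=-\log p_j$. Conditional on $(X,Y,U)=(x,y,u)$, the pairs $(X_j,U_j)$, $j\ne I$, are i.i.d.\ $Q_X\times\uU$ and independent of $y$. By \cref{thm:pep-uniform}(iii) at this fixed $y$, each $p_j$ is therefore $\uU$, so $\PR{p_j\le t}=t$ for $t\in[0,1]$. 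I will bound each failure event pointwise by a union bound over the $M-1\le e^{R}$ competitors, using $M=\lceil e^{R}\rceil$, and then take the expectation over $(X,Y,U)\sim Q_X\cdot\W\times\uU$.

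\emph{Erasure.} If the transmitted codeword satisfies both clauses of \cref{def:era-rule}, the decoder outputs $I$. Output is unique, because the margin clause with $\gamma\ge0$ cannot hold for two indices at once, except possibly in a tie at $\gamma=0$, which is a null event. So an erasure requires either $p>e^{-(R+\gamma)}$ or the existence of some $j\ne I$ with $C_j>C_I-\gamma$, that is, $p_j<e^{\gamma}p$. When $p>e^{-(R+\gamma)}$ I bound the conditional probability by $1$. Otherwise the union bound gives at most $(M-1)e^{\gamma}p\le e^{R+\gamma}p$. In both cases the conditional erasure probability is at most $\min\BRA{1,e^{R+\gamma}p}$. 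Averaging this bound gives $\tilde P_e(R+\gamma;Q_X)$ by \eqref{eq:era-rcu}, which is \eqref{eq:era-bound-era}.

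\emph{Undetected error.} An undetected error means some $j\ne I$ is output. That requires $C_j\ge R+\gamma$ and $C_j\ge C_I+\gamma$, i.e.\ $p_j\le\min\BRA{e^{-(R+\gamma)},e^{-\gamma}p}$. By uniformity each competitor does this with conditional probability exactly $\min\BRA{e^{-(R+\gamma)},e^{-\gamma}p}$. The union over the $M-1\le e^{R}$ competitors then gives
\begin{equation*}
e^{R}\min\BRA{e^{-(R+\gamma)},e^{-\gamma}p}=e^{-\gamma}\min\BRA{1,e^{R}p}.
\end{equation*}
Taking the expectation yields $e^{-\gamma}\tilde P_e(R;Q_X)$, which is \eqref{eq:era-bound-ue}.

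The main point needing care is the conditioning step. I must verify that, given $(X,Y,U)$, each competitor's PEP is exactly uniform at the fixed $y$, even though $Y$ depends on $X$. This holds because the competitors are drawn independently of $X$, of the transmitted dither, and of the channel noise, so \cref{thm:pep-uniform}(iii) applies at fixed $y$. Uniformity also gives continuity of the competitor law, so the distinction between strict and weak inequalities in the thresholds, and any ties, carry zero probability and do not affect the bounds.
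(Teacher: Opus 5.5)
Your proof is correct and follows the paper's argument. You condition on the transmitted triple and use that, at fixed $y$, the competitor PEPs are i.i.d.\ uniform under the ensemble prior $Q_X$; you then union-bound over the $M-1\le e^{R}$ competitors and average. The only difference is cosmetic: you keep the conditional bounds pointwise as $\min(1,e^{R+\gamma}p)$ and $e^{-\gamma}\min(1,e^{R}p)$ and land directly on the expectation form of \eqref{eq:era-rcu}, whereas the paper conditions on $Z$, integrates against $d\Fspec{Q_X}{z}$, and closes by integration by parts.
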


\begin{IEEEproof}[Proof sketch]
Condition on the transmitted index; by \cref{thm:pep-uniform} the competitor
scores $C_i=-\log\PEPU{X_i}{Y}{U_i}$ are conditionally i.i.d.\ with the exact
tail $\PR{C_i\ge c}=e^{-c}$. An erasure requires the transmitted score $Z$ to
miss the confidence bar $R+\gamma$ or a competitor within the margin $\gamma$;
an undetected error requires a competitor clearing the \emph{larger} of the
confidence bar $R+\gamma$ and the margin bar $Z+\gamma$ --- so the union
bound reads the spectrum at $R+\gamma$ in the first case and at $R$,
suppressed by $e^{-\gamma}$, in the second (the split is at $Z=R$, not
$R+\gamma$). Integration by parts as in \eqref{eq:era-rcu} closes both
bounds; the full computation is in Appendix~\ref{app:era}.
\end{IEEEproof}

\Cref{thm:era-ach} bounds the two failure modes only \emph{on average} over the
ensemble; a code that is good for one average need not be good for the other. A
Markov argument on a weighted sum extracts a single code satisfying both bounds
simultaneously, at the price of a constant factor.

\begin{corollary}[A single code satisfying both bounds]\label{cor:era-single}
For any prior $Q_X$ and margin $\gamma\ge 0$, there exists a deterministic code
$\cC$ of rate $R$ such that
\begin{equation*}
P_{\mathrm{era}}(\cC)\le 2\,\tilde P_e(R+\gamma;Q_X)
\qquad\text{and}\qquad
P_{\mathrm{ue}}(\cC)\le 2\,e^{-\gamma}\,\tilde P_e(R;Q_X).
\end{equation*}
More generally, for any $a\in(0,1)$ the two factors of $2$ may be replaced by
$1/a$ and $1/(1-a)$ respectively, trading tightness in one failure mode against
the other.
\end{corollary}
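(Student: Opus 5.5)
We derive the corollary from \cref{thm:era-ach} with a Markov-type argument applied to a single nonnegative combination of the two failure probabilities. Abbreviate $A\triangleq\tilde P_e(R+\gamma;Q_X)$ and $B\triangleq e^{-\gamma}\tilde P_e(R;Q_X)$. If $A=0$ or $B=0$ we treat that term separately (see the last paragraph); otherwise both are positive. For a codebook $\cC$ drawn from the i.i.d.\ $Q_X$ ensemble, and with the decoder of \cref{def:era-rule} run against the ensemble prior $Q_X$, consider the random variable
\begin{equation*}
S(\cC)\triangleq a\,\frac{P_{\mathrm{era}}(\cC)}{A}+(1-a)\,\frac{P_{\mathrm{ue}}(\cC)}{B},\qquad a\in(0,1).
\end{equation*}

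By linearity of expectation and the two bounds \eqref{eq:era-bound-era}--\eqref{eq:era-bound-ue},
\begin{equation*}
\Es{\cC}{S(\cC)}\le a+(1-a)=1.
\end{equation*}
Hence some realization $\cC^\star$ of the ensemble satisfies $S(\cC^\star)\le1$. Since both summands of $S$ are nonnegative, each is individually at most $1$. This gives $P_{\mathrm{era}}(\cC^\star)\le A/a$ and $P_{\mathrm{ue}}(\cC^\star)\le B/(1-a)$, which is the general statement. Setting $a=1/2$ yields the two factors of $2$.

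The fixed code $\cC^\star$ retains the same decoder, so its failure probabilities are exactly the quantities averaged in \cref{thm:era-ach}. No change of the reference prior to the empirical one occurs, which matters because of the degeneracy flagged in \cref{subsec:era-empirical}.

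The only delicate point is the degenerate case, since division by zero is not allowed. If $A=0$, then $P_{\mathrm{era}}(\cC)=0$ almost surely over the ensemble, because it is a nonnegative random variable with zero mean. We then apply the plain averaging argument to $P_{\mathrm{ue}}$ alone. The case $B=0$ is symmetric. If $A=B=0$, any realization in the probability-one set works.

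Codebooks with coinciding codewords are already part of the i.i.d.\ ensemble and are covered by the bounds, so no expurgation step is needed. I expect no genuine obstacle here: the argument reduces entirely to the expectation bounds already established.
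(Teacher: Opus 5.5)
Your proposal is correct and follows the paper's proof essentially step for step. Both arguments take the normalized weighted sum $a\,P_{\mathrm{era}}(\cC)/A+(1-a)\,P_{\mathrm{ue}}(\cC)/B$, bound its ensemble mean by $1$ using the theorem, and pick a realization where it is at most $1$; the degenerate case where $A$ or $B$ vanishes is handled the same way, by restricting to the almost-sure event and averaging the remaining failure probability.
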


\begin{IEEEproof}
Write $B_1\triangleq\tilde P_e(R+\gamma;Q_X)$ and
$B_2\triangleq e^{-\gamma}\tilde P_e(R;Q_X)$, and suppose first $B_1,B_2>0$. For
the weighted sum
$T(\cC)\triangleq a\,P_{\mathrm{era}}(\cC)/B_1+(1-a)\,P_{\mathrm{ue}}(\cC)/B_2$,
\cref{thm:era-ach} gives $\Es{\cC}{T(\cC)}\le a+(1-a)=1$, so some code in the
ensemble has $T(\cC)\le1$; both terms being non-negative, that code satisfies
$P_{\mathrm{era}}(\cC)\le B_1/a$ and $P_{\mathrm{ue}}(\cC)\le B_2/(1-a)$, and
$a=1/2$ gives the displayed form. If instead $B_1=0$ (resp.\ $B_2=0$), then
$P_{\mathrm{era}}(\cC)=0$ (resp.\ $P_{\mathrm{ue}}(\cC)=0$) for almost every code
in the ensemble; restricting to that almost-sure event, some code has the other
failure probability at most its ensemble average, the minimum being at most the
mean.
\end{IEEEproof}

\begin{remark}[The margin as a single knob]\label{rem:era-knob}
The bounds \eqref{eq:era-bound-era}--\eqref{eq:era-bound-ue} have a transparent
reading: the undetected-error probability is the ordinary $\RCUp$ bound
suppressed by the multiplicative factor $e^{-\gamma}$, while the erasure
probability is the same bound at the boosted rate $R+\gamma$. Thus $\gamma$ is a
single tuning knob trading throughput against reliability, and both failure
modes are kernel/threshold readings of the one spectrum $\Fspec{Q_X}{\cdot}$, as
in \cref{sec:channel}. Whenever the channel is Gallager-symmetric, the
uniform-prior optimality recalled in \cref{sec:prioropt} (from
\cite{elkayampep1}) applies at \emph{both} thresholds simultaneously, so
prior selection for the erasure problem is trivial.
\end{remark}

\subsection{Fixed codes: the empirical-prior pitfall}\label{subsec:era-empirical}

The decision rule of \cref{def:era-rule} is, for each $y$, a pair of thresholds
on the rank statistic $\PEPU{\cdot}{y}{\cdot}$, so for any fixed code its
failure probabilities are exact spectrum readings, in the spirit of
\eqref{eq:recap-converse}. One might hope to turn this exactness into an
erasure converse by reading the empirical-prior spectrum at the boosted rate
$R+\gamma$. That hope is empty. Under the empirical prior every competitor's
PEP is at least $1/M$, and a short computation (Appendix~\ref{app:era})
shows that for every
$\gamma\ge0$ the boosted-rate spectrum is an affine function of the error
probability alone (converse-natural size $M=e^{R}$, \cref{rem:size-conv}),
$\Fspec{Q_X^{\cC}}{R+\gamma}=1-e^{-\gamma}\bigl(1-P_e(\cC)\bigr)$;
equivalently, the two-threshold rule commits with the data-independent
probability $e^{-\gamma}$ (up to tie-block corrections), so that
$P_{\mathrm{era}}(\cC)=1-e^{-\gamma}$ and
$P_{\mathrm{ue}}(\cC)=e^{-\gamma}P_e(\cC)$. The empirical-prior spectrum above
the operating rate is a dither artifact, which is why the achievability
analysis of \cref{thm:era-ach} runs the rule at the \emph{ensemble} prior,
where the spectrum is informative. The only fixed-code converse content
available at the empirical prior is the channel-coding identity
\eqref{eq:recap-converse} at rate $R$; a converse for the
erasure/undetected-error trade-off over all decision rules remains open to our
knowledge, even at the exponent level.

\subsection{High-rate correct decoding and a strong converse}\label{subsec:era-cd}

Error analysis reads the \emph{lower} tail of the PEP spectrum. The
complementary question---how likely is \emph{correct} decoding?---reads the
\emph{upper} tail, and is the regime where union-bound techniques are
ineffective: at high rates the error probability approaches $1$ while the
correct-decoding probability still carries exponential structure. We place the
two on the same footing.

Recall the exact random-coding identity \eqref{eq:recap-exact-rc};
working under the achievability-natural convention $M-1=e^{R}$
(\cref{rem:size-conv}), the correct-decoding probability is its complement,
\begin{equation}\label{eq:era-pc-exact}
\bar P_c(R;Q_X)\triangleq 1-\bar P_e(R;Q_X)
=\Es{X,Y,U}{\BRA{1-\PEPU{X}{Y}{U}}^{M-1}}.
\end{equation}
The governing object is now the upper tail of the PEP.

\begin{definition}[Correct-decoding spectrum]\label{def:era-cd-spectrum}
The \emph{upper-tail spectrum} is
\begin{equation}\label{eq:era-fc}
F_c(Q_X;z)\triangleq\PR{-\log\PEPU{X}{Y}{U}\ge z},
\qquad(X,Y,U)\sim Q_X\cdot\W\times\uU.
\end{equation}
\end{definition}

The upper-tail spectrum equals $1-\Fspec{Q_X}{z}$ at every continuity point of
the spectrum \eqref{eq:def-spectrum}; both tails are defined with closed
inequalities, so in general they differ by the atom mass at $z$, and under the
atomlessness of \cref{lem:atomless} the identity holds for every $z$.
Written against this spectrum, the exact identity \eqref{eq:era-pc-exact}
becomes an integral against a kernel concentrating near $R$.

\begin{theorem}[Integral form and high-rate localization]\label{thm:era-cd-ach}
With $g(z)\triangleq(1-e^{-z})^{M-1}$ and $M-1=e^{R}$,
\begin{equation}\label{eq:era-pc-integral}
\bar P_c(R;Q_X)=\int_0^\infty F_c(Q_X;z)\,g'(z)\,dz,\qquad
g'(z)=(M-1)(1-e^{-z})^{M-2}e^{-z},
\end{equation}
and for every $\gamma>0$,
\begin{equation}\label{eq:era-pc-localized}
(1-e^{-\gamma})\,F_c(Q_X;R+\gamma)\ \le\ \bar P_c(R;Q_X)\ \le\
F_c(Q_X;R-\gamma)+e^{-e^{\gamma}}.
\end{equation}
In particular, taking $\gamma=\log R$ for $R>1$,
\begin{equation}\label{eq:era-pc-tuned}
\BRA{1-\tfrac1R}F_c(Q_X;R+\log R)\ \le\ \bar P_c(R;Q_X)\ \le\
F_c(Q_X;R-\log R)+e^{-R}.
\end{equation}
\end{theorem}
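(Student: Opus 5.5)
The plan is to write the exact identity \eqref{eq:era-pc-exact} as the expectation of a monotone function of the score $Z\triangleq-\log\PEPU{X}{Y}{U}$, and then read off both the integral form and the localization bounds from that one function. Since $\PEPU{X}{Y}{U}=e^{-Z}$, we have $\BRA{1-\PEPU{X}{Y}{U}}^{M-1}=g(Z)$ with $g(z)=(1-e^{-z})^{M-1}$. On $[0,\infty]$ the function $g$ is continuous and non-decreasing, with $g(0)=0$ and $g(\infty)=1$.

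\emph{Integral form.} Apply the layer-cake (Fubini) representation $g(Z)=\int_0^\infty \Ind{Z\ge z}\,g'(z)\,dz$. This holds pointwise, including $Z=\infty$ (where $\PEPU{}{}{}=0$ and $g=1$) and $Z=0$ (where $g=0$). Taking expectations over $(X,Y,U)\sim Q_X\cdot\W\times\uU$ and exchanging the order of integration by Tonelli, since the integrand is non-negative, gives $\bar P_c=\int_0^\infty \PR{Z\ge z}\,g'(z)\,dz=\int_0^\infty F_c(Q_X;z)\,g'(z)\,dz$. This is \eqref{eq:era-pc-integral}, and the displayed $g'$ is a direct differentiation.

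\emph{Localization.} I would avoid the integral and use two pointwise sandwiches on $g(Z)$.
\begin{itemize}
\item Lower bound: by monotonicity, $g(Z)\ge g(R+\gamma)\,\Ind{Z\ge R+\gamma}$. Bernoulli's inequality with $M-1=e^{R}$ gives
\[
g(R+\gamma)=\bigl(1-e^{-R-\gamma}\bigr)^{e^{R}}\ge 1-e^{R}e^{-R-\gamma}=1-e^{-\gamma}.
\]
Taking expectations yields the left side of \eqref{eq:era-pc-localized}.
\item Upper bound: since $g\le1$ and $g$ is monotone, $g(Z)\le \Ind{Z\ge R-\gamma}+g(R-\gamma)\Ind{Z<R-\gamma}$. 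Using $1-t\le e^{-t}$,
\[
g(R-\gamma)=\bigl(1-e^{\gamma}/e^{R}\bigr)^{e^{R}}\le\exp(-e^{\gamma}).
\]
Taking expectations gives the right side.
\item Degenerate case: if $\gamma\ge R$, then $F_c(Q_X;R-\gamma)=1$ (because $Z\ge0$), so the upper bound is trivial.
\end{itemize}
The tuned form \eqref{eq:era-pc-tuned} is then the substitution $\gamma=\log R$, for which $e^{-\gamma}=1/R$ and $e^{-e^{\gamma}}=e^{-R}$.

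\emph{Main obstacle.} The analysis is essentially routine. The only step needing care is the bookkeeping that makes the two elementary inequalities exact: under the achievability-natural convention $M-1=e^{R}$ of \cref{rem:size-conv}, the exponent is exactly $e^{R}$. The layer-cake step must also handle possible atoms and the value $Z=\infty$. Because $F_c$ is defined with the closed inequality $\ge$, no continuity assumption (\cref{lem:atomless}) is needed in any of the steps above.
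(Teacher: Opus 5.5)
Your proposal is correct and follows the paper's proof essentially step for step. You use the same layer-cake/Tonelli derivation of the integral form: you integrate $g'$ against $\Ind{Z\ge z}$ directly, while the paper uses $\Ind{z<Z}$ and then replaces $\PR{Z>z}$ by $F_c(Q_X;z)$ off the Lebesgue-null set of atoms. Your localization is also the same: after taking expectations, your pointwise bounds on $g(Z)$ give exactly $g(R+\gamma)F_c(Q_X;R+\gamma)$ and $g(R-\gamma)+F_c(Q_X;R-\gamma)-g(R-\gamma)F_c(Q_X;R-\gamma)$, which the paper obtains by splitting the integral at $R\pm\gamma$, and both then apply the same Bernoulli and $(1-u)^t\le e^{-tu}$ estimates. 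As in the paper, the Bernoulli step tacitly needs $t=e^{R}\ge1$, i.e.\ at least one competitor, which is the regime the statement intends.
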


\begin{IEEEproof}[Proof sketch]
Substituting $Z=-\log\PEPU{X}{Y}{U}$ in \eqref{eq:era-pc-exact} gives
$\bar P_c=\E{g(Z)}$; since $g(0)=0$ and $g(\infty)=1$, a layer-cake argument
transfers the differential onto the kernel, yielding
\eqref{eq:era-pc-integral}. The kernel $g'$ concentrates near
$z\approx\log(M-1)=R$, so splitting the integral at $R\pm\gamma$ and using
monotonicity of $F_c$ gives \eqref{eq:era-pc-localized}: the multiplicative
shrinkage $(1-e^{-\gamma})$ on the lower side is Bernoulli's inequality, and the
doubly-exponential tail $e^{-e^{\gamma}}$ on the upper side comes from
$(1-u)^{M-1}\le e^{-(M-1)u}$ applied on the small-$z$ piece with
$u=e^{-(R-\gamma)}$, so that $(M-1)u=e^{\gamma}$. Then $1-e^{-\log R}=1-1/R$
and $e^{-e^{\log R}}=e^{-R}$ give \eqref{eq:era-pc-tuned}. Full proofs in
Appendix~\ref{app:era}.
\end{IEEEproof}

The additive $e^{-R}$ is negligible whenever $F_c(Q_X;R+\log R)$ is not
exponentially small, so $\bar P_c$ is pinned to the upper-tail spectrum within a
factor $1-1/R\approx 1$. Unlike error analysis, the upper tail is not subject to
the straight-line phenomenon~\cite{shannon1967lower}.

The converse for correct decoding is again \emph{exact}, mirroring the
fixed-code exactness of \eqref{eq:recap-converse}.

\begin{theorem}[Fixed-code correct-decoding converse]\label{thm:era-cd-converse}
For any codebook $\cC$ of size $M$ with empirical prior $Q_X^{\cC}$ and
$R=\log M$,
\begin{equation}\label{eq:era-pc-converse}
P_c(\cC)=F_c(Q_X^{\cC};R),\qquad\text{hence}\qquad
P_c(R)\le\sup_{Q_X}F_c(Q_X;R),
\end{equation}
where $P_c(R)\triangleq\sup_{\cC:\,|\cC|=M}P_c(\cC)$ is the best
correct-decoding probability among size-$M$ codes.
\end{theorem}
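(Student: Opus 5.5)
The plan is to take the exact fixed-code converse \eqref{eq:recap-converse} and complement it. For a fixed codebook $\cC$ of size $M$ with empirical prior $Q_X^{\cC}$ and the dithered maximum-metric decoder, \eqref{eq:recap-converse} gives $P_e(\cC)=\PR{\PEP{X}{Y}\ge 1/M}$ with $(X,Y)\sim Q_X^{\cC}\cdot\W$. Complementing, $P_c(\cC)=\PR{\PEP{X}{Y}<1/M}=\PR{-\log\PEP{X}{Y}>R}$ with $R=\log M$. This differs from $F_c(Q_X^{\cC};R)=\PR{-\log\PEP{X}{Y}\ge R}$ only by the mass of the atom $\{\PEP{X}{Y}=1/M\}$.

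To close the gap I would invoke \cref{lem:atomless} with $Q_X^{\mathrm{prob}}=Q_X^{\mathrm{pep}}=Q_X^{\cC}$. On finite alphabets the absolute-continuity hypothesis is automatic, so the PEP under $Q_X^{\cC}\cdot\W$ has no atoms. Hence the strict and weak inequalities coincide, which gives $P_c(\cC)=F_c(Q_X^{\cC};R)$ exactly. This is the same promotion of $>$ to $\ge$ used in the proof of \cref{thm:jscc-converse}.

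As a cross-check that does not route through the error identity, I would rederive the claim directly from the finite rank identity (Lemma~\ref{app:lem-rank}). Fix $y$ and let the competitor law be uniform on the $M$ codewords. Each codeword's PEP is then $G+uH$ with $G,H$ multiples of $1/M$. The transmitted codeword wins the dithered comparison iff it is top in the lexicographic order, which holds iff its dithered PEP is $<1/M$. Averaging over the uniform message index and over $Y$ reproduces the identity.

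The second claim is immediate. Every size-$M$ code realizes some prior, namely its empirical one, so $P_c(\cC)=F_c(Q_X^{\cC};R)\le\sup_{Q_X}F_c(Q_X;R)$, and taking the supremum over codes gives the bound on $P_c(R)$. If $P_c(R)$ is meant as the optimum over all decoders, I would add the matched-case remark after \eqref{eq:recap-converse}: the dithered matched decoder is MAP-optimal, and so attains the best correct probability for each codebook.

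The main obstacle is the boundary term at $\PEP{X}{Y}=1/M$. If one distrusted \cref{lem:atomless} in the empirical-prior case, I would compute this directly. Given $y$, let the tie block of the transmitted codeword have size $k$ and let $G$ be the mass strictly above it. The PEP is $G+Uk/M$, which equals $1/M$ only on the null event $\{U=(1/M-G)M/k\}$. So the atom has probability zero regardless, and the identity holds with either inequality.
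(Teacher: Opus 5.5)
Your proposal is correct and is essentially the paper's argument. The paper reads $P_c(\cC)=\PR{\PEP{X}{Y}\le 1/M}=F_c(Q_X^{\cC};R)$ directly off the per-index rank identity (Lemma~\ref{app:lem-rank} at $L=1$, i.e.\ your cross-check), which yields the closed inequality outright. Your main route instead complements \eqref{eq:recap-converse} and removes the atom at $1/M$ via \cref{lem:atomless} or your direct null-event computation; the supremum step is identical.

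One wording caveat: ``top in the lexicographic order'' and ``dithered PEP below $1/M$'' are not the same event pathwise, since they differ inside a tie block. They are only equiprobable over the dithers, which is precisely what Lemma~\ref{app:lem-rank} supplies.
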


\begin{IEEEproof}
By the rank interpretation (\cref{thm:pep-uniform}), the maximum-metric decoder
is correct iff the transmitted codeword has rank $1$, i.e.\
$\PEPU{X}{Y}{U}\le 1/M=e^{-R}$. Taking complements,
$P_c(\cC)=\PR{-\log\PEPU{X}{Y}{U}\ge R}=F_c(Q_X^{\cC};R)$. The supremum over the
induced prior gives the code-independent bound.
\end{IEEEproof}

Combining \eqref{eq:era-pc-tuned} with \eqref{eq:era-pc-converse} sandwiches the
optimal correct-decoding probability,
\begin{equation}\label{eq:era-pc-sandwich}
\BRA{1-\tfrac1R}\sup_{Q_X}F_c(Q_X;R+\log R)\ \le\ P_c(R)\ \le\
\sup_{Q_X}F_c(Q_X;R),
\end{equation}
so the prior-optimized upper-tail spectrum near $R$ fully determines high-rate
performance. As in \cref{sec:channel}, this spectrum admits a reverse-channel
characterization in the matched case.

\begin{theorem}[Variational form, matched case]\label{thm:era-cd-meta}
Let $P_Y(y)=\sum_x Q_X(x)\W(y\mid x)$. For the matched metric (rank-equivalent
to $\log\W(y\mid x)$) and every $z>0$,
\begin{equation}\label{eq:era-cd-meta}
F_c(Q_X;z)=e^{-z}\sup_{D_\infty(W^*\|Q_X\mid P_Y)\le z}
\sum_{x\in\cX}\ \sum_{y\in\mathrm{supp}(P_Y)}W^*(x\mid y)\,\W(y\mid x),
\end{equation}
the supremum over reverse channels $W^*:\cY\to\cX$ obeying the divergence cap.
\end{theorem}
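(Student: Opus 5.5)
The plan is to recognise the statement as the reverse-channel identity \eqref{eq:rc-sup} of \cref{thm:reverse-channel}, applied to the likelihood-ratio metric with output measure $Q_Y=P_Y$, after a change of measure. Since the matched metric is only fixed up to rank equivalence, and the PEP depends on the metric only through its order at each fixed $y$ (\cref{thm:pep-uniform}(i)), I will work with the representative
\[
m(x,y)=\frac{\W(y\mid x)}{P_Y(y)},\qquad y\in\mathrm{supp}(P_Y).
\]
For fixed $y$ this is a positive rescaling of $\W(y\mid\cdot)$, so it induces the same dithered PEP $\PEPU{x}{y}{u}$ as any metric rank-equivalent to $\log\W(y\mid x)$.

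\textbf{Step 1 (change of measure).} Outputs with $P_Y(y)=0$ carry no mass under $Q_X\cdot\W$, so the event in \eqref{eq:era-fc} may be restricted to $\mathrm{supp}(P_Y)$. Because $Q_X(x)\W(y\mid x)=Q_X(x)P_Y(y)\,m(x,y)$, and because the event $\{-\log\PEPU{X}{Y}{U}\ge z\}$ equals $\{\PEPU{X}{Y}{U}\le e^{-z}\}$, we get
\[
F_c(Q_X;z)=\Es{Q_X\times P_Y}{m\,\Ind{\PEP{X}{Y}\le e^{-z}}}.
\]
Both sides use closed inequalities, which matches the tail convention of \eqref{eq:rc-sup} exactly, so no atom correction is needed. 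Assumptions (A1)--(A2) hold: $m$ is finite on the support, and $\Es{Q_X\times P_Y}{m}=1$.

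\textbf{Step 2 (invoke \eqref{eq:rc-sup}).} With $Q_Y=P_Y$, the right-hand side above equals
\[
\sup_{D_\infty(W^*\|Q_X\mid P_Y)\le z}e^{-z}\,\Es{W^*\cdot P_Y}{m}.
\]
Expanding the expectation gives
\[
e^{-z}\sum_{y\in\mathrm{supp}(P_Y)}P_Y(y)\sum_x W^*(x\mid y)\,\frac{\W(y\mid x)}{P_Y(y)}.
\]
The factor $P_Y(y)$ cancels, which yields \eqref{eq:era-cd-meta}.

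\textbf{Main obstacle: well-definedness.} The delicate point is that $m$ and the constraint are only meaningful on $\mathrm{supp}(P_Y)$, while the supremum in \eqref{eq:era-cd-meta} is over reverse channels defined on all of $\cY$. Off the support, $W^*$ affects neither the objective nor (after weighting by $P_Y$) the $D_\infty$ cap, so the restriction is harmless. I would make this explicit.

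\textbf{Self-contained check.} If I did not want to rely on the black-box identity, I would verify the formula per $y$ directly. For fixed $y$, maximising $\sum_x W^*(x\mid y)\W(y\mid x)$ subject to $0\le W^*(x\mid y)\le e^{z}Q_X(x)$ and $\sum_x W^*(x\mid y)=1$ is a fractional knapsack problem. Its greedy optimum fills the highest-likelihood $x$ up to the cap until the unit of mass is exhausted, splitting the boundary tie block fractionally. By \cref{thm:pep-uniform}(i),(iii), the set $\{\PEP{\cdot}{y}\le e^{-z}\}$ has $Q_X\times\uU$-mass exactly $e^{-z}$ and consists of the top-ranked candidates. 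Hence the choice
\[
W^*(x\mid y)=e^{z}Q_X(x)\,\PRs{U}{\PEPU{x}{y}{U}\le e^{-z}}
\]
is exactly the greedy solution, and it attains $e^{z}\sum_x Q_X(x)\W(y\mid x)\PRs{U}{\cdot}$. Averaging over $y$ recovers Step 1, which confirms the identity.
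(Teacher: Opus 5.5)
Your proposal is correct and follows the paper's proof. It uses the same representative $m(x,y)=\W(y\mid x)/P_Y(y)$, the same change of measure to $Q_X\times P_Y$, and the same invocation of \eqref{eq:rc-sup} with $Q_Y=P_Y$ followed by cancellation of $P_Y$, and it handles the support restriction and integrability as the paper does. Your extra fractional-knapsack check is a valid direct verification of the per-$y$ identity, since objective and cap decouple across $y$; the paper does not include it, and it makes the argument independent of the black-box reverse-channel identity.
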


\begin{IEEEproof}[Proof sketch]
This is the upper-tail instance of the reverse-channel identity
\eqref{eq:rc-sup} at the matched metric $m(x,y)=\W(y\mid x)/P_Y(y)$: the $P_Y$
factor cancels against the reverse-channel measure $P_Y\cdot W^*$, leaving the
stated correlation form. The full verification is in Appendix~\ref{app:era}.
\end{IEEEproof}

For $z\le0$ the identity is trivial: $F_c(Q_X;z)=1$, since the PEP lies in
$[0,1]$.

\begin{remark}[Arimoto's strong-converse exponent]\label{rem:era-arimoto}
Specializing \eqref{eq:era-cd-meta} to the matched-ML metric and applying the
data-processing inequality for R\'enyi divergence recovers Arimoto's
strong-converse bound~\cite{arimoto1973converse}; the non-asymptotic form of
Polyanskiy and Verd\'u~\cite{polyanskiy2010arimoto} states that any
$(M,\varepsilon)$ code satisfies
$d_\lambda\!\BRA{1-\varepsilon\,\|\,1/M}\le K_\lambda(X;Y)$ for $\lambda>0$,
$\lambda\ne1$, with $K_\lambda$ Sibson's information radius. Taking
$\lambda=1/(1+\rho)$ with $\rho\in(-1,0)$ yields
$P_c(\cC)\le\exp\{\rho R-E_0(\rho,Q_X)\}$ for a size-$e^{R}$ code with empirical
prior $Q_X$ --- decreasing in $R$, as a strong converse must be. The upper-tail
variational form \eqref{eq:era-cd-meta} reaches the same conclusion through the
framework's reverse-channel identity.
\end{remark}

\section{Extension: The Two-User Multiple-Access Channel}\label{sec:network}

The previous extensions enlarged the candidate space but kept a single
transmitter. The two-user multiple-access channel (MAC) is the boundary case of
the framework: one of the two structural themes of~\cite{elkayampep1} survives intact, the other
breaks. \emph{Comparability} survives --- both achievability and converse are
read off PEP spectra, a sum-spectrum for achievability and a max-spectrum for the
exact fixed-code converse, exactly as in \cref{sec:channel}. \emph{Joint
convexity} (\cref{thm:joint-convex}) does not: the two encoders choose their
priors independently, and the joint error event couples them \emph{bilinearly},
so the single linear program of \cref{sec:prioropt} does not extend. We make both
statements precise and explain the obstruction.

As in \cref{sec:jscc} the
construction reuses the uniformity of \cref{thm:pep-uniform} and the fixed-code
identity behind \cref{eq:recap-converse} verbatim, instantiated now on three
candidate spaces at once.

\subsection{Setup}

Fix correlated finite sources $(V_1,V_2)\sim P_{V_1,V_2}$ on $\cV_1\times\cV_2$, a
MAC $W_{Y\mid X_1,X_2}$ with inputs $\cX_1,\cX_2$ and output $\cY$, and a decoding
metric $m:\cV_1\times\cX_1\times\cV_2\times\cX_2\times\cY\to[0,\infty)$. The two
encoders draw codewords independently: $X_1(v_1)\sim Q_{X_1\mid V_1}(\cdot\mid
v_1)$ and $X_2(v_2)\sim Q_{X_2\mid V_2}(\cdot\mid v_2)$. On source pair
$(V_1,V_2)$, terminal $i$ transmits $X_i(V_i)$ and the receiver observes
$Y\sim W(\cdot\mid X_1,X_2)$. The joint decoder attaches independent dithers
$U_{v_1,v_2}\sim\uU$, one per candidate pair, and outputs the pair maximal in the
weak lexicographic order on $\BRA{m(v_1,X_1(v_1),v_2,X_2(v_2),y),\,U_{v_1,v_2}}$,
as in \cref{sec:pep}. The unique-decoding error probability of a (possibly random)
code $\cC$ is $P_e(\cC)=\PR{(\hat V_1,\hat V_2)\neq(V_1,V_2)}$, and
$\bar P_e=\Es{\cC}{P_e(\cC)}$ for the random-coding ensemble.

Specializing recovers the classical settings: the identity channel
$Y=(X_1,X_2)$ gives distributed (Slepian--Wolf) source
coding~\cite{slepian1973noiseless,chen2020lossless}; independent uniform sources
on $[M_1]\times[M_2]$ give MAC channel coding at rates $(\log M_1,\log
M_2)$~\cite{ahlswede1971multi,liao1972multiple}.

\subsection{Three pairwise error classes}

A competing pair $(v_1',v_2')$ that differs from the transmitted $(v_1,v_2)$ falls
in exactly one of three classes --- first coordinate only, second coordinate
only, or both --- and each calls for its own PEP, defined against the uniform
auxiliary priors $Q_{V_i}=\Unif{\cV_i}$ so that \cref{thm:pep-uniform} applies on
each space.

\begin{definition}[Row, column, and joint PEPs]\label{def:mac-pep}
Let $(\bar V_1,\bar X_1)\sim Q_{V_1}\cdot Q_{X_1\mid V_1}$ and
$(\bar V_2,\bar X_2)\sim Q_{V_2}\cdot Q_{X_2\mid V_2}$ be drawn independently. For
a transmitted tuple $(v_1,x_1,v_2,x_2,y)$ and dither $U\sim\uU$, the
\emph{message-1 (row)}, \emph{message-2 (column)}, and \emph{joint} PEPs are the
dithered comparison probabilities
\begin{align*}
\PEP{v_1,x_1}{v_2,x_2,y}&\triangleq\PRs{\bar V_1,\bar X_1}{m(\bar V_1,\bar X_1,v_2,x_2,y)\,{>}\,m}
 +U\,\PRs{\bar V_1,\bar X_1}{m(\bar V_1,\bar X_1,v_2,x_2,y)\,{=}\,m},\\
\PEP{v_2,x_2}{v_1,x_1,y}&\triangleq\PRs{\bar V_2,\bar X_2}{m(v_1,x_1,\bar V_2,\bar X_2,y)\,{>}\,m}
 +U\,\PRs{\bar V_2,\bar X_2}{m(v_1,x_1,\bar V_2,\bar X_2,y)\,{=}\,m},\\
\PEP{v_1,x_1,v_2,x_2}{y}&\triangleq\PRs{\bar V_1,\bar X_1,\bar V_2,\bar X_2}{m(\bar V_1,\bar X_1,\bar V_2,\bar X_2,y)\,{>}\,m}\\
 &\qquad +U\,\PRs{\bar V_1,\bar X_1,\bar V_2,\bar X_2}{m(\bar V_1,\bar X_1,\bar V_2,\bar X_2,y)\,{=}\,m},
\end{align*}
where $m=m(v_1,x_1,v_2,x_2,y)$ is the transmitted score.
\end{definition}

By \cref{thm:pep-uniform} applied to each candidate space, each randomized PEP is
$\uU$-distributed under its own uniform prior. Writing
$(V_1,V_2)\sim P_{V_1,V_2}$, $X_i\mid V_i\sim Q_{X_i\mid V_i}$, and
$Y\sim W(\cdot\mid X_1,X_2)$, define the three scaled error components
\begin{equation}\label{eq:mac-components}
\begin{gathered}
E_1=|\cV_1|\,\PEP{V_1,X_1}{V_2,X_2,Y},\\
E_2=|\cV_2|\,\PEP{V_2,X_2}{V_1,X_1,Y},\\
E_{1,2}=|\cV_1||\cV_2|\,\PEP{V_1,X_1,V_2,X_2}{Y}.
\end{gathered}
\end{equation}

\subsection{Achievability via the sum-spectrum}

Union-bounding over the three competitor classes, each averaged against its own
uniform reference, gives the direct bound.

\begin{theorem}[MAC random-coding bound]\label{thm:mac-rc}
For any priors $Q_{X_1\mid V_1},Q_{X_2\mid V_2}$ and any metric $m$ with dithered
tie-breaking,
\begin{equation}\label{eq:mac-rc}
\bar P_e\le\E{\min\BRA{1,\;E_1+E_2+E_{1,2}}},
\end{equation}
the expectation over $(V_1,V_2,X_1,X_2,Y)$.
\end{theorem}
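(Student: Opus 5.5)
We condition on the transmitted tuple $(V_1,V_2,X_1,X_2,Y)=(v_1,x_1,v_2,x_2,y)$ together with its dither $u$, and bound the conditional error probability over the remaining random codewords and dithers. This mirrors the $\RCUp$ derivation behind \eqref{eq:recap-rcu}. An error occurs only if some candidate pair $(v_1',v_2')\neq(v_1,v_2)$ beats the transmitted pair in the weak lexicographic order on (score, dither). We split the competitors into the three classes of \cref{def:mac-pep}: row competitors $(v_1',v_2)$ with $v_1'\neq v_1$, column competitors $(v_1,v_2')$ with $v_2'\neq v_2$, and joint competitors $(v_1',v_2')$ with both coordinates changed. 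The conditional error probability is at most $1$ and, by the union bound, at most the sum over competitors of the pairwise beat probabilities. Taking the minimum and then the outer expectation gives \eqref{eq:mac-rc}, provided each class sum is bounded by the corresponding scaled component in \eqref{eq:mac-components}.

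For the row class, a competitor $(v_1',v_2)$ uses the codeword $X_1(v_1')\sim Q_{X_1\mid V_1}(\cdot\mid v_1')$, drawn independently of $x_1$, together with the \emph{same} $x_2=X_2(v_2)$. Its own dither $U_{v_1',v_2}$ is independent of $u$. Hence its beat probability is $\PRs{X_1,U'}{(m(v_1',X_1,v_2,x_2,y),U')\succ(m,u)}$. We enlarge the index set from $\cV_1\setminus\{v_1\}$ to all of $\cV_1$, which only increases the sum, as in the proof of \cref{thm:jscc-rc}. The sum over $v_1'\in\cV_1$ then equals $|\cV_1|$ times the average under $\bar V_1\sim\Unif{\cV_1}$. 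By the lexicographic equivalence \cref{thm:pep-uniform}(i), applied on the candidate space $\cV_1\times\cX_1$ with $(v_2,x_2,y)$ playing the role of the observation, that average is exactly $\PEP{v_1,x_1}{v_2,x_2,y}$ evaluated at dither $u$. The class sum is therefore at most $E_1$. The column class is handled symmetrically and gives $E_2$. In the joint class both codewords are fresh and independent of $(x_1,x_2)$, so the same enlargement to $\cV_1\times\cV_2$ together with \cref{thm:pep-uniform}(i) on the product candidate space gives $E_{1,2}$.

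The step needing most care is the dither bookkeeping. All three PEPs must be evaluated at the \emph{same} transmitted dither $u=U_{v_1,v_2}$, since that is the tie-breaker the transmitted pair actually carries. We therefore read \cref{def:mac-pep} with a single common $U$, and \cref{thm:pep-uniform}(i) gives equality with $\PEPU{\cdot}{\cdot}{u}$ for each fixed $u$. The enlarged-index terms $v_1'=v_1$ (respectively $v_2'=v_2$, or both) are not actual codebook competitors. They are fictitious fresh draws that only add nonnegative mass, which also absorbs the self-comparison as remarked after \cref{def:jscc-pep}.

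A second point to check is that conditioning on $(v_1,v_2)$ from the correlated law $P_{V_1,V_2}$ does not disturb the competitor laws. This holds because codewords for distinct source symbols are drawn independently of the transmitted ones, and the source law enters only through the outer expectation. Unlike \cref{sec:channel}, no exact product form is claimed: competitors within a class share a codeword coordinate and are not independent. The union bound is therefore the natural and only required step.
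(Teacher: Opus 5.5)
Your proof is correct and follows the paper's argument. Both condition on the transmitted tuple and its dither $U_{v_1,v_2}$, union-bound over the three classes, enlarge each class's index set to the full (product) alphabet, and average under the uniform reference to get the factors $|\cV_1|$, $|\cV_2|$ and $|\cV_1||\cV_2|$. The paper then reads the average beat probability as the PEP at the common dither via \cref{thm:pep-uniform}(i) followed by uniformity (iii), whereas you use the first equality of \cref{thm:pep-uniform}(i) directly, which is the same fact.

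One side remark is slightly off, though it does not affect the argument. Given the transmitted tuple, the row competitors are conditionally independent, and likewise the column competitors, because the coordinate they share is the fixed transmitted codeword. Dependence arises only within the joint class and across classes, and the union bound needs no independence anyway.
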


\begin{IEEEproof}[Proof sketch]
Condition on the transmitted tuple and dither. An error needs a competitor
beating the truth; union-bounding over the three classes gives three sums. For
the row class, a four-step chain bounds the row sum by
$|\cV_1|\,\PEP{v_1,x_1}{v_2,x_2,y}$, whence $\E{A}\le E_1$: extend the index set
from $\cV_1\setminus\{v_1\}$ to $\cV_1$ (only enlarging the bound); apply the
lexicographic-order equivalence (\cref{thm:pep-uniform}(i)); average the
competitor under the uniform $Q_{V_1}$ (yielding the factor $|\cV_1|$); and
apply uniformity (\cref{thm:pep-uniform}(iii)). The column class is
symmetric, $\E{B}\le E_2$. The joint class ranges over
$(|\cV_1|-1)(|\cV_2|-1)$ pairs with independent per-pair dithers; the
\emph{same} four-step chain, applied on the product candidate space
$\cV_1\times\cV_2$ under the uniform product reference $Q_{V_1}\times Q_{V_2}$,
gives $\E{C}\le E_{1,2}$. The individual beat probabilities are \emph{not}
equal across joint-class competitors --- the metric and conditional priors depend
on the pair --- but the chain never needs them to be. Summing and clipping at $1$
yields \eqref{eq:mac-rc}; the full argument is given in
Appendix~\ref{app:mac}.
\end{IEEEproof}

The governing object is the distribution of the sum.

\begin{definition}[Sum-spectrum]\label{def:mac-sumspec}
$\displaystyle F_s(z)\triangleq\PR{E_1+E_2+E_{1,2}\ge e^{-z}}$, $z\in\mR$.
\end{definition}

\begin{corollary}[Integral form]\label{cor:mac-integral}
With $F_s$ as above,
\begin{equation}\label{eq:mac-sumspec}
\bar P_e\le\int_0^\infty F_s(z)\,e^{-z}\,dz.
\end{equation}
\end{corollary}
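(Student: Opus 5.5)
We would start from \cref{thm:mac-rc}, which gives $\bar P_e\le\E{\min(1,S)}$ with $S\triangleq E_1+E_2+E_{1,2}\ge0$. The proof is then a layer-cake computation converting the expectation of the clipped random variable into an integral of its tail probabilities. Once the tail events are rewritten on the logarithmic scale $t=e^{-z}$, those tail probabilities are exactly the values of the sum-spectrum $F_s$. No property specific to the MAC is needed beyond \cref{thm:mac-rc}; the step is the same one that turns the $\RCUp$ expectation into the spectral integral in \eqref{eq:recap-rcu}.

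First, since $\min(1,S)\in[0,1]$, the layer-cake formula gives
\begin{equation*}
\E{\min(1,S)}=\int_0^1\PR{\min(1,S)\ge t}\,dt=\int_0^1\PR{S\ge t}\,dt .
\end{equation*}
For $t\le1$ the events $\{\min(1,S)\ge t\}$ and $\{S\ge t\}$ coincide. Using closed or open inequalities changes the integrand only at countably many $t$, so this does not affect the value.

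Second, substitute $t=e^{-z}$, so that $dt=-e^{-z}\,dz$. The range $t\in(0,1]$ maps to $z\in[0,\infty)$, and we obtain
\begin{equation*}
\int_0^1\PR{S\ge t}\,dt=\int_0^\infty\PR{S\ge e^{-z}}\,e^{-z}\,dz=\int_0^\infty F_s(z)\,e^{-z}\,dz
\end{equation*}
by \cref{def:mac-sumspec}. Chaining this with \eqref{eq:mac-rc} yields \eqref{eq:mac-sumspec}.

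The only point requiring care is measurability and monotonicity. Since $z\mapsto F_s(z)$ is non-decreasing, it is measurable and the integral is well defined. The integrand is bounded by $e^{-z}$, so Tonelli applies to the layer-cake step without any atomlessness assumption, because the identity holds for every non-negative random variable. Thus there is no genuine obstacle; the only subtlety is checking that the clipping at $1$ is exactly what restricts the integral to $z\ge0$.
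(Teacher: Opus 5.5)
Your proof is correct and follows the paper's argument: start from \cref{thm:mac-rc}, apply the layer-cake identity to $\min(1,S)$, and substitute $t=e^{-z}$ on $(0,1]$. The only difference is cosmetic. The paper writes the integrand as $\PR{S>w}$ and then matches it to $F_s(z)$ for almost every $z$, whereas you use the closed-inequality form $\PR{S\ge t}$ from the start, so your identification with $F_s$ holds exactly rather than almost everywhere.
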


\begin{IEEEproof}
With $S=E_1+E_2+E_{1,2}\ge0$, the layer-cake identity gives
$\E{\min(1,S)}=\int_0^1\PR{S>w}\,dw$; the substitution $w=e^{-z}$ on $(0,1]$
turns the integrand into $\PR{S>e^{-z}}\,e^{-z}$, and $\PR{S>e^{-z}}=F_s(z)$ for
a.e.\ $z$, so no atomlessness of the sum $S$ is required: the two differ only by
the jump of the monotone map $z\mapsto F_s(z)$, nonzero on an at most countable
(hence Lebesgue-null) set.
\end{IEEEproof}

The kernel $e^{-z}$ and the integral are exactly those of the point-to-point
$\RCUp$ bound \eqref{eq:recap-rcu}; only the spectrum changed, from a single PEP
to the sum of three.

\subsection{Converse via the max-spectrum}

For a deterministic code $\cC$ the induced priors are degenerate,
$Q^{(\cC)}_{X_i\mid V_i=v_i}=\delta_{x_i(v_i)}$, and the three components
\eqref{eq:mac-components} become functions of $(V_1,V_2,Y)$ alone.

\begin{theorem}[Exact fixed-code converse]\label{thm:mac-converse}
For any fixed code $\cC$,
\begin{equation}\label{eq:mac-conv}
P_e(\cC)=\PR{E_{1,2}\ge1}
\ \ge\ \max\BRAs{\PR{E_1\ge1},\,\PR{E_2\ge1}}.
\end{equation}
\end{theorem}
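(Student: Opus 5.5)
Fix a deterministic code $\cC$. The first step is to show the equality $P_e(\cC)=\PR{E_{1,2}\ge1}$. Under the degenerate priors $Q^{(\cC)}_{X_i\mid V_i}$ the joint competitor $(\bar V_1,\bar X_1,\bar V_2,\bar X_2)$ is just a uniform draw of a candidate pair $(\bar v_1,\bar v_2)$ from $\cV_1\times\cV_2$, scored by $m(\bar v_1,x_1(\bar v_1),\bar v_2,x_2(\bar v_2),y)$. The joint decoder is then exactly a dithered maximum-metric decoder over the candidate set $\cJ=\cV_1\times\cV_2$ with uniform empirical reference. I would therefore apply the finite rank identity (Lemma~\ref{app:lem-rank}) with $L=1$ on $\cJ$, as in the proof of \cref{thm:jscc-converse}.

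For each fixed transmitted pair and fixed $y$, the identity gives the following over the dithers alone: the decoder is correct iff the joint PEP is at most $1/(|\cV_1||\cV_2|)$, i.e.\ iff $E_{1,2}\le1$. Since this holds pointwise in $(v_1,v_2)$, it may be averaged under the true, possibly non-uniform and correlated, law $P_{V_1,V_2}$ and the channel. Only the competitor reference needs to be uniform, and that is the point already exploited in \cref{sec:jscc}. This gives $P_e(\cC)=\PR{E_{1,2}>1}$. I would then promote $>$ to $\ge$ by the atomlessness of \cref{lem:atomless} on the enlarged candidate space, or equivalently note that $E_{1,2}=1$ exactly is a null event because the dither is continuous inside the tie block.

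The second step is the inequality $\PR{E_{1,2}\ge1}\ge\PR{E_1\ge1}$, with $E_2$ symmetric. I would argue through operational events rather than through the spectra. By the same rank identity applied on the row candidate space $\cV_1$ with $(v_2,y)$ fixed, the event $\{E_1\ge1\}$ coincides a.s.\ with the event that some row competitor $(v_1',v_2)$ outranks the truth in the lexicographic order. On that event the joint decoder cannot output $(v_1,v_2)$, so $\{E_1\ge1\}\subseteq\{\text{error}\}$ up to null sets, and taking probabilities finishes the proof.

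The main obstacle is the dither coupling in that containment. The row PEP carries its own dither $U$, whereas the joint decoder uses per-pair dithers $U_{v_1,v_2}$. The containment is literal only if both comparisons are driven by the same dithers of the candidates $\{(v_1',v_2)\}$. I would resolve this by constructing the row comparison from the restriction of the decoder's dither field to the row, so that the row rank is computed with exactly those dithers. Then ``row rank $>1$'' implies ``joint rank $>1$'' pointwise. Since the theorem concerns probabilities only, this choice of coupling is legitimate: the marginal law of $E_1$ is the same as the one the rank identity produces.
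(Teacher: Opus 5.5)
Your proof is correct and is essentially the paper's argument. The equality is the $L=1$ fixed-code identity of \cref{thm:jscc-converse}, i.e.\ the per-index rank identity averaged under the non-uniform $P_{V_1,V_2}$. The marginal bound is the paper's genie-aided row-restricted decoder run with the same dithers: the pathwise implication ``row rank $>1$ $\Rightarrow$ joint rank $>1$'' followed by the rank identity on the row.

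One precision: Lemma~\ref{app:lem-rank} equates probabilities, not events. When the row contains ties, $\{E_1\ge1\}$ therefore does not coincide a.s.\ with the event that some row competitor outranks the truth, and it is not contained in the error event pointwise. What the argument actually needs, and what your last paragraph supplies, is the chain $\PR{E_1\ge1}=\PR{\text{row rank}>1}\le P_e(\cC)$.
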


\begin{IEEEproof}[Proof sketch]
The joint candidate space $\cV_1\times\cV_2$ with channel-input space
$\cX_1\times\cX_2$ is a single-transmitter (JSCC) problem, so the fixed-code
exactness of \cref{thm:jscc-converse} gives
$P_e(\cC)=\PR{\PEP{V_1,X_1,V_2,X_2}{Y}\ge1/|\cV_1||\cV_2|}=\PR{E_{1,2}\ge1}$. For
the marginal bound, compare against a \emph{row-restricted} benchmark: a genie
reveals $(V_2,X_2)$ to a decoder that runs the \emph{same} metric with the
\emph{same} dithers, restricted to the row $\{(v_1',V_2):v_1'\in\cV_1\}$. The
comparison is pathwise --- for every realization, being lexicographically
maximal over the whole grid implies being maximal over one's own row --- so
$P_e(\cC)\ge P_e^{\mathrm{row}}(\cC)$. The row-restricted problem is a
point-to-point JSCC instance for terminal~1 with conditional channel
$W(\cdot\mid\cdot,x_2)$, whose fixed-code identity gives
$P_e^{\mathrm{row}}(\cC)=\PR{E_1\ge1}$. Symmetry yields the bound with $E_2$.
The full argument is given in Appendix~\ref{app:mac}.
\end{IEEEproof}

The converse-side spectrum keeps the three components separate and takes the
largest reading.

\begin{definition}[Max-spectrum]\label{def:mac-maxspec}
\begin{equation}\label{eq:mac-maxspec}
F_m(z)\triangleq\max\BRAs{\PR{E_{1,2}\ge e^{-z}},\,\PR{E_1\ge e^{-z}},\,\PR{E_2\ge e^{-z}}},
\qquad z\in\mR.
\end{equation}
\end{definition}

By \cref{thm:mac-converse}, the fixed-code error is the max-spectrum read at the
origin, $P_e(\cC)=F_m^{(\cC)}(0)$ --- the network analogue of the channel-coding
identity $P_e(\cC)=\Fspec{Q_X^\cC}{R}$ of \eqref{eq:recap-converse}.

\subsection{Relation between achievability and converse}

Both bounds are spectral readings, so they are directly comparable. Since
$E_1,E_2,E_{1,2}\ge0$, the event $\{E_k\ge t\}$ is contained in
$\{E_1+E_2+E_{1,2}\ge t\}$ for each $k$, whence
\begin{equation}\label{eq:mac-fsfm}
F_m(z)\le F_s(z)\qquad\text{for all }z.
\end{equation}
The achievability functional therefore dominates the converse functional
pointwise: the union bound aggregates all three error events, while the converse
keeps only the dominant one. This is a genuine gap --- $\int F_s\,e^{-z}\,dz$
cannot be replaced by $\int F_m\,e^{-z}\,dz$, which would be smaller and not a
valid upper bound on $\bar P_e$.

\begin{corollary}[One-shot sandwich]\label{cor:mac-sandwich}
The optimal unique-decoding error $P_e^\star=\inf_\cC P_e(\cC)$ obeys
\begin{equation}\label{eq:mac-sandwich}
\inf_{Q_{X_1\mid V_1},\,Q_{X_2\mid V_2}}F_m(0)
\ \le\ P_e^\star\ \le\
\inf_{Q_{X_1\mid V_1},\,Q_{X_2\mid V_2}}\int_0^\infty F_s(z)\,e^{-z}\,dz.
\end{equation}
\end{corollary}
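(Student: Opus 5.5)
The two inequalities in \eqref{eq:mac-sandwich} are proved separately, each from one earlier result. The right-hand inequality is a random-coding argument. The left-hand inequality is the fixed-code identity of \cref{thm:mac-converse} combined with an infimum over priors.

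\emph{Upper bound.} Fix any pair of priors $Q_{X_1\mid V_1},Q_{X_2\mid V_2}$. By \cref{cor:mac-integral}, the ensemble average satisfies $\bar P_e\le\int_0^\infty F_s(z)\,e^{-z}\,dz$. Since $\bar P_e=\Es{\cC}{P_e(\cC)}$ is an average over codes, at least one deterministic code $\cC$ has $P_e(\cC)\le\bar P_e$. Hence $P_e^\star$ is at most the integral for every choice of priors, and taking the infimum gives the right-hand side. Two routine points remain. First, $P_e(\cC)$ for a deterministic code still includes the average over the decoder's dithers, which matches the dithered decoder analysed in \cref{thm:mac-rc}. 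Second, the minimum-at-most-mean step needs measurability of $P_e(\cC)$ in $\cC$, and this is immediate on finite alphabets.

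\emph{Lower bound.} Fix any code $\cC$. \Cref{thm:mac-converse} gives $P_e(\cC)=\PR{E_{1,2}\ge1}\ge\max\{\PR{E_1\ge1},\PR{E_2\ge1}\}$, with all three components computed under the degenerate priors $Q^{(\cC)}_{X_i\mid V_i}=\delta_{x_i(v_i)}$. Because $\PR{E_{1,2}\ge1}$ also dominates itself, $P_e(\cC)$ equals the maximum of all three readings, which is $F_m^{(\cC)}(0)$ by \eqref{eq:mac-maxspec} at $z=0$. These degenerate conditionals are admissible priors in the infimum, so $P_e(\cC)=F_m^{(\cC)}(0)\ge\inf_{Q_{X_1\mid V_1},Q_{X_2\mid V_2}}F_m(0)$. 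Taking the infimum over $\cC$ gives the left-hand inequality.

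\emph{Main obstacle.} The step needing care is the bookkeeping of the decoder class, not any computation. $P_e^\star$ has to be read as the optimal error over codes under the dithered metric-$m$ joint decoder fixed in the setup, exactly as in \cref{thm:jscc-converse}. With that reading both directions concern the same decoder, and the sandwich is consistent. If instead one wants $P_e^\star$ to be the optimum over all decoders, the lower bound additionally needs the metric to be the matched MAP metric, using the optimality argument of \cref{rem:jscc-map}. I would state this reading explicitly before the two-line argument above.
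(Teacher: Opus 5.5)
Your proposal is correct and is essentially the paper's own argument. The upper bound is \cref{cor:mac-integral} at fixed priors plus the averaging step that some deterministic code does at least as well as $\bar P_e$; the lower bound is \cref{thm:mac-converse} read as $P_e(\cC)=F_m^{(\cC)}(0)$ at the code-induced point-mass priors, which are admissible in the infimum, and your remark that $P_e^\star$ is the optimum under the fixed dithered metric-$m$ decoder makes explicit a reading the paper leaves implicit.
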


\begin{IEEEproof}
The lower bound is \cref{thm:mac-converse} optimized over priors; the upper bound
is \cref{cor:mac-integral} optimized over priors, a code attaining $\bar P_e$
existing by averaging.
\end{IEEEproof}

\subsection{The bilinear-coupling obstruction}

In the matched MAP case the components of \eqref{eq:mac-maxspec} each admit a
reverse-channel / $\beta$-functional form, exactly as the single-user spectrum
does in \cref{thm:meta-converse,thm:reverse-channel}: the joint component is the
meta-converse on the candidate space $\cV_1\times\cV_2$, and the row/column
components are the conditional meta-converses obtained by freezing the other
terminal. Each component is, for fixed priors, an affine functional of a reverse
channel subject to a $D_\infty$ cap, so the optimized success spectrum
$\sup\min_k\PR{E_k\le e^{-z}}$ (the supremum over the reverse channels in the
three variational forms) is a linear program in those channels --- the network
image of \cref{thm:joint-convex}.

The single LP of \cref{sec:prioropt} nevertheless does \emph{not} extend, and the
reason is structural rather than technical.

\begin{remark}[Why joint convexity fails]\label{rem:mac-bilinear}
The joint component's $D_\infty$ cap is taken against the reference
$Q_{V_1}\!\cdot Q_{X_1\mid V_1}\times Q_{V_2}\!\cdot Q_{X_2\mid V_2}$, which
carries the \emph{product} $Q_{X_1\mid V_1}(x_1\mid v_1)\,Q_{X_2\mid V_2}(x_2\mid
v_2)$ of the two encoder priors. As a function of the pair
$(Q_{X_1\mid V_1},Q_{X_2\mid V_2})$ this product is bilinear, not jointly convex;
the row and column components are each affine in the two priors separately and add
no coupling, but the joint component alone destroys joint convexity. The combined
problem is \emph{biconvex} --- a linear program in each encoder's prior with the
other fixed --- so \cref{thm:joint-convex} holds one terminal at a time but fails
jointly, and the polynomial-size prior-optimization LP of \cref{sec:prioropt} has
no single-program analogue here.

What remains available is a \emph{convex outer bound}. One direction of error is
safe by the structure of the problem: over-estimating the prior-optimized
success probability can only \emph{weaken} the converse, never falsely tighten
it, so any convex relaxation of the bilinear product yields a valid
computational converse and the only question is how much is lost. The
construction of such relaxations is standard and we do not develop it.
Lifting the bilinear products to a joint variable constrained by McCormick
envelopes~\cite{mccormick1976computability} yields a linear program; the
Sherali--Adams (RLT) hierarchy~\cite{sherali1990hierarchy} refines it at
increasing cost. For finite-blocklength converses in the two-terminal setting,
Jose and Kulkarni~\cite{jose2018improved} carry out exactly this
lift-and-project program for Slepian--Wolf coding --- the identity-channel
specialization of this section --- and their relaxations apply here without
change. Being
biconvex, the problem also admits alternating maximization over the two
encoders, monotone but only locally optimal. We therefore claim for the MAC the
comparable achievability/converse pair \eqref{eq:mac-fsfm}, but \emph{not} the
exact prior-optimization LP of the point-to-point case; quantifying the
relaxation slack is left open.
\end{remark}

This is the honest boundary of the framework. The PEP primitive reaches the
network setting --- three pairwise events, a sum-spectrum direct bound, an exact
max-spectrum converse, all comparable --- but the convex prior optimization that
made the single-user converse computable rests on a single prior, and two
coupled priors break it. The one-shot achievability literature for
networks~\cite{li2021unified,yassaee2013technique,verdu2012non,chen2020lossless}
is centered on direct bounds, and finite-blocklength converses and second-order
(dispersion) analyses of the MAC have been obtained by other
means~\cite{tan2014dispersions,molavianjazi2015second,jose2018improved}; what
appears to be new here is an exact fixed-code converse in the same spectral
language as the direct bound, with the convexity obstruction stated explicitly.

\section{Conclusion}\label{sec:conc}

The four developments of this paper make one point: the pairwise-error-probability
primitive of~\cite{elkayampep1} is not tied to channel coding. The same
rank-transform primitive, evaluated on an enlarged candidate space, gave a
random-coding achievability bound and an \emph{exact} fixed-code converse for
lossy source coding, joint source--channel coding with list decoding, and the
two-user multiple-access channel; for erasure/undetected-error decoding it gave
the achievability pair together with an exact correct-decoding converse. The
primitive is read on \emph{events} for channel, erasure, and multiuser coding, and
on \emph{values} for distortion. In each single-terminal setting the matched
case retains the convexity that makes prior optimization a linear program, and
we indicated per setting how the prior-optimization program extends without
redeveloping it --- the computational story remains that of~\cite{elkayampep1}.

The multiuser case is the one genuine departure. The achievability/converse pair
survives through three pairwise error events, but the bilinear coupling of the
two encoder priors removes joint convexity, so the single-LP prior optimization
of the point-to-point case does not extend; the precise obstruction, and any
structural restriction on the coupling that would restore convexity, are left
open. Together with~\cite{elkayampep1}, these results extend the error-spectrum
treatment across source, channel, joint, erasure, and multiuser settings.

\appendices

\section{Framework Lemmas: Recalled and Extended}\label{app:rank-recall}
This appendix proves the framework facts this paper needs that are not proved
in~\cite{elkayampep1}: the inversion clause of the randomized
probability-integral transform (RPIT) and the tail--quantile inverse map of
\cref{thm:pep-uniform}(iv); the finite rank-equivalence identity in its
per-index form (\cite{elkayampep1} proves the uniform-index average); and the
complementary-tail meta-converse identity \eqref{eq:meta-converse-c} (stated
in~\cite{elkayampep1} but neither proved nor used there).

\begin{lemma}[RPIT, extended]\label{app:lem-rpit}
Let $T$ be a real random variable with CDF $F_T$ and atom function
$P_T(t)\triangleq F_T(t)-F_T(t^-)$, and let $U\sim\uU$ be independent of $T$.
Then $\Phi\triangleq F_T(T^-)+U\,P_T(T)\sim\uU$. Moreover, the generalized
inverse $F_T^{\langle-1\rangle}(\phi)\triangleq\inf\{t:F_T(t)\ge\phi\}$
recovers $T$ from $\Phi$: $T=F_T^{\langle-1\rangle}(\Phi)$ a.s.
\end{lemma}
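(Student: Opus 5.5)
We prove the two clauses separately. Uniformity goes through a direct computation of $\PR{\Phi\le \phi}$, and inversion comes from a sandwich on $\Phi$.

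\emph{Uniformity.} Fix $\phi\in(0,1)$ and let $t^\ast\triangleq F_T^{\langle-1\rangle}(\phi)$. By right-continuity, $F_T(t^\ast)\ge\phi$ and $F_T(t^{\ast-})\le\phi$. Condition on $T=t$ and split into three cases according to the position of $t$ relative to $t^\ast$:
\begin{enumerate}
\item If $t<t^\ast$, then $F_T(t)<\phi$. Since $F_T(t^-)+uP_T(t)\le F_T(t)$, the event $\{\Phi\le\phi\}$ holds surely.
\item If $t>t^\ast$, then $F_T(t^-)\ge F_T(t^\ast)\ge\phi$, so the event fails, except when $P_T(t)=0$ and $F_T(t^-)=\phi$. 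That exceptional set is the flat stretch of $F_T$ at level $\phi$, which carries no $T$-mass.
\item If $t=t^\ast$, the event is $\{U\le(\phi-F_T(t^{\ast-}))/P_T(t^\ast)\}$ when $P_T(t^\ast)>0$, with conditional probability $(\phi-F_T(t^{\ast-}))/P_T(t^\ast)\in[0,1]$.
\end{enumerate}
Summing the cases gives
\[
\PR{\Phi\le\phi}=F_T(t^{\ast-})+P_T(t^\ast)\cdot\frac{\phi-F_T(t^{\ast-})}{P_T(t^\ast)}=\phi .
\]
When $P_T(t^\ast)=0$, the value is $F_T(t^{\ast-})=F_T(t^\ast)=\phi$ by continuity of $F_T$ at $t^\ast$. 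This is the standard randomized probability-integral transform, and it is also the content of \cref{thm:pep-uniform}(iii) with $T=-m(X,y)$.

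\emph{Inversion.} It suffices to show, for $\PR{\cdot}$-a.e.\ outcome, that
\[
F_T(T^-)<\Phi\le F_T(T)
\]
and that these two bounds force $F_T^{\langle-1\rangle}(\Phi)=T$.

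The upper bound $\Phi\le F_T(T)$ is immediate from $U\le 1$.

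For the lower bound, note that $\Phi=F_T(T^-)$ occurs only when $U=0$ or $P_T(T)=0$. The first has probability zero. For the second, use the uniformity clause just proved: on $\{P_T(T)=0\}$ we have $\Phi=F_T(T)$, so strictness of the lower bound is not needed there. In that case we argue directly from $\Phi=F_T(T)$: every $s<T$ has $F_T(s)\le F_T(T^-)=F_T(T)$, and equality $F_T(s)=F_T(T)$ for some $s<T$ means $T$ lies on the right end of a flat stretch of $F_T$.

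Now invert. Given $F_T(T^-)<\Phi\le F_T(T)$, every $s<T$ has $F_T(s)\le F_T(T^-)<\Phi$, while $F_T(T)\ge\Phi$; hence $\inf\{t:F_T(t)\ge\Phi\}=T$.

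\emph{Main obstacle.} The only delicate point is the null-set bookkeeping in the continuous case, where $\Phi=F_T(T)$ and $T$ might sit at the right endpoint of an interval $(a,T]$ on which $F_T$ is constant. Such an interval has zero $T$-mass. The set of right endpoints of maximal flat intervals is countable, and each such endpoint is a non-atom, so $T$ lands on one with probability zero. I would dispatch this by writing the bad event as a countable union of atomless points, concluding that $F_T(s)<F_T(T)$ for all $s<T$ almost surely. That restores the inversion in this case as well, and $T=F_T^{\langle-1\rangle}(\Phi)$ a.s.
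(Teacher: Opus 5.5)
Your proof is correct in substance, and for the inversion clause it follows the paper's route. On an atom, the sandwich $F_T(T^-)<\Phi\le F_T(T)$ forces $F_T^{\langle-1\rangle}(\Phi)=T$. On $\{P_T(T)=0\}$ one has $\Phi=F_T(T)$, so inversion can fail only if $F_T$ is flat on some $[s,T]$ with $s<T$, which is a null event. Where you differ is the uniformity clause. The paper does not prove it here and cites the companion paper instead. Your three-case computation of $\PR{\Phi\le\phi}$ around $t^\ast=F_T^{\langle-1\rangle}(\phi)$ is a complete, self-contained proof for arbitrary real $T$. It costs a few lines and makes the lemma independent of the companion paper.

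One step needs repair. In your last paragraph you reduce the bad event to $T$ landing on a right endpoint of a maximal flat interval, and you dispose of it as a countable union of atomless points. That reduction is not valid. If $F_T$ is constant on a maximal piece with left end $a$ and right end $b$, then \emph{every} point $t$ of the piece other than $a$ has some $s<t$ (namely $s=a$) with $F_T(s)=F_T(t)$. So the bad set is a countable union of generally uncountable intervals, not of points. Also, a right endpoint $b$ can be an atom when $F_T$ jumps there; this is harmless only because that case falls under the sandwich.

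The fix is a sentence you already wrote, applied to the deterministic maximal pieces rather than the $T$-dependent interval $(a,T]$. The maximal flat pieces of positive length are disjoint, hence countably many. Each such piece with its left endpoint removed has $T$-mass $0$, since the increment of $F_T$ over it vanishes. So $T$ falls in one of them, away from its left end, with probability zero. This is exactly the paper's one-line argument.

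Relatedly, your opening claim that it suffices to prove $F_T(T^-)<\Phi$ a.s.\ cannot hold on $\{P_T(T)=0\}$, where $\Phi=F_T(T^-)$. The criterion you actually use, and the right one, is $F_T(s)<\Phi\le F_T(T)$ for all $s<T$. The appeal to the uniformity clause at that point plays no role.
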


\begin{IEEEproof}
Uniformity of $\Phi$ is proved in~\cite{elkayampep1}; only the inversion
claim is new. If $P_T(T)>0$ then, since $U\in(0,1)$ a.s.,
$F_T(T^-)<\Phi<F_T(T)$, and the smallest $t$ with $F_T(t)\ge\Phi$ is $T$
itself. If $P_T(T)=0$ then $\Phi=F_T(T)$, and
$F_T^{\langle-1\rangle}(\Phi)<T$ requires $F_T$ to be constant on an interval
$[t,T]$ with $t<T$ --- but a maximal flat piece $(a,b]$ of $F_T$ has
$\PR{T\in(a,b]}=F_T(b)-F_T(a)=0$, and there are countably many such pieces,
so this failure event is null.
\end{IEEEproof}

\begin{IEEEproof}[Proof of \cref{thm:pep-uniform}(iv)]
Fix $y$ and let $T\triangleq m(X,y)$ with $X\sim Q_X$, CDF $F_T$ and atom
function $P_T$. Since $G_{X,y}=1-F_T(T)$ and $H_{X,y}=P_T(T)$,
\[
\Phi\;\triangleq\;1-\PEPU{X}{y}{U}
\;=\;1-G_{X,y}-U\,H_{X,y}
\;=\;F_T(T^-)+(1-U)\,P_T(T),
\]
and $1-U\sim\uU$ is independent of $T$, so Lemma~\ref{app:lem-rpit} applied with
dither $1-U$ gives $m(X,y)=T=F_T^{\langle-1\rangle}(\Phi)
=F_T^{\langle-1\rangle}\bigl(1-\PEPU{X}{y}{U}\bigr)$ a.s. Define
$\tilde m(w,y)\triangleq F_T^{\langle-1\rangle}(1-w)$ (measurable jointly in
$(w,y)$, since $\{(w,y):\tilde m(w,y)>s\}=\{(w,y):F_T^{(y)}(s)<1-w\}$ and
$y\mapsto F_T^{(y)}(s)=Q_X\{m(\cdot,y)\le s\}$ is measurable by joint
measurability of $m$). Then $\tilde m(\PEP{X}{y},y)=m(X,y)$ a.s., and
monotonicity is immediate: $F_T^{\langle-1\rangle}$ is non-decreasing and
$w\mapsto1-w$ is decreasing, so $w\mapsto\tilde m(w,y)$ is non-increasing.
\end{IEEEproof}

The fixed-code converses of this paper use the finite-set rank identity
of~\cite{elkayampep1}, here strengthened to its per-index form: the JSCC and
network converses average it under a \emph{non-uniform} transmitted-index law,
which the per-index statement licenses.

\begin{lemma}[Finite rank equivalence, per-index form]\label{app:lem-rank}
Let $j\in[M]$ carry scores $d_j$ and i.i.d.\ dithers $U_j\sim\uU$. For each
$i\in[M]$ let $S(i)\triangleq\#\{j:d_j>d_i\}$ and $T(i)\triangleq\#\{j:d_j=d_i\}
\ge1$ (the tie block contains $i$), and set
$\PEPs{i}\triangleq\bigl(S(i)+U_i\,T(i)\bigr)/M$, the dithered PEP of index $i$
under the uniform prior on $[M]$. Let $\mathrm{rank}(i)$ be the position of
$(d_i,U_i)$ among $\{(d_j,U_j)\}_{j\in[M]}$ under $\succ$ (rank $1$ = top). Then
for every \emph{fixed} $j\in[M]$ and every $L\in\{1,\dots,M\}$,
\begin{equation}\label{eq:rank-perindex}
\PR{\PEPs{j}\le L/M}=\PR{\mathrm{rank}(j)\le L}
=\min\BRA{\frac{(L-S(j))_+}{T(j)},\,1},
\end{equation}
the probabilities over the dithers alone. Consequently, for a random index $J$
drawn from \emph{any} law on $[M]$ independent of the dithers,
$\PR{\mathrm{rank}(J)\le L}=\PR{\PEPs{J}\le L/M}$.
\end{lemma}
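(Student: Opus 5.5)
The plan is to fix the scores $d_1,\dots,d_M$ and the index $j$, and to work conditionally on everything except the dithers. The key observation is that ranking under $\succ$ splits into two layers. Every index with $d_k>d_j$ (there are $S(j)$ of them) outranks $j$ regardless of dithers. Every index with $d_k<d_j$ is outranked by $j$ regardless of dithers. Only the tie block $B(j)=\{k:d_k=d_j\}$, of size $T(j)$, is ordered by the dithers.

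Hence
\[
\mathrm{rank}(j)=S(j)+1+\#\{k\in B(j)\setminus\{j\}:U_k<U_j\}.
\]
Conditional on $U_j=u$, the count is $\mathrm{Binomial}(T(j)-1,u)$. Equivalently, by exchangeability of i.i.d.\ continuous dithers, the within-block position $\mathrm{rank}(j)-S(j)$ is uniform on $\{1,\dots,T(j)\}$, since ties among the $U$'s have probability zero.

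This gives
\[
\PR{\mathrm{rank}(j)\le L}=\PR{\text{within-block position}\le L-S(j)}=\min\BRA{(L-S(j))_+/T(j),1}.
\]

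For the PEP side, note that $\PEPs{j}\le L/M$ iff $S(j)+U_jT(j)\le L$, that is, iff $U_j\le (L-S(j))/T(j)$. Since $U_j\sim\uU$, this probability is exactly $\min\BRA{(L-S(j))_+/T(j),1}$. The clipping handles $L\le S(j)$, where the event has probability $0$ (at $L=S(j)$ it requires $U_j=0$, a null event), and $L\ge S(j)+T(j)$, where it has probability $1$. Both sides of \eqref{eq:rank-perindex} therefore equal the same closed form.

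There is a subtle point: the events $\{\PEPs{j}\le L/M\}$ and $\{\mathrm{rank}(j)\le L\}$ are \emph{not} pathwise equal. One depends on $U_j$ alone, the other on the whole tie block. Only their probabilities agree. This is why the statement is phrased in terms of probabilities, and the proof must compute both sides separately rather than seek a pathwise identity.

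The consequence for a random index $J$ follows by conditioning on $J=j$. Because $J$ is independent of the dithers, and the per-index identity holds for each fixed $j$ with the probability taken over the dithers alone, we can average it against the law of $J$. This gives $\PR{\mathrm{rank}(J)\le L}=\PR{\PEPs{J}\le L/M}$.

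The only step needing care is the uniform-position claim inside the tie block. I would justify it by symmetry: the i.i.d.\ $U_k$, $k\in B(j)$, are almost surely distinct, and their ordering is a uniformly random permutation. Therefore $j$'s position in that ordering is uniform on $\{1,\dots,T(j)\}$.
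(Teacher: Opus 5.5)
Your proposal is correct and follows the paper's proof essentially step for step: the same decomposition $\mathrm{rank}(j)=1+S(j)+\#\{i\neq j: d_i=d_j,\,U_i<U_j\}$ with the within-block position uniform because the continuous dithers are exchangeable, the same direct computation $\PR{U_j\le (L-S(j))/T(j)}$ on the PEP side, and the same averaging over the law of $J$. Your remark that the two events agree only in probability, not pathwise, is accurate, and it is why the paper states the per-index identity in probability form before averaging under a non-uniform index law.
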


\begin{IEEEproof}
Condition on the scores and fix $j$; put $a\triangleq S(j)$ and
$k\triangleq T(j)\ge1$. Then
$\mathrm{rank}(j)=1+a+\#\{i\neq j:d_i=d_j,\,U_i<U_j\}$. Among the $k$ tied
dithers the rank of $U_j$ is uniform on $\{0,\dots,k-1\}$ (the dither law is
continuous, so within-block dither ties are null), whence
$\PR{\mathrm{rank}(j)\le L}=\bigl(((L-a)_+)\wedge k\bigr)/k$. On the other side
$M\,\PEPs{j}=a+U_j k$, so
$\PR{\PEPs{j}\le L/M}=\PR{U_j\le(L-a)/k}=\bigl(((L-a)_+)\wedge k\bigr)/k$, the
same value, which is the display \eqref{eq:rank-perindex}. The identity for a
random $J$ follows by averaging \eqref{eq:rank-perindex} over the law of $J$ ---
any law, since \eqref{eq:rank-perindex} holds for each fixed $j$ and the dithers
are independent of $J$; only the competitor reference inside $\PEPs{\cdot}$ need
be uniform.
\end{IEEEproof}

\begin{IEEEproof}[Proof of the complementary-tail identity \eqref{eq:meta-converse-c}]
The argument is that of the error-tail identity \eqref{eq:meta-converse-e},
proved in~\cite{elkayampep1}, with the level flipped from $1-e^{-R}$ to
$e^{-R}$; we give it in full. Abbreviate
$Q^\bullet\triangleq(Q_X\times P_Y)\cdot m$.

\emph{($\le$).} The test $T_c(x,y)\triangleq\Ind{\PEP{x}{y}\ge1-e^{-R}}$ (the
event $\{Z_c\ge R\}$) has, by uniformity (\cref{thm:pep-uniform}(iii)),
$Q_X$-mass exactly $e^{-R}$ at every fixed $y$, hence
$(Q_X\times Q_Y)(T_c=1)=e^{-R}$ for every $Q_Y$; $T_c$ is therefore feasible at
level $\alpha=e^{-R}$, and
$\BETA{e^{-R}}{Q_X\times Q_Y}{Q^\bullet}\le\Es{Q^\bullet}{T_c}
=\Es{Q_X\times P_Y}{m\,\Ind{Z_c\ge R}}$. Maximizing over $Q_Y$ gives ``$\le$''.

\emph{($\ge$).} For each $y$ pick $(\tau'_y,\theta'_y)$ solving the randomized
quantile equation
$Q_X\bigl(m(\cdot,y)<\tau'_y\bigr)+\theta'_y\,Q_X\bigl(m(\cdot,y)=\tau'_y\bigr)
=e^{-R}$. From the quantile equation, $Q_X\bigl(m(\cdot,y)\ge\tau'_y\bigr)\ge
1-e^{-R}$, so Markov's inequality gives
$\tau'_y\le\Es{Q_X}{m(X,y)}/(1-e^{-R})$, finite by (A2); hence
$\lambda^\ast\triangleq\sum_y P_Y(y)\tau'_y<\infty$ and
$Q_Y^\star(y)\triangleq P_Y(y)\tau'_y/\lambda^\ast$ is a probability measure
(if $\lambda^\ast=0$ --- which requires $Q_X\{m(\cdot,y)=0\}\ge e^{-R}$ for
every $y\in\mathrm{supp}(P_Y)$ --- both sides of the identity vanish and
there is nothing to prove). The Neyman--Pearson likelihood
ratio under $Q_Y^\star$ is
$L(x,y)=m(x,y)P_Y(y)/Q_Y^\star(y)=\lambda^\ast m(x,y)/\tau'_y$, so the optimal
$\beta$-test at level $e^{-R}$ accepts exactly where $m(\cdot,y)<\tau'_y$, with
tie fraction $\theta'_y$ at $m=\tau'_y$ --- which, since the PEP orders
candidates by descending metric, is precisely the event $\{Z_c\ge R\}$ realized
by $T_c$. Therefore
$\BETA{e^{-R}}{Q_X\times Q_Y^\star}{Q^\bullet}
=\Es{Q_X\times P_Y}{m\,\Ind{Z_c\ge R}}$, giving ``$\ge$''.
\end{IEEEproof}

\section{Rate--Distortion Proofs}\label{app:rd}

\begin{IEEEproof}[Proof of Theorem~\ref{thm:rd-rc}]
\emph{Step 1 (reduce min-distortion to a min of uniforms).} Fix $x\in\cX$ and
set $U_i\triangleq\PEC{Y_i}{x}$ for $Y_i\sim Q_Y$ i.i.d.; by uniformity
(Lemma~\ref{lem:pec-uniform}) the $U_i$ are i.i.d.\ $\uU$, so
$W_M\triangleq\min_i U_i$ has density $f_M(w)=M(1-w)^{M-1}$. The inverse map
$\tilde d(x,\cdot)$ of Lemma~\ref{lem:pec-uniform} is non-decreasing, hence
commutes with the minimum:
\begin{align*}
  \Es{Y_1,\dots,Y_M}{\min_{i}d(x,Y_i)}
  &=\Es{}{\min_i\tilde d(x,U_i)}
  =\Es{}{\tilde d(x,W_M)}\\
  &=\int_0^1\tilde d(x,w)\,f_M(w)\,dw.
\end{align*}

\emph{Step 2 (integration by parts).} Let
$A_x(w)\triangleq\Es{Q_Y}{d(x,Y)\Ind{\PEC{Y}{x}\le w}}$. Since
$\PEC{Y}{x}\sim\uU$ and $\tilde d(x,\PEC{Y}{x})=d(x,Y)$,
$A_x(w)=\int_0^w\tilde d(x,t)\,dt$, so $A_x'(w)=\tilde d(x,w)$ a.e.\ and
$A_x(0)=0$. Introduce $G_M(w)\triangleq-(1-w)^{M-1}\bigl((M-1)w+1\bigr)$, which
satisfies $G_M'(w)=M(M-1)w(1-w)^{M-2}=-w\,f_M'(w)$. Then, using $f_M(1)=0$ and
$A_x(0)=0$,
\[
  \int_0^1\tilde d(x,w)f_M(w)\,dw
  =\BRAb{A_x(w)f_M(w)}_0^1-\int_0^1 A_x(w)f_M'(w)\,dw
  =\int_0^1 w^{-1}A_x(w)\,G_M'(w)\,dw.
\]
Averaging over $X\sim P_X$ and recognizing
$w^{-1}\Es{X}{A_X(w)}=w^{-1}\Es{X,Y}{d(X,Y)\Ind{\PEC{Y}{X}\le w}}=\Dspec{z}{Q_Y}$
at $w=e^{-z}$,
\[
  D_{\mathrm{rand}}(Q_Y,R)=\int_0^1\Es{X}{w^{-1}A_X(w)}\,G_M'(w)\,dw.
\]

\emph{Step 3 (change of variables $w=e^{-z}$).} With $dw=-e^{-z}dz$ and
$G_M'(e^{-z})e^{-z}=K_M(z)$,
\[
  D_{\mathrm{rand}}(Q_Y,R)=\int_0^\infty\Dspec{z}{Q_Y}\,K_M(z)\,dz,
\]
and $\int_0^\infty K_M(z)\,dz=\int_0^1 G_M'(w)\,dw=G_M(1)-G_M(0)=1$, so $K_M$ is
a probability density.
\end{IEEEproof}

The localization stated after Theorem~\ref{thm:rd-rc} follows from the next
lemma, applied to the primitive $F(w)\triangleq\Es{X}{A_X(w)}$ at
$w_0=e^{-(R-\gamma)}$: the primitive is convex with $F(0)=0$, since $A_x$ is an
integral of the non-decreasing $\tilde d(x,\cdot)$; the inequality
$1-w\le e^{-w}$ gives
$a=(1-w_0)^{M-1}\le e^{-e^{\gamma}(1-e^{-R})}$; and $\Dspec{0}{Q_Y}\le d_{\max}$.

\begin{lemma}[Convex integral bound]\label{app:lem-convex-int}
Let $F:[0,1]\to\mR$ be convex with $F(0)=0$, fix $M>1$, $w_0\in(0,1)$, and
$a\triangleq(1-w_0)^{M-1}$. Then
\[
  \int_0^1 F(w)\,M(M-1)(1-w)^{M-2}\,dw\le\frac{F(w_0)}{w_0}(1-a)+F(1)\,a.
\]
\end{lemma}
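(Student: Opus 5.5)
The plan is to exploit linearity. Both sides of the claimed inequality are linear functionals of $F$, and the convex functions on $[0,1]$ vanishing at $0$ form a cone. So it suffices to check the inequality on the extreme rays of that cone and then integrate. Write $k(w)\triangleq M(M-1)(1-w)^{M-2}$, $\Lambda(F)\triangleq\int_0^1 F\,k\,dw$, and $\mathrm{R}(F)\triangleq\frac{F(w_0)}{w_0}(1-a)+F(1)\,a$. For $F$ convex with $F(0)=0$ I would use the Choquet-type representation
\[
F(w)=c\,w+\int_{(0,1)}(w-t)_+\,\mu(dt),\qquad c=F'(0^+),\ \mu=F''\ \text{(a nonnegative measure)}.
\]
This is valid when $F'(0^+)>-\infty$. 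The general case is handled by applying the result to $F_\varepsilon(w)\triangleq F(\varepsilon+(1-\varepsilon)w)-F(\varepsilon)$, which is convex, vanishes at $0$, and has $F_\varepsilon'(0^+)>-\infty$, and then letting $\varepsilon\downarrow0$ by continuity and dominated convergence. By Fubini (everything is bounded, and $k$ is integrable), $\Lambda(F)=c\,\Lambda(w)+\int\Lambda((\cdot-t)_+)\,\mu(dt)$, and the same decomposition holds for $\mathrm{R}$. Hence it suffices to prove two things: $\Lambda(w)=\mathrm{R}(w)$, since $c$ may have either sign, and $\Lambda((\cdot-t)_+)\le\mathrm{R}((\cdot-t)_+)$ for every $t\in(0,1)$.

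\emph{Linear generator.} A beta-integral gives $\Lambda(w)=M(M-1)\int_0^1 w(1-w)^{M-2}dw=1$, while $\mathrm{R}(w)=(1-a)+a=1$. So equality holds, and the sign of $c$ is irrelevant.

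\emph{Hinge generators.} Substituting $s=1-w$ gives
\[
\Lambda((\cdot-t)_+)=M(M-1)\int_0^{1-t}(1-t-s)s^{M-2}ds=(1-t)^M .
\]
For $t\ge w_0$ the first term of $\mathrm{R}$ vanishes, so $\mathrm{R}=(1-t)a=(1-t)(1-w_0)^{M-1}$. This is at least $(1-t)(1-t)^{M-1}=(1-t)^M$ because $1-t\le1-w_0$. For $t\le w_0$ we have
\[
\mathrm{R}=\frac{w_0-t}{w_0}(1-a)+(1-t)a,
\]
which is affine in $t$. It equals $1=(1-0)^M$ at $t=0$ and $(1-w_0)^M$ at $t=w_0$. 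Since $t\mapsto(1-t)^M$ is convex, it lies below this chord on $[0,w_0]$. Both cases give the inequality, with equality at $t=0$, $t=w_0$, and $t=1$, which indicates the bound is sharp.

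The main obstacle is the boundary bookkeeping rather than the algebra: justifying the representation and the Fubini interchange near $w=0$, where $F'(0^+)$ may be $-\infty$, and near $w=1$, where $k$ is bounded only for $M\ge2$. If $1<M<2$, then $k$ has an integrable singularity $(1-w)^{M-2}$ at $1$, which is still fine because $F$ is bounded on $[0,1]$. I would handle the left endpoint by the $\varepsilon$-truncation above, using continuity of a convex function on the closed interval. For the limit I would note that $F$ is bounded on $[0,1]$, so $F_\varepsilon\to F$ pointwise on $(0,1]$ with a uniform bound, which justifies passing $\Lambda$ and $\mathrm{R}$ to the limit. A naive alternative, bounding $F(w)/w$ by monotonicity on $[0,w_0]$ and $[w_0,1]$, loses: it puts mass $a(1+(M-1)w_0)>a$ on $F(1)$. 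That is why I would use the extreme-ray route.
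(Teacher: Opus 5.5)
Your generator calculations are correct: $\Lambda(w)=\mathrm{R}(w)=1$, $\Lambda((\cdot-t)_+)=(1-t)^M$, and the two-case comparison with $\mathrm{R}((\cdot-t)_+)$ holds. So your argument proves the lemma for every convex $F$ that is continuous on $[0,1]$. That covers the paper's application, where $F$ is an integral of $\tilde d$.

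The gap is in the boundary reduction, which relies on the claim of ``continuity of a convex function on the closed interval.'' That claim is false. A convex function on a closed interval can jump upward at an endpoint. For example, $F(0)=0$ and $F(w)=-1$ on $(0,1]$ is convex, and for this $F$ your truncation gives $F_\varepsilon\equiv0$. In general $F_\varepsilon\to F-F(0^+)$ on $(0,1]$, not to $F$, and your representation $cw+\int(w-t)_+\,\mu(dt)$ has no generator for this jump.

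The resulting offset does not cancel, because the two functionals weigh constants differently: $\Lambda(\mathbf{1}_{(0,1]})=M$, while $\mathrm{R}(\mathbf{1}_{(0,1]})=\frac{1-a}{w_0}+a$. You therefore need one more extreme ray, $-\mathbf{1}_{(0,1]}$. For it the inequality reads $M\ge\frac{1-a}{w_0}+a$, which is equivalent to $(1-w_0)^M\ge 1-Mw_0$, Bernoulli's inequality for $M>1$.

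An upward jump $j_1$ at $w=1$ also breaks your representation at $w=1$. It is harmless, since it leaves $\Lambda$ unchanged and adds $j_1a\ge0$ to $\mathrm{R}$, but you need to say so. Separately, $\mu$ can have infinite mass near $1$. The interchange is therefore justified by Tonelli on a nonnegative integrand, not by ``everything is bounded.''

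With these repairs your extreme-ray route is a valid alternative. The paper's proof is shorter and needs no regularity. It splits at $w_0$ and uses two chord inequalities that hold pointwise for every convex $F$ with $F(0)=0$, endpoint jumps included:
\begin{itemize}
\item $F(w)\le\frac{w}{w_0}F(w_0)$ on $[0,w_0]$;
\item $F(w)\le\frac{1-w}{1-w_0}F(w_0)+\frac{w-w_0}{1-w_0}F(1)$ on $[w_0,1]$.
\end{itemize}
Integrating against the kernel gives $\frac{F(w_0)}{w_0}\bigl[(1-a)-a(M-1)w_0\bigr]+(M-1)a\,F(w_0)+a\,F(1)$, and the $(M-1)aF(w_0)$ terms cancel.

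Your reason for avoiding a direct split applies only to the version that bounds $F(w)/w\le F(1)$ on $[w_0,1]$, which uses the chord through $0$. The chord through $w_0$ and $1$ loses nothing. What your route adds is an explanation of sharpness. Among continuous $F$, equality holds exactly for functions affine on $[0,w_0]$ and on $[w_0,1]$, that is, linear functions plus a hinge at $w_0$. This is precisely the equality case of the paper's two chord bounds.
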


\begin{IEEEproof}
Split at $w_0$. On $[0,w_0]$, convexity and $F(0)=0$ give
$F(w)/w\le F(w_0)/w_0$, so the left piece is at most
$\frac{F(w_0)}{w_0}\int_0^{w_0}M(M-1)w(1-w)^{M-2}dw
=\frac{F(w_0)}{w_0}[(1-a)-a(M-1)w_0]$. On $[w_0,1]$, the chord bound
$F(w)\le\frac{1-w}{1-w_0}F(w_0)+\frac{w-w_0}{1-w_0}F(1)$ and the elementary
integrals $M(M-1)\int_{w_0}^1\frac{1-w}{1-w_0}(1-w)^{M-2}dw=(M-1)a$ and
$M(M-1)\int_{w_0}^1\frac{w-w_0}{1-w_0}(1-w)^{M-2}dw=a$ give the right piece
$\le F(w_0)(M-1)a+F(1)a$. Adding the two pieces, the $a(M-1)w_0$ terms cancel,
leaving $\frac{F(w_0)}{w_0}(1-a)+F(1)a$.
\end{IEEEproof}

\begin{IEEEproof}[Proof of Theorem~\ref{thm:rd-converse}]
Fix $x$ and let $\cC=\{y_1,\dots,y_M\}$ with $Q_Y^{\cC}=\Unif{\cC}$. Ordering
distortions ascending and applying the rank identity
(Lemma~\ref{app:lem-rank}) with $d_j=d(x,y_j)$ at $L=1$, the dither-expected
weight $\Es{U}{\Ind{\PEC{j}{x}\le1/M}}$ equals $1/T_j$ on the minimal-distortion
block ($T_j$-fold tied minima) and $0$ otherwise; these weights are
nonnegative, supported on the minimizers, and sum to one. Hence
\[
\min_i d(x,y_i)=\sum_j d(x,y_j)\,\Es{U}{\Ind{\PEC{j}{x}\le1/M}}
=M\,\Es{Q_Y^{\cC}}{d(x,Y)\Ind{\PEC{Y}{x}\le1/M}}.
\]
Averaging over $X\sim P_X$
gives $D(\cC)=\Dspec{\log M}{Q_Y^{\cC}}$. Taking $M=e^{R}$ and infimizing over
$Q_Y$ yields the converse.
\end{IEEEproof}

\begin{IEEEproof}[Proof of Theorem~\ref{thm:rd-variational}]
Both
identities are instances of the recap propositions under the role transposition
that sends the observation to the source $X\sim P_X$, the codeword to the
candidate $Y\sim Q_Y$, and the PEP to $\PEC{Y}{X}$. Identity \eqref{eq:rd-beta}
is the complementary-tail $\beta$-form \eqref{eq:meta-converse-c} (proved in
Appendix~\ref{app:rank-recall}) with the auxiliary supremum on the $\cX$-side;
identity \eqref{eq:rd-dinf} is the $\inf$ reverse-channel form \eqref{eq:rc-inf},
proved in~\cite{elkayampep1}, with the reverse channel becoming the forward
test channel $\W:\cX\to\cY$. The conditional and joint $D_\infty$ coincide
because the shared $P_X$ marginal cancels in the density ratio. Convexity of
$\Dspec{z}{Q_Y}$ in $Q_Y$ follows from the role-transposed joint convexity
(Theorem~\ref{thm:joint-convex}).
\end{IEEEproof}

\begin{IEEEproof}[Proof of Lemma~\ref{lem:excess}]
Conditioning on $X=x$ with
$q=q(x)$, the indicator distortion has $\PEC{Y}{x}=U\,q$ on the branch
$d_e=0$ and $\PEC{Y}{x}=q+U(1-q)$ on the branch $d_e=1$ (probability $1-q$);
only the latter contributes to $d_e\,\Ind{\PEC{}\le w}$, and there
\[
\PR{\PEC{Y}{x}\le w\mid X=x,\,d_e=1}\cdot(1-q)=(w-q)_+.
\]
Averaging over $X$ and
multiplying by $e^{z}$ ($w=e^{-z}$) gives the first form; $(w-q)_+=\int_0^w
\Ind{q\le t}\,dt$ gives the integral form.
\end{IEEEproof}

For the Matsuta--Uyematsu identity used in \cref{subsec:excess},
\[
D_\infty(P_X\W\|P_X\times Q_Y)
=\log\esssup_{x,y}\frac{\W(y|x)}{Q_Y(y)}
=\log\max_y\frac{\sup_x\W(y|x)}{Q_Y(y)},
\]
where the essential supremum is under $P_X\W$ and, accordingly, $\sup_x$
ranges over $\mathrm{supp}(P_X)$; the right side is minimized over $Q_Y$ by
$Q_Y(y)\propto\sup_x\W(y|x)$, whose value is $\log\sum_y\sup_x\W(y|x)$.

\section{Joint Source--Channel Coding Proofs}\label{app:jscc}

We first record the general list-size ($L$-ary) integral/exponent identity used
by the spectral form of \cref{sec:jscc}; at $L=1$ it is the slope identity
\eqref{eq:recap-slope} of~\cite{elkayampep1}.

\begin{lemma}[Integral representation and exponent identity, general $L$]\label{app:lem-integral-L}
Let $Z\ge0$ have continuous CDF $F$, fix $L\ge1$, and set
$P(R)\triangleq\E{\min\{1,e^{L(R-Z)}\}}$, $E(R)\triangleq-\log P(R)$. Then
\begin{equation}\label{eq:integral-L}
P(R)=L\,e^{LR}\!\int_R^\infty F(z)\,e^{-Lz}\,dz,
\end{equation}
and at every $R$ with $F(R)>0$, $P\in C^1$ and
\begin{equation}\label{eq:expid-L}
P(R)=\frac{F(R)}{1+\tfrac1L\dot E(R)},\qquad -L<\dot E(R)\le0.
\end{equation}
\end{lemma}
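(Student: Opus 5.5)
The integral representation \eqref{eq:integral-L} comes from a layer-cake argument on $Z$. Write $\phi_R(z)\triangleq\min\{1,e^{L(R-z)}\}$. This function equals $1$ on $[0,R]$, equals $e^{L(R-z)}$ on $[R,\infty)$, and is non-increasing with $\phi_R(\infty)=0$. Hence
\[
\phi_R(Z)=\int_Z^\infty\bigl(-\phi_R'(z)\bigr)\,dz=\int_R^\infty L\,e^{L(R-z)}\Ind{Z\le z}\,dz .
\]
Taking expectations and applying Tonelli, which is legitimate since everything is nonnegative, gives $P(R)=L\,e^{LR}\int_R^\infty F(z)e^{-Lz}\,dz$ directly. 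Continuity of $F$ is not needed here. The case $R<0$ is handled by the same formula, since $F=0$ on $(-\infty,0)$.

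For regularity, write $I(R)\triangleq\int_R^\infty F(z)e^{-Lz}\,dz$. Because $F$ is continuous and bounded, $I$ is $C^1$ with $I'(R)=-F(R)e^{-LR}$. It follows that $P(R)=Le^{LR}I(R)$ is $C^1$, with
\[
\dot P(R)=L\,P(R)-L\,F(R).
\]
If $F(R)>0$, then $F>0$ on $[R,\infty)$ by monotonicity, so $P(R)>0$ and $E=-\log P$ is $C^1$ near $R$ with $\dot E=-\dot P/P=L\bigl(F(R)/P(R)-1\bigr)$. Rearranging gives $1+\dot E/L=F(R)/P(R)$, which is \eqref{eq:expid-L}.

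For the bounds on $\dot E$, the key fact is $F(R)\le P(R)$. Since $F$ is non-decreasing, $I(R)\ge F(R)\int_R^\infty e^{-Lz}\,dz=F(R)e^{-LR}/L$, so $P(R)\ge F(R)$. This gives $\dot E\le 0$, which is also immediate from $P$ being non-decreasing in $R$. The strict lower bound $\dot E>-L$ is equivalent to $F(R)/P(R)>0$, which holds because $F(R)>0$ and $P(R)\le 1<\infty$. I expect the main care point to be this strictness and the $C^1$ claim at $R$ itself. Both rest on continuity of $F$: it is used for differentiability of $I$, whereas the layer-cake step only needs measurability. At $L=1$ the result reduces to \eqref{eq:recap-slope}, as stated.
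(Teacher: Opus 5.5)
Your proof is correct, and from $\dot P=LP-LF$ onward it matches the paper's argument: you differentiate the integral form, rearrange to $\dot E=L(F/P-1)$, and obtain $P(R)\ge F(R)$ from $F(z)\ge F(R)$ on $[R,\infty)$.

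The only difference is how you derive \eqref{eq:integral-L}. The paper writes $P(R)=F(R)+e^{LR}\int_R^\infty e^{-Lz}\,dF(z)$ and integrates by parts using $e^{-Lz}F(z)\to0$. Your layer-cake/Tonelli step avoids Stieltjes boundary terms and shows the identity holds without continuity of $F$; it is the same device the paper itself uses in proving \cref{thm:era-cd-ach}.

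Continuity of $F$ enters only through $I'(R)=-F(R)e^{-LR}$. This gives $P\in C^1$ on all of $\mR$, which is slightly more than stated. One small correction to your closing remark: the strict bound $\dot E>-L$ rests on $F(R)>0$ alone, not on continuity.
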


\begin{IEEEproof}
Writing $P(R)=\int_0^R dF+\int_R^\infty e^{L(R-z)}dF(z)
=F(R)+e^{LR}\int_R^\infty e^{-Lz}dF(z)$ and integrating the last integral by
parts (using $e^{-Lz}F(z)\to0$) gives \eqref{eq:integral-L}; continuity of $F$
makes $P\in C^1$ wherever $P>0$. Differentiating \eqref{eq:integral-L},
$P'(R)=L\,P(R)-L\,F(R)$, so $\dot E(R)=-P'(R)/P(R)=L\bigl(F(R)/P(R)-1\bigr)$,
which rearranges to \eqref{eq:expid-L}. The integrand of \eqref{eq:integral-L}
satisfies $F(z)\ge F(R)$ for $z\ge R$, giving $P(R)\ge F(R)$ and hence
$\dot E(R)\le0$; and $\dot E(R)>-L$ exactly when $F(R)>0$.
\end{IEEEproof}

\begin{IEEEproof}[Proof of Theorem~\ref{thm:jscc-rc}]
Condition on the transmitted pair $(v,x)$, the output $y$, and the transmitted
dither $u_v$. For a competitor symbol $v'\in\cV$ let $\cE_{v'}$ be the event that
$v'$ outranks $(v,x)$; by the lexicographic equivalence of
Proposition~\ref{thm:pep-uniform}(i),
\[
  \cE_{v'}=\BRAs{\PEPU{v',X(v')}{y}{U_{v'}}<\PEPU{v,x}{y}{u_v}},
\]
where $X(v')\sim Q_{X|V}(\cdot\mid v')$ and $U_{v'}\sim\uU$. These events are
conditionally independent across distinct $v'$, each depending only on the
independent draw $(X(v'),U_{v'})$. An $L$-list error is the event that at least
$L$ competitors outrank the transmitted symbol, i.e.\ the union over $L$-subsets
$S$ of $\bigcap_{v'\in S}\cE_{v'}$. The union bound and conditional independence
give
\[
  \PR{\cE^L_{v,x,y,u_v}}
  \le\sum_{\substack{S\subset\cV\setminus\{v\}\\|S|=L}}\prod_{v'\in S}\PR{\cE_{v'}}
  \le\sum_{\substack{S\subset\cV\\|S|=L}}\prod_{v'\in S}\PR{\cE_{v'}}.
\]
By Proposition~\ref{app:prop-bp} (Schur-concavity of the elementary symmetric
polynomial) the right side is at most
$\binom{|\cV|}{L}\bar p^{\,L}$, where
$\bar p=\tfrac1{|\cV|}\sum_{v'\in\cV}\PR{\cE_{v'}}$ is the average competitor
probability. Enlarging the competitor index set to all of $\cV$ leaves this
average equal to the PEP: with $\bar V\sim\Unif{\cV}$,
\[
  \bar p=\PR{\PEPU{\bar V,X(\bar V)}{y}{U}<\PEPU{v,x}{y}{u_v}}=\PEPU{v,x}{y}{u_v},
\]
the last step being the uniformity of the competitor PEP
(Proposition~\ref{thm:pep-uniform}(iii)) over the uniform auxiliary $\bar V$. Hence
$\PR{\cE^L_{v,x,y,u_v}}\le\binom{|\cV|}{L}\PEPU{v,x}{y}{u_v}^L$. Randomizing the
transmitted dither $u_v\to U$, clipping the probability at $1$, and taking the
expectation over $(V,X,Y)\sim P_V\cdot Q_{X|V}\cdot\W$ gives~\eqref{eq:jscc-rc-bin};
the bound $\binom{n}{k}\le(en/k)^k$ then gives~\eqref{eq:jscc-rc-ub}.
\end{IEEEproof}

\begin{proposition}\label{app:prop-bp}
If $p_1,\dots,p_n\in[0,1]$ with mean $\bar p$, then
$\sum_{|S|=k}\prod_{i\in S}p_i\le\binom{n}{k}\bar p^{\,k}$.
\end{proposition}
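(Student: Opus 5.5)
The claim is Maclaurin's inequality in the form $e_k(p)\le\binom{n}{k}\bar p^{\,k}$, where $e_k(p)\triangleq\sum_{|S|=k}\prod_{i\in S}p_i$ is the elementary symmetric polynomial. I would prove it through Schur-concavity of $e_k$ on the nonnegative orthant, smoothing $p$ toward its mean in finitely many steps.

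The first step is a two-coordinate decomposition. Fix indices $i\neq j$ and all other coordinates. Splitting the $k$-subsets according to how many of $i,j$ they contain gives
\[
e_k(p)=A+B\,(p_i+p_j)+C\,p_ip_j,
\]
where $A=e_k(p_{-ij})$, $B=e_{k-1}(p_{-ij})$ and $C=e_{k-2}(p_{-ij})$ are computed on the remaining $n-2$ coordinates, with $e_0=1$ and $e_m=0$ for $m<0$. All three coefficients are nonnegative because the $p$'s are.

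The second step replaces $(p_i,p_j)$ by a pair with the same sum that is closer together. Under this replacement $p_i+p_j$ is unchanged and $p_ip_j$ does not decrease, since $p_ip_j=\tfrac14[(p_i+p_j)^2-(p_i-p_j)^2]$. With $C\ge0$, the value of $e_k$ therefore does not decrease. Replacing $p_i,p_j$ by their average keeps the overall mean $\bar p$ and keeps all entries in $[0,1]$.

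The third step reaches the constant vector $(\bar p,\dots,\bar p)$ in finitely many moves. I would use the standard variant of the smoothing: if not all entries equal $\bar p$, pick one entry $p_i<\bar p$ and one entry $p_j>\bar p$. Set the entry that is closer to $\bar p$ equal to $\bar p$, and move the other so that the pair sum is preserved. The new pair lies inside the old interval $[p_i,p_j]$, so the product does not decrease. Each move fixes one more coordinate at $\bar p$ permanently, so at most $n-1$ moves suffice. At the end $e_k=\binom{n}{k}\bar p^{\,k}$, and monotonicity along the chain gives the inequality.

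The only delicate point is this finiteness. Repeated pairwise averaging converges only in the limit; that would also suffice by continuity of $e_k$, but the "fix one coordinate per move" version avoids any limiting argument. The trivial cases $k=0$ and $k>n$ need no work. Alternatively, one could cite Maclaurin's inequality $\bigl(e_k/\binom nk\bigr)^{1/k}\le e_1/n$ directly, but the smoothing argument keeps the proof self-contained.
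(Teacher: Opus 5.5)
Your proof is correct and follows the paper's argument: it uses the same two-coordinate decomposition $e_k(p)=e_k(p_{-ij})+(p_i+p_j)\,e_{k-1}(p_{-ij})+p_ip_j\,e_{k-2}(p_{-ij})$ and the same observation that moving a pair closer together at fixed sum does not decrease $p_ip_j$. Your finite ``fix one coordinate at $\bar p$ per move'' chain is a more careful rendering of the paper's one-line step that the fixed-sum maximizer is the constant vector, which the paper asserts without spelling out how the constant vector is reached.
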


\begin{IEEEproof}
The elementary symmetric polynomial $e_k(\vec p)$ is Schur-concave: replacing any
$p_1>p_2$ by their average $\tfrac{p_1+p_2}2$ increases $e_k$, since
$e_k(\vec p)=e_k(\vec p')+(p_1+p_2)e_{k-1}(\vec p')+p_1p_2\,e_{k-2}(\vec p')$
(with $\vec p'$ the remaining coordinates) and
$(\tfrac{p_1+p_2}2)^2\ge p_1p_2$. The fixed-sum maximizer is therefore the
constant vector, where $e_k=\binom{n}{k}\bar p^{\,k}$.
\end{IEEEproof}

\begin{IEEEproof}[Proof of Theorem~\ref{thm:jscc-converse}]
Fix $y$ and work over the candidate alphabet $\cJ=\cV$ with scores
$d(v)=m(v,x_v,y)$ and i.i.d.\ dithers. The per-index form
\eqref{eq:rank-perindex} of Lemma~\ref{app:lem-rank} with $M=|\cV|$ states that
for each \emph{fixed} $v$, the PEP event and the rank event have the same
probability over the dithers alone. This is the point that licenses the
averaging step: conditioned on $Y=y$, the transmitted symbol $V$ follows its
(non-uniform) posterior, which is independent of the dithers, so the per-index
identity may be averaged under that posterior --- only the \emph{competitor}
draw inside $\PEPs{\cdot}$ need be uniform, and it is, by construction
($Q_V=\Unif{\cV}$). A list error is $\mathrm{rank}(V)>L$, so
$\PRs{}{\text{error}\mid Y=y}=\PR{\PEPs{V}>L/|\cV|}$. Substituting the scores
into the definition of $\PEPs{\cdot}$ recovers the dithered JSCC PEP verbatim,
$\PEPs{v}=\PEPU{v,x_v}{y}{U_v}$, with competitor prior
$Q_V\cdot Q^{(\cC)}_{X|V}$ (the point-mass prior in the competitor role); since
$X=x_V$, $\PEPs{V}=\PEP{V,X}{y}$. Averaging over $Y$ and promoting $>$ to $\ge$
by atomlessness (Proposition~\ref{lem:atomless}; the hypothesis $P_V\ll Q_V$ holds
trivially, $Q_V$ having full support) yields
$P^L_e(\cC)=\PR{\PEP{V,X}{Y}\ge L/|\cV|}$. The fixed code realizes the prior
$Q^{(\cC)}_{X|V}$ in both roles, and the threshold $L/|\cV|$ corresponds to rate
$\log|\cV|-\log L$, giving the infimum bound.
\end{IEEEproof}

\section{Erasure and Correct-Decoding Proofs}\label{app:era}

Throughout, $(X,Y)\sim Q_X\cdot\W$ with $U\sim\uU$ independent. The proof of
\cref{thm:era-ach} uses the default size convention $M=\lceil e^{R}\rceil$;
the affine-identity derivation uses the converse-natural $M=e^{R}$; the
correct-decoding proofs use $M-1=e^{R}$ (the achievability-natural
convention, \cref{rem:size-conv}).

\begin{IEEEproof}[Proof of Theorem~\ref{thm:era-ach}]
Average over the i.i.d.\ $Q_X$ ensemble; by symmetry condition on $I=1$. Let
$Z\triangleq-\log\PEPU{X_1}{Y}{U_1}$ be the transmitted metric and
$C_i\triangleq-\log\PEPU{X_i}{Y}{U_i}$, $i\ge 2$, the competitors. By
\cref{thm:pep-uniform}, given $(X_1,Y)$ the $\{C_i\}_{i\ge2}$ are i.i.d.\ with
$\PR{C_i\ge c}=e^{-c}$, and by \cref{lem:atomless} their law is atomless, so the
strict/non-strict form of clause~(2) is immaterial.

\emph{Erasure.} An erasure occurs if the transmitted codeword fails the absolute
threshold ($Z<R+\gamma$) or some competitor lies within the margin
($C_i>Z-\gamma$). Hence
\begin{align*}
P_{\mathrm{era}}
&\le\PR{Z\le R+\gamma}
+\!\int_{R+\gamma}^\infty\!\PR{\exists i\ge2:C_i\ge z-\gamma}\,d\Fspec{Q_X}{z}\\
&\overset{(a)}{\le}\Fspec{Q_X}{R+\gamma}
+\!\int_{R+\gamma}^\infty\!(M-1)\,e^{-(z-\gamma)}\,d\Fspec{Q_X}{z}\\
&\overset{(b)}{\le}\Fspec{Q_X}{R+\gamma}
+e^{R+\gamma}\!\int_{R+\gamma}^\infty\!e^{-z}\,d\Fspec{Q_X}{z}
=\tilde P_e(R+\gamma;Q_X),
\end{align*}
where $(a)$ is the union bound over the $M-1$ competitors
($\PR{C_i\ge z-\gamma}=e^{-(z-\gamma)}$) and $(b)$ uses $M-1\le e^{R}$; the last
equality is the integration-by-parts identity of \eqref{eq:era-rcu} applied at
threshold $R+\gamma$.

\emph{Undetected error.} A wrong commitment requires some competitor to clear
$R+\gamma$ \emph{and} beat the transmitted metric by $\gamma$. Conditioning on
$Z$,
\begin{align*}
P_{\mathrm{ue}}
&\le e^{-\gamma}\,\PR{Z\le R}
+\!\int_R^\infty\!\PR{\exists i\ge2:C_i\ge z+\gamma}\,d\Fspec{Q_X}{z}\\
&\overset{(c)}{\le}e^{-\gamma}\,\Fspec{Q_X}{R}
+\!\int_R^\infty\!(M-1)\,e^{-(z+\gamma)}\,d\Fspec{Q_X}{z}\\
&\le e^{-\gamma}\!\left[\Fspec{Q_X}{R}
+e^{R}\!\int_R^\infty\!e^{-z}\,d\Fspec{Q_X}{z}\right]
=e^{-\gamma}\,\tilde P_e(R;Q_X).
\end{align*}
In $(c)$, when $Z\le R$ a commitment needs a competitor above $R+\gamma$, of
probability $\le e^{R}e^{-(R+\gamma)}=e^{-\gamma}$; when $Z=z>R$ it needs a
competitor above $z+\gamma$, of probability $\le e^{R-(z+\gamma)}$ (the split
point is $R$, not $R+\gamma$, because an undetected error needs a competitor
clearing the \emph{larger} of the confidence bar $R+\gamma$ and the margin bar
$z+\gamma$). Integration by parts as in \eqref{eq:era-rcu} closes the bound.
\end{IEEEproof}

\begin{IEEEproof}[Derivation of the empirical-prior affine identity (\cref{subsec:era-empirical})]
Under the empirical prior the PEP lives on the $1/M$ grid plus dither. If the
transmitted codeword is strictly outranked by some competitor, then
$\PEPU{X}{Y}{U}\ge G_{X,Y}\ge1/M$, so with $R=\log M$ the threshold events
$\{\PEPU{X}{Y}{U}\ge e^{-R}\}$ and $\{\PEPU{X}{Y}{U}\ge e^{-(R+\gamma)}\}$ are
both certain and the affine relation holds trivially as
$1=e^{-\gamma}\cdot1+(1-e^{-\gamma})$. Otherwise the transmitted codeword lies
in the top tie block, $G_{X,Y}=0$, and its PEP is the pure dither $UH_{X,Y}$
with $H_{X,Y}\ge1/M$; the two tail probabilities are then
$1-e^{-(R+\gamma)}/H_{X,Y}$ and $1-e^{-R}/H_{X,Y}$, which satisfy the same
pointwise relation. Averaging over $(X,Y)$ and invoking
\eqref{eq:recap-converse} gives
$\Fspec{Q_X^{\cC}}{R+\gamma}=1-e^{-\gamma}\bigl(1-P_e(\cC)\bigr)$. For the
commit probability: when the metric scores at each output are distinct, the
top codeword's PEP is the pure dither $U/M$ and clause~(2) of
\cref{def:era-rule} is implied (every competitor's PEP being at least $1/M$),
so the rule commits iff $U\le e^{-\gamma}$, giving
$P_{\mathrm{era}}(\cC)=1-e^{-\gamma}$ and
$P_{\mathrm{ue}}(\cC)=e^{-\gamma}P_e(\cC)$; with ties, the commit probability
depends on the data only through tie-block sizes.
\end{IEEEproof}

\begin{IEEEproof}[Proof of Theorem~\ref{thm:era-cd-ach}, integral form \eqref{eq:era-pc-integral}]
By the exact random-coding identity \eqref{eq:era-pc-exact},
$\bar P_c(R;Q_X)=\E{g(Z)}$ with $Z\triangleq-\log\PEPU{X}{Y}{U}\in[0,\infty]$
and the convention $g(\infty)=1$. Since $g(0)=0$ and $g$ is $C^1$ on
$(0,\infty)$ with $g'\ge0$ and $\int_0^\infty g'(z)\,dz=1$, the layer-cake
representation
\[
g(Z)=\int_0^\infty g'(z)\,\Ind{z<Z}\,dz
\]
holds for every value of $Z$, including $Z=\infty$. Taking expectations and
exchanging expectation and integral (the integrand is nonnegative, so Tonelli's
theorem applies),
\[
\bar P_c(R;Q_X)=\int_0^\infty g'(z)\,\PR{Z>z}\,dz.
\]
Finally, $\PR{Z>z}$ and $F_c(Q_X;z)=\PR{Z\ge z}$ differ only at the (at most
countably many) atoms of $Z$, a Lebesgue-null set, so replacing $\PR{Z>z}$ by
$F_c(Q_X;z)$ leaves the integral unchanged, which is \eqref{eq:era-pc-integral}.
(The argument deliberately avoids integration by parts against $dF_c$, which
requires care when the law of $Z$ has atoms; the layer-cake form does not.)
\end{IEEEproof}

\begin{IEEEproof}[Proof of Theorem~\ref{thm:era-cd-ach}, localization \eqref{eq:era-pc-localized}]
Three facts drive the proof: (i) $g$ is non-decreasing with $g(0)=0$,
$g(\infty)=1$, so $g'$ is a probability density on $[0,\infty)$; (ii) $F_c$ is
non-increasing with $F_c(Q_X;z)=1$ for $z\le0$ (since $Z\ge0$ a.s., which
handles the branch $R<\gamma$ below); (iii) the arithmetic identity
$(M-1)e^{-(R\pm\gamma)}=e^{\mp\gamma}$ under $M-1=e^{R}$.

\emph{Upper bound.} Split the integral \eqref{eq:era-pc-integral} at $R-\gamma$:
on the small-$z$ piece use $F_c\le1$, on the large-$z$ piece use
$F_c(z)\le F_c(R-\gamma)$, giving
\[
\bar P_c(R;Q_X)\le g(R-\gamma)+F_c(Q_X;R-\gamma)\bigl(1-g(R-\gamma)\bigr)
\le g(R-\gamma)+F_c(Q_X;R-\gamma).
\]
For $R\ge\gamma$, the bound $(1-u)^t\le e^{-tu}$ with $t=M-1=e^{R}$ and
$u=e^{-(R-\gamma)}\in[0,1]$ gives
$g(R-\gamma)=\bigl(1-e^{-(R-\gamma)}\bigr)^{M-1}\le
\exp\bigl[-(M-1)e^{-(R-\gamma)}\bigr]=e^{-e^{\gamma}}$; for $R<\gamma$ the
claimed bound is trivial since then $F_c(Q_X;R-\gamma)=1\ge\bar P_c$.

\emph{Lower bound.} Drop the tail past $R+\gamma$ and use
$F_c(z)\ge F_c(R+\gamma)$ on $[0,R+\gamma]$:
$\bar P_c(R;Q_X)\ge F_c(Q_X;R+\gamma)\,g(R+\gamma)$. Bernoulli's inequality,
$(1-u)^t\ge1-tu$ for $u\in[0,1]$ and real $t\ge1$, with $t=M-1=e^{R}$ and
$u=e^{-(R+\gamma)}$, gives
$g(R+\gamma)\ge1-(M-1)e^{-(R+\gamma)}=1-e^{-\gamma}$, whence the lower bound in
\eqref{eq:era-pc-localized}. The tuned form \eqref{eq:era-pc-tuned} is the case
$\gamma=\log R$.
\end{IEEEproof}

\begin{IEEEproof}[Proof of Theorem~\ref{thm:era-cd-meta}]
This is the upper-tail instance of the reverse-channel identity
(\cref{thm:reverse-channel}). Using the rank-equivalent metric
$m(x,y)=\W(y\mid x)/P_Y(y)$, the supremum form \eqref{eq:rc-sup} evaluated at
\[
F_c(Q_X;z)=\PRs{Q_X\cdot\W}{\PEP{X}{Y}\le e^{-z}}
=\Es{Q_X\times P_Y}{m\,\Ind{\PEP{X}{Y}\le e^{-z}}}
\]
gives $e^{-z}\sup_{W^*}\Es{P_Y\cdot W^*}{\W(Y\mid X)/P_Y(Y)}$, and the $P_Y$
factor cancels against the $P_Y\cdot W^*$ measure to yield
\eqref{eq:era-cd-meta}; the metric is finite exactly on $\mathrm{supp}(P_Y)$,
whence the support restriction, and $\Es{Q_X\times P_Y}{m}\le1<\infty$
discharges the integrability hypothesis of \cref{thm:reverse-channel}.
\end{IEEEproof}

\section{Multiple-Access Proofs}\label{app:mac}

\begin{IEEEproof}[Proof of Theorem~\ref{thm:mac-rc}]
\emph{Step 1: conditioning and union bound.} By symmetry assume the transmitted
pair is $(V_1,V_2)=(v_1,v_2)$ with codewords $(X_1(v_1),X_2(v_2))=(x_1,x_2)$.
Condition on the transmitted symbols, codewords, channel output $y$, and dither
$U_{v_1,v_2}$. An error requires a competing pair $(v_1',v_2')$ to beat
$(v_1,v_2)$ in the lexicographic order; the competitors split into the three
disjoint classes of \cref{def:mac-pep} (first coordinate only, second
coordinate only, both). Because one independent dither is attached per candidate
pair, the competing dithers appearing below are mutually independent of
$U_{v_1,v_2}$ and of one another. The conditional error probability is at most
$A+B+C$, the union-bound sums over the three classes.

\emph{Step 2: the row class.} Bound $A$ by the four-step chain:
\begin{itemize}
\item[(a)] extend the sum from $v_1'\ne v_1$ to all $v_1'\in\cV_1$ (each added
term is nonnegative);
\item[(b)] apply the lexicographic equivalence (\cref{thm:pep-uniform}(i)) on
the row candidate space: each beat event becomes
$\{\PEPU{v_1',X_1(v_1')}{v_2,x_2,y}{U_{v_1',v_2}}<
\PEPU{v_1,x_1}{v_2,x_2,y}{U_{v_1,v_2}}\}$;
\item[(c)] average the competitor under the uniform reference
$Q_{V_1}=\Unif{\cV_1}$, converting the sum over $v_1'$ into
$|\cV_1|$ times a probability over $(\bar V_1,\bar X_1)\sim
Q_{V_1}\cdot Q_{X_1\mid V_1}$;
\item[(d)] apply uniformity (\cref{thm:pep-uniform}(iii)): the randomized row
PEP of the competitor is $\uU$-distributed, so it beats the transmitted PEP with
probability equal to the latter's value.
\end{itemize}
The chain gives $A\le|\cV_1|\,\PEPU{v_1,x_1}{v_2,x_2,y}{U_{v_1,v_2}}$;
averaging over the transmitted variables, $\E{A}\le E_1$.

\emph{Step 3: the column class.} Symmetric, with the roles of the terminals
swapped: $\E{B}\le E_2$.

\emph{Step 4: the joint class.} Apply the \emph{same} chain (a)--(d) on the
product candidate space $\cV_1\times\cV_2$: (a) extend the sum from the
$(|\cV_1|-1)(|\cV_2|-1)$ joint-class pairs to all of $\cV_1\times\cV_2$ (the
per-pair dithers $U_{v_1',v_2'}$ are independent of $U_{v_1,v_2}$); (b) convert
each beat event into a comparison of joint PEPs by lexicographic equivalence;
(c) average the competitor pair under the uniform product reference
$Q_{V_1}\times Q_{V_2}$, yielding the factor $|\cV_1||\cV_2|$; (d) apply
uniformity of the randomized joint PEP. The individual beat probabilities are
\emph{not} equal across joint-class competitors --- the metric and the
conditional priors depend on $(v_1',v_2')$ --- but the chain never needs them to
be. Hence $\E{C}\le E_{1,2}$.

\emph{Step 5: conclusion.} The conditional error probability is at most
$\min\{1,A+B+C\}$; taking expectations and using
$\E{\min\{1,A+B+C\}}\le\E{\min\{1,E_1+E_2+E_{1,2}\}}$ after the substitutions of
Steps 2--4 yields \eqref{eq:mac-rc}.
\end{IEEEproof}

\begin{IEEEproof}[Proof of Theorem~\ref{thm:mac-converse}]
\emph{The exact identity \eqref{eq:mac-conv}.} The joint candidate space
$\cV_1\times\cV_2$ with channel-input space $\cX_1\times\cX_2$ and the product
uniform reference $Q_{V_1}\times Q_{V_2}=\Unif{\cV_1\times\cV_2}$ is a
single-transmitter JSCC problem, with the code inducing the point-mass prior
$Q^{(\cC)}(x_1,x_2\mid v_1,v_2)=\delta_{(x_1(v_1),x_2(v_2))}$. \Cref{thm:jscc-converse} at $L=1$ (its per-index mechanism, Lemma~\ref{app:lem-rank},
averaged under the joint source law $P_{V_1,V_2}$, which need not be uniform)
gives
\[
P_e(\cC)=\PR{\PEP{V_1,X_1,V_2,X_2}{Y}\ge\tfrac1{|\cV_1||\cV_2|}}=\PR{E_{1,2}\ge1}.
\]

\emph{The marginal bounds.} Compare the decoder against a \emph{row-restricted}
benchmark: a genie reveals $(V_2,X_2)$ to the decoder, which then runs the
\emph{same} metric $m$ with the \emph{same} dithers, restricted to the row of
candidates $\{(v_1',V_2):v_1'\in\cV_1\}$. The comparison is pathwise: for every
realization of the source pair, codewords, channel output, and dithers, the
event that the transmitted pair is lexicographically maximal over the whole
candidate grid is contained in the event that it is maximal over its own row.
Hence correct decoding by the full decoder implies correct decoding by the
row-restricted one, and
\[
P_e(\cC)\ \ge\ P_e^{\mathrm{row}}(\cC),
\]
for any metric, any tie realization, and any source dependence. (The familiar
maxim that side information cannot increase the error probability concerns
\emph{optimal} decoders; it does not by itself cover the fixed, possibly
mismatched, metric decoder used here, whose row-restricted benchmark is not the
optimal side-informed decoder unless $m$ is matched MAP. The pathwise inclusion
closes this gap.)

With the row restriction in place, the problem reduces to a point-to-point JSCC
problem for terminal~1: the conditional channel is $W(\cdot\mid\cdot,x_2)$, the
ensemble for terminal~1 is $Q_{V_1}\times Q_{X_1\mid V_1}$ with $(v_2,x_2)$
frozen, and the conditional row PEP $\PEP{v_1,x_1}{v_2,x_2,y}$ is exactly the
JSCC PEP of this reduced problem. Applying \cref{thm:jscc-converse} pointwise
for each realization $(V_2,X_2)=(v_2,x_2)$ and integrating over
$(V_2,X_2)\sim P_{V_2}\cdot Q_{X_2\mid V_2}$ yields
$P_e^{\mathrm{row}}(\cC)=\PR{\PEP{V_1,X_1}{V_2,X_2,Y}\ge1/|\cV_1|}
=\PR{E_1\ge1}$. The second marginal bound follows by symmetry upon revealing
$V_1$ instead, and \eqref{eq:mac-conv} combines the exact identity with the two
marginal bounds.
\end{IEEEproof}

\bibliographystyle{IEEEtran}
\bibliography{refs}

\begin{thebibliography}{10}
\providecommand{\url}[1]{#1}
\csname url@samestyle\endcsname
\providecommand{\newblock}{\relax}
\providecommand{\bibinfo}[2]{#2}
\providecommand{\BIBentrySTDinterwordspacing}{\spaceskip=0pt\relax}
\providecommand{\BIBentryALTinterwordstretchfactor}{4}
\providecommand{\BIBentryALTinterwordspacing}{\spaceskip=\fontdimen2\font plus
\BIBentryALTinterwordstretchfactor\fontdimen3\font minus
  \fontdimen4\font\relax}
\providecommand{\BIBforeignlanguage}[2]{{%
\expandafter\ifx\csname l@#1\endcsname\relax
\typeout{** WARNING: IEEEtran.bst: No hyphenation pattern has been}%
\typeout{** loaded for the language `#1'. Using the pattern for}%
\typeout{** the default language instead.}%
\else
\language=\csname l@#1\endcsname
\fi
#2}}
\providecommand{\BIBdecl}{\relax}
\BIBdecl

\bibitem{elkayampep1}
N.~Elkayam and M.~Feder, ``A pairwise-error-probability framework for one-shot
  information theory,'' \emph{arXiv preprint arXiv:2608.06577}, 2026.

\bibitem{matsuta2015non}
T.~Matsuta and T.~Uyematsu, ``Non-asymptotic bounds for fixed-length lossy
  compression,'' in \emph{2015 IEEE International Symposium on Information
  Theory (ISIT)}.\hskip 1em plus 0.5em minus 0.4em\relax IEEE, 2015, pp.
  1811--1815, journal version: {IEICE} Trans. Fundamentals, vol.~E99-A, no.~12,
  pp.~2116--2129, 2016.

\bibitem{kostina2012fixed}
V.~Kostina and S.~Verd{\'u}, ``Fixed-length lossy compression in the finite
  blocklength regime,'' \emph{IEEE Transactions on Information Theory},
  vol.~58, no.~6, pp. 3309--3338, 2012.

\bibitem{vazquez2016bayesian}
G.~Vazquez-Vilar, A.~T. Campo, A.~G. i~F{\`a}bregas, and A.~Martinez,
  ``Bayesian {M}-ary hypothesis testing: The meta-converse and
  {Verd{\'u}}-{Han} bounds are tight,'' \emph{IEEE Transactions on Information
  Theory}, vol.~62, no.~5, pp. 2324--2333, 2016.

\bibitem{elkayam2020one}
N.~Elkayam and M.~Feder, ``One shot approach to lossy source coding under
  average distortion constraints,'' in \emph{2020 IEEE International Symposium
  on Information Theory (ISIT)}.\hskip 1em plus 0.5em minus 0.4em\relax IEEE,
  2020, pp. 2389--2393.

\bibitem{elkayam2019oneshot_rate_distortion_full}
\BIBentryALTinterwordspacing
------, ``One shot approach to lossy source coding under average distortion
  constraints,'' 2020, arXiv:2001.03983. [Online]. Available:
  \url{https://arxiv.org/abs/2001.03983}
\BIBentrySTDinterwordspacing

\bibitem{palzer2016converse}
L.~Palzer and R.~Timo, ``A converse for lossy source coding in the finite
  blocklength regime,'' in \emph{24th International Zurich Seminar on
  Communications (IZS)}.\hskip 1em plus 0.5em minus 0.4em\relax ETH-Z{\"u}rich,
  2016.

\bibitem{ElkayamLD2017}
\BIBentryALTinterwordspacing
N.~Elkayam and M.~Feder, ``A general approach to list decoding,'' 2017, online.
  [Online]. Available: \url{http://www.eng.tau.ac.il/~elkayam/ListDecoding.pdf}
\BIBentrySTDinterwordspacing

\bibitem{polyanskiy2010channel}
Y.~Polyanskiy, H.~V. Poor, and S.~Verd{\'u}, ``Channel coding rate in the
  finite blocklength regime,'' \emph{IEEE Transactions on Information Theory},
  vol.~56, no.~5, pp. 2307--2359, 2010.

\bibitem{han2003information}
T.~S. Han, \emph{Information-Spectrum Methods in Information Theory}.\hskip 1em
  plus 0.5em minus 0.4em\relax Berlin: Springer, 2003.

\bibitem{campo2011random}
A.~T. Campo, G.~Vazquez-Vilar, A.~G. i~F{\`a}bregas, and A.~Martinez,
  ``Random-coding joint source-channel bounds,'' in \emph{2011 IEEE
  International Symposium on Information Theory Proceedings}.\hskip 1em plus
  0.5em minus 0.4em\relax IEEE, 2011, pp. 899--902.

\bibitem{merhav2014list}
N.~Merhav, ``List decoding—random coding exponents and expurgated
  exponents,'' \emph{IEEE Transactions on Information Theory}, vol.~60, no.~11,
  pp. 6749--6759, 2014.

\bibitem{tan2014second}
V.~Y. Tan and P.~Moulin, ``Second-order capacities of erasure and list
  decoding,'' in \emph{2014 IEEE International Symposium on Information
  Theory}.\hskip 1em plus 0.5em minus 0.4em\relax IEEE, 2014, pp. 1887--1891.

\bibitem{csiszar1980joint}
I.~Csisz{\'a}r, ``Joint source-channel error exponent,'' \emph{Problems of
  Control and Information Theory}, vol.~9, pp. 315--328, 1980.

\bibitem{kostina2013jscc}
V.~Kostina and S.~{Verd\'u}, ``Lossy joint source-channel coding in the finite
  blocklength regime,'' \emph{IEEE Transactions on Information Theory},
  vol.~59, no.~5, pp. 2545--2575, 2013.

\bibitem{forney1968exponential}
G.~D. Forney, Jr., ``Exponential error bounds for erasure, list, and decision
  feedback schemes,'' \emph{IEEE Transactions on Information Theory}, vol.~14,
  no.~2, pp. 206--220, 1968.

\bibitem{HaimKE18}
\BIBentryALTinterwordspacing
E.~Haim, Y.~Kochman, and U.~Erez, ``On random-coding union bounds with and
  without erasures,'' \emph{IEEE Transactions on Information Theory}, vol.~64,
  no.~6, pp. 4294--4308, 2018. [Online]. Available:
  \url{https://doi.org/10.1109/TIT.2018.2825357}
\BIBentrySTDinterwordspacing

\bibitem{shannon1967lower}
C.~E. Shannon, R.~G. Gallager, and E.~R. Berlekamp, ``Lower bounds to error
  probability for coding on discrete memoryless channels. {II},''
  \emph{Information and Control}, vol.~10, no.~5, pp. 522--552, 1967.

\bibitem{arimoto1973converse}
S.~Arimoto, ``On the converse to the coding theorem for discrete memoryless
  channels (corresp.),'' \emph{IEEE Transactions on Information Theory},
  vol.~19, no.~3, pp. 357--359, 1973.

\bibitem{polyanskiy2010arimoto}
Y.~Polyanskiy and S.~Verd{\'u}, ``Arimoto channel coding converse and
  {R{\'e}nyi} divergence,'' in \emph{2010 48th Annual Allerton Conference on
  Communication, Control, and Computing (Allerton)}.\hskip 1em plus 0.5em minus
  0.4em\relax IEEE, 2010, pp. 1327--1333.

\bibitem{slepian1973noiseless}
D.~Slepian and J.~K. Wolf, ``Noiseless coding of correlated information
  sources,'' \emph{IEEE Transactions on Information Theory}, vol.~19, no.~4,
  pp. 471--480, 1973.

\bibitem{chen2020lossless}
S.~Chen, M.~Effros, and V.~Kostina, ``Lossless source coding in the
  point-to-point, multiple access, and random access scenarios,'' \emph{IEEE
  Transactions on Information Theory}, vol.~66, no.~11, pp. 6688--6722, 2020.

\bibitem{ahlswede1971multi}
R.~Ahlswede, ``Multi-way communication channels,'' in \emph{Proc. 2nd Int.
  Symp. Inf. Theory}, Tsahkadsor, Armenian S.S.R., 1971, pp. 23--52.

\bibitem{liao1972multiple}
H.~H.~J. Liao, ``Multiple access channels,'' Ph.D. dissertation, University of
  Hawaii, Honolulu, 1972.

\bibitem{mccormick1976computability}
G.~P. McCormick, ``Computability of global solutions to factorable nonconvex
  programs: Part {I} --- convex underestimating problems,'' \emph{Mathematical
  Programming}, vol.~10, no.~1, pp. 147--175, 1976.

\bibitem{sherali1990hierarchy}
H.~D. Sherali and W.~P. Adams, ``A hierarchy of relaxations between the
  continuous and convex hull representations for zero-one programming
  problems,'' \emph{SIAM J. Discrete Math.}, vol.~3, no.~3, pp. 411--430, 1990.

\bibitem{jose2018improved}
S.~T. Jose and A.~A. Kulkarni, ``Improved finite blocklength converses for
  {Slepian--Wolf} coding via linear programming,'' \emph{IEEE Transactions on
  Information Theory}, vol.~65, no.~4, pp. 2423--2441, 2019.

\bibitem{li2021unified}
C.~T. Li and V.~Anantharam, ``A unified framework for one-shot achievability
  via the {Poisson} matching lemma,'' \emph{IEEE Transactions on Information
  Theory}, vol.~67, no.~5, pp. 2624--2651, 2021.

\bibitem{yassaee2013technique}
M.~H. Yassaee, M.~R. Aref, and A.~Gohari, ``A technique for deriving one-shot
  achievability results in network information theory,'' in \emph{2013 IEEE
  International Symposium on Information Theory}.\hskip 1em plus 0.5em minus
  0.4em\relax IEEE, 2013, pp. 1287--1291.

\bibitem{verdu2012non}
S.~Verd{\'u}, ``Non-asymptotic achievability bounds in multiuser information
  theory,'' in \emph{2012 50th Annual Allerton Conference on Communication,
  Control, and Computing (Allerton)}.\hskip 1em plus 0.5em minus 0.4em\relax
  IEEE, 2012, pp. 1--8.

\bibitem{tan2014dispersions}
V.~Y.~F. Tan and O.~Kosut, ``On the dispersions of three network information
  theory problems,'' \emph{IEEE Transactions on Information Theory}, vol.~60,
  no.~2, pp. 881--903, 2014.

\bibitem{molavianjazi2015second}
E.~MolavianJazi and J.~N. Laneman, ``A second-order achievable rate region for
  {Gaussian} multi-access channels via a central limit theorem for functions,''
  \emph{IEEE Transactions on Information Theory}, vol.~61, no.~12, pp.
  6719--6733, 2015.

\end{thebibliography}

\end{document}